\documentclass{article}
\usepackage[svgnames]{xcolor}
\usepackage[margin=1in]{geometry}
\usepackage{fullpage}
\usepackage{orcidlink}
\usepackage[utf8]{inputenc}
\usepackage{amsthm}
\usepackage{thmtools}
\usepackage{amsmath}
\usepackage{amssymb}
\usepackage{algpseudocode}
\usepackage{algorithm}

\usepackage{hyperref}
\hypersetup{
    colorlinks,
    linkcolor={red!50!black},
    citecolor={blue!50!black},
    urlcolor={blue!30!black}
}

\usepackage[nameinlink]{cleveref}
\crefname{figure}{Figure}{Figures}
\crefname{example}{Example}{Examples}
\crefname{theorem}{Theorem}{Theorems}
\crefname{lemma}{Lemma}{Lemmas}
\crefname{proposition}{Proposition}{Propositions}
\crefname{claim}{Claim}{Claims}
\crefname{conjecture}{Conjecture}{Conjectures}
\crefname{section}{Section}{Sections}
\crefname{definition}{Definition}{Definitions}
\crefname{corollary}{Corollary}{Corollary}
\crefname{remark}{Remark}{Remark}

\newtheorem{theorem}{Theorem}[section]

\newtheorem{lemma}[theorem]{Lemma}
\newtheorem{proposition}[theorem]{Proposition}
\newtheorem{remark}[theorem]{Remark}
\newtheorem{claim}[theorem]{Claim}
\newtheorem{corollary}[theorem]{Corollary}
\newtheorem{example}[theorem]{Example}
\theoremstyle{definition}
\newtheorem{definition}[theorem]{Definition}

\usepackage[
    type={CC},
    modifier={by},
    version={4.0},
    imagewidth=6em
]{doclicense}

\usepackage[framemethod=tikz]{mdframed} 
\usepackage{tablefootnote} 
\makeatletter 
\AfterEndEnvironment{mdframed}{%
 \tfn@tablefootnoteprintout%
 \gdef\tfn@fnt{0}%
}
\usepackage[margin=1cm,font=small]{caption}

\bibliographystyle{alphaurl}

\usepackage{xspace}
\usepackage{amsmath}
\usepackage{amsfonts}
\usepackage{ifthen}

\usepackage{tikz}
\usetikzlibrary{positioning,shapes.misc,arrows.meta,decorations,decorations.pathmorphing,through,intersections,decorations.markings}

\makeatletter
\newcommand\notsotiny{\@setfontsize\notsotiny\@vipt\@viipt}
\makeatother

\newcommand{\NP}{\textsf{NP}\xspace}

\newcommand{\Z}[1]{\ensuremath{\mathbb{Z}/#1\mathbb{Z}}}
\newcommand{\vname}[1]{\ensuremath{v_{#1}}}

\newcommand{\dsn}{\textsf{DSN}\xspace}
\newcommand{\kdsn}[1]{\mbox{{#1}-\textsf{DSN}}}
\newcommand{\scss}{\textsf{SCSS}\xspace}
\newcommand{\kscss}[1]{\mbox{{#1}-\textsf{SCSS}}}

\newcommand{\ewm}{\textsf{EWM}\xspace}%
\newcommand{\dsnm}{\textsf{DSNM}\xspace}
\newcommand{\kdsnm}[1]{\mbox{{#1}-\textsf{DSNM}}}

\newcommand{\kscssm}[1]{\mbox{{#1}-\textsf{SCSSM}}}

\usepackage{xcolor}
\definecolor{csegm1}{RGB}{230,159,0}
\definecolor{csegm2}{RGB}{0,114,178}
\definecolor{csegm3}{RGB}{204,121,167}
\definecolor{csegm4}{RGB}{0,158,115}
\definecolor{csegm5}{RGB}{0,0,0}
\definecolor{csegm6}{RGB}{213,94,0}
\definecolor{csegm7}{RGB}{86,180,233}
\newcommand{\myparagraph}[1]{\paragraph{#1}}

\tikzset{cross/.style={cross out, draw, 
         minimum size=2mm, 
         inner sep=0pt, outer sep=0pt}}

\title{Edge-Minimum Walk of Modular Length in Polynomial Time}

\author{Antoine Amarilli\thanks{Univ.\ Lille, Inria, CNRS, Centrale Lille, UMR 9189 CRIStAL, F-59000 Lille, France \& LTCI, Télécom Paris, Institut polytechnique de Paris. Partially
 supported by the ANR project EQUUS ANR-19-CE48-0019, by the
 Deutsche Forschungsgemeinschaft (DFG, German Research Foundation) – 431183758, and by the ANR project ANR-18-CE23-0003-02 (“CQFD”). \texttt{antoine.a.amarilli@inria.fr}} \and Benoît Groz\thanks{Paris-Saclay University, CNRS, LISN. \texttt{benoit.groz@lisn.upsaclay.fr}} \and Nicole Wein\thanks{University of Michigan. This work was initiated while the author was affiliated with the Simons Institute. \texttt{nswein@umich.edu}}}

\date{}

\begin{document}

\maketitle

\begin{abstract}
We study the problem of finding, in a directed graph, an $st$-walk of length $r \bmod q$ which is edge-minimum, i.e., uses the smallest number of \emph{distinct} edges. Despite the vast literature on paths and cycles with modularity constraints, to the best of our knowledge we are the first to study this problem. Our main result is a polynomial-time algorithm that solves this task when $r$ and $q$ are constants. 

We also show how our proof technique gives an algorithm to solve a generalization of the well-known Directed Steiner Network problem, in which connections between endpoint pairs are required to satisfy modularity constraints on their length. Our algorithm is polynomial when the number of endpoint pairs and the modularity constraints on the pairs are constants.
\end{abstract}

\vfill

\pagenumbering{gobble}
\pagebreak
 \pagenumbering{arabic}

\section{Introduction}
\label{sec:intro}

We begin with a simple question: Given an $n$-vertex, $m$-edge directed graph $G$ and terminals $s$,$t$, can we efficiently find an odd-length $st$-walk that is ``edge-minimum'', i.e., has the minimum number of \emph{distinct} edges? 
This question may appear similar to classical problems from the vast literature on paths and cycles with parity constraints. 
For instance, one may think of the classical ``shortest odd $st$-path'' problem,
which is well-known to be NP-hard (this is via a simple reduction from the 2-disjoint
paths problem of~\cite{FortuneHW80} and can be proved in
the same way as~\cite[Proposition 2.1]{thomassen1985even}).
However, this hardness result only applies to \emph{simple paths}, whereas the edge-minimum odd $st$-walk may not be a simple path.
One may also think of the ``shortest odd $st$-walk'' problem (i.e., minimizing the length), which is well-known to be in polynomial time\footnote{Make two copies of the vertex set, and for every edge $u\rightarrow v$ in the original graph, add the edges $u_1\rightarrow v_2$ and $u_2\rightarrow v_1$. Then find the shortest walk from $s_1$ to $t_2$. This also trivially generalizes to modularities which are polynomial in $n$.}.
However, the \emph{edge-minimum} odd $st$-walk may be \emph{longer} than the \emph{shortest} odd $st$-walk while using \emph{fewer distinct edges}. 
See~\cref{fig:example2} for an example of a graph in which the solutions to all three of these problems is different.

The focus of this paper is the \emph{Edge-Minimum Walk of Modular Length} problem, which is the above problem generalized to an arbitrary modularity $q$
and remainder $r$. Let us define it formally:

\begin{mdframed} \textbf{Edge-Minimum Walk of Modular Length (\ewm):}\\
\noindent\textbf{Input:} An unweighted directed\tablefootnote{One could also ask
  this question on an undirected graph. However, in this case, one unnatural
  aspect of the problem is that one can always traverse the same edge back and
  forth over and over to add any multiple of 2 to the path length.
  We use this observation to show in
  \cref{apx:undirected} that \ewm on undirected graphs reduces to \ewm on directed
  graphs.} graph $G$, terminals $s$,$t$, and non-negative integers $r<q$.\\
\textbf{Output:} An $st$-walk of length $r\bmod{q}$ which is edge-minimum, i.e., uses the minimum number of distinct edges (or $\emptyset$ if no such walk exists).
\end{mdframed}

We stress that the modularity constraint does not apply to the number of distinct edges, but only to the length of the walk. 
We are most interested in the regime where $q$, and hence $r$, are constants; our algorithms will work without this assumption, but they achieve worse complexities. 

Despite the vast literature on paths and cycles of given
modularities~\cite{MR727545,LapaughP84,thomassen1985even,MR872406,VaziraniY89,MR971619,ArkinPY91,MR1218331,MR1285584,MR1445033,MR1740989,MR1744686,mccuaig2004polya,MR2847879,MR2946427,MR4371468,schlotter2022odd,BjorklundHK22,MR4538070,MR4668327,diwan2024cycles,chauhan2024even,MR4756860} (see~\cite{amarilli2024survey} for a survey),
to the best of our knowledge we are the first to study the \ewm problem.\\

\begin{figure}[h!]
\centering
  \begin{tikzpicture}[yscale=-.4,xscale=.78]
    \node (s) at (0, 0) {$s$};
    \node (a) at (4, -3) {$a$};
    \node (b) at (4, -1) {$b$};
    \node (c) at (5.5, 1) {$c$};
    \node (d) at (7, -1) {$d$};
    \node (e) at (7, -3) {$e$};
    \node (t) at (15, 0) {$t$};
    \node (f) at (4, 3) {$f$};
    \node (g) at (8, 3) {$g$};
    \node (h) at (11, 3) {$h$};
    \node (i) at (11, 1) {$i$};
    \node (j) at (9.5, -1) {$j$};
    \node (k) at (8, 1) {$k$};
    \draw[->] (s) edge[bend right=15] (a);
    \draw[->] (a) -- (b);
    \draw[->] (b) -- (c);
    \draw[->] (c) -- (d);
    \draw[->] (d) -- (e);
    \draw[->] (e) edge[bend right=15] (t);
    \draw[->] (e) -- (a);
    \draw[->] (s) edge[bend left=15] (f);
    \draw[->] (f) -- (g);
    \draw[->] (g) -- (h);
    \draw[->] (h) -- (i);
    \draw[->] (i) -- (j);
    \draw[->] (j) -- (k);
    \draw[->] (k) -- (g);
    \draw[->] (h) edge[bend left=15] (t);
  \end{tikzpicture}
  \caption{An example of a graph with different answers to the problems of finding the edge-minimum odd $st$-walk, the shortest odd $st$-walk, and the shortest odd simple $st$-path. First, every simple path from $s$ to $t$ is of even length, so there is no shortest odd $st$-path. Second, the \emph{shortest} odd $st$-walk uses the bottom cycle $s, f, g, h, i,
  j, k, g, h, t$: it has length 9 and uses 8 distinct edges. Third, the \emph{edge-minimum} odd
  $st$-walk uses the top cycle: $s, a, b, c, d, e,
  a, b, c, d, e, t$. It has length 11 but uses only 7 distinct edges.}
  \label{fig:example2}
\end{figure}
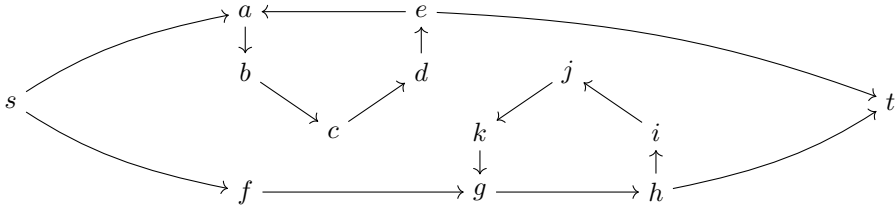

\ewm can also be viewed as a problem in the field of \emph{network design}.
Network design problems ask questions of the form ``find an edge-minimum subgraph with a certain property''. Famous network design problems include, for instance, Minimum Spanning Tree, Traveling Salesperson, and $st$-Shortest Path. We can give an equivalent rephrasing of \ewm in this way: find an edge-minimum subgraph that contains an $st$-walk of length $r \bmod{q}$. The network design problem most related to \ewm is the Directed Steiner Network problem (\dsn)~\cite{FeldmanR06,FeldmannM23,MR2786434,GalbyKMS24,DinurM18,MR4278929,MR3875932,GuoNS11,MR4786546}. In \dsn, the input is a directed graph $G$ and a set of $k$ terminal pairs ($s_1$,$t_1$),\dots,($s_k$,$t_k$); the goal is to find an edge-minimum subgraph that contains a path $s_i\rightarrow t_i$ for all terminal pairs. When $k$ is arbitrary, \dsn generalizes Directed Steiner Tree and is therefore NP-hard. When $k$ is constant, 
Feldman and Ruhl~\cite{FeldmanR06} showed that \dsn can be solved in polynomial time. 
In fact, we show in \cref{sec:dsn} that, in the special case where the set of
terminal pairs is strongly connected,
the \dsn problem with constant $k$ can be directly expressed as a special case of \ewm with constant $q$. As a consequence, our main result will imply, as a byproduct, the known result that this special case of \dsn is in polynomial time for constant $k$~\cite{FeldmanR06} (albeit with a larger exponent).
We also study a problem generalizing both \ewm and \dsn and give an algorithm that subsumes the tractability of both problems, as we will explain later in the introduction. 

\ewm is also related to problems from database theory, in particular the evaluation of \emph{regular path queries} (RPQs) on graph databases.
More precisely, a graph database is a graph whose edges are labeled with symbols from a fixed
alphabet, and an RPQ is a regular expression $e$ over the alphabet. The results of the RPQ are vertex pairs $(s,t)$ such that there is a walk from $s$ to $t$ 
(or sometimes a simple path or a trail,
depending on the variant)
whose edge labels from a word that matches~$e$ (see e.g.~\cite{CruzMW87,ConsensM90,BaganBG20,MartensNP23}).
In particular, RPQs of the form $a^r (a^q)^*$ express walks of length $r \bmod q$ for constant $r$ and $q$.
The \ewm problem then corresponds to the
\emph{smallest witness problem}~\cite{miao2019explaining,hu2024finding} for such queries,
which is the problem of finding a sub-database of minimum size that satisfies the query.
However, to the best of our knowledge there is no prior work on
the smallest witness problem for RPQs enforcing modularity conditions, or indeed for RPQs in general:
our work can be seen as a first step towards addressing this problem.

\myparagraph{Our results.}
What is the complexity of \ewm for constant $q$? It is unclear a priori; \ewm is related to both polynomial-time solvable problems such as Shortest Modular-Length $st$-Walk and Directed Steiner Network, as well as NP-hard problems such as Shortest Modular-Length $st$-Path.

Our main result is that \ewm is in polynomial time for constant $q$:

\begin{restatable}{theorem}{mainthm}\label{th:main-theorem}
There is an algorithm solving \ewm in time:
$n^{O(\log q)} \cdot 2^{O(q \log^2 q)}$.
\end{restatable}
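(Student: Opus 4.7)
The main insight is structural. Consider any optimal walk $W$ and view its set of distinct edges $E(W)$ as a subgraph $H$. Inside $H$, the set of achievable $st$-walk lengths modulo $q$ is a coset $L_0 + \langle c_1, \ldots, c_m \rangle$, where $L_0$ is the length of a fixed $st$-walk in $H$ and the $c_i$ are the lengths of cycles in $H$. Since every subgroup of $\mathbb{Z}/q\mathbb{Z}$ has the form $d\mathbb{Z}/q\mathbb{Z}$ for some divisor $d \mid q$, and each cycle that strictly enlarges the current subgroup at least halves its index, the subgroup needed to shift $L_0$ into $r$ is generated by at most $\lceil \log_2 q \rceil$ cycles. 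This yields a decomposition: $E(W)$ is the union of a ``backbone'' $st$-walk plus at most $O(\log q)$ ``cycle gadgets'', each being a cycle connected to the backbone by a short path.

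Guided by this structure, I would design a recursive algorithm. The subproblem is: given $(u, v, r^*, q^*)$ with $q^* \mid q$, find a minimum-edge $uv$-walk of length $\equiv r^* \pmod{q^*}$. The base case $q^* = 1$ reduces to a shortest $uv$-path. In the recursive step, the algorithm guesses an attachment vertex $w \in V$ and a cycle length $c \in \mathbb{Z}/q\mathbb{Z}$; the cycle through $w$ of length $c$ lets us relax the modulus from $q^*$ to $d := \gcd(c, q^*)$, with $d < q^*$, reducing the problem to an \ewm subproblem with strictly smaller modulus. Since $d \leq q^*/2$, the recursion depth is $O(\log q)$, and the $n$-factor per level from the vertex choice yields the $n^{O(\log q)}$ term. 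The $2^{O(q \log^2 q)}$ factor comes from enumerating cycle lengths and residue offsets carried through the recursion.

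The main obstacle will be handling edge sharing: the edge cost is the size of the \emph{union} of the backbone and the cycle gadgets, not the sum, so naively combining recursive solutions over-counts shared edges. I would address this by carrying through the recursion an explicit ``context'' of already-selected edges that are available for free in subsequent pieces; since the structural lemma limits the number of pieces to $O(\log q)$ and each has a constant-size interface with the rest, this context stays tractable. A secondary subtlety is that finding a minimum-edge cycle of prescribed modular length through a given vertex is itself an \ewm-style problem (taking $s = t = w$), but it nests naturally inside the same recursion. Verifying that this recursive tree has the claimed size, and that each combine step runs in polynomial time, will complete the proof.
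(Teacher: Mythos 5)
Your proposal shares the paper's key arithmetic insight---that Lagrange's theorem on $\mathbb{Z}/q\mathbb{Z}$ limits the number of ``useful'' cycle classes to $O(\log q)$---but the structural claim you build on top of it is false, and the algorithm built on that claim has a circularity.

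\textbf{The structural claim is wrong.} You assert that $E(W)$ decomposes as a backbone $st$-walk plus $O(\log q)$ cycle gadgets, which in particular would bound the excess $|E(W)| - |V(W)| + 1$ (the undirected cycle-space dimension) by $O(\log q)$. This is not true even for $q=2$. The paper's Figure~\ref{fig:emw2} depicts an edge-minimum odd $st$-walk where, after traversing an odd cycle $C$, the return path $P_2$ re-enters the outbound path $P_1$ at strictly earlier vertices an \emph{unbounded} number of times. Each such re-entry contributes one more independent cycle to the cycle space, so the excess is unbounded while $\log q = 1$. The existence of $O(\log q)$ ``generating'' residues modulo~$q$ does not translate into $O(\log q)$ topological cycles in the edge set; residues can be realized by overlapping detours that share most of their edges but still inflate the cycle count arbitrarily. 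What \emph{is} bounded, and what the paper actually proves, is the \emph{cutwidth} of $G_w$: after the segment decomposition (\cref{lem:bounding-number-of-segments}) gives $O(\log q)$ segments, \cref{prp:segment-bound} converts this to a cutwidth bound of $O(\log q)$ by a carefully designed vertex ordering. Cutwidth, unlike cycle count, remains bounded under the re-entry phenomenon.

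\textbf{The recursion is circular.} Your subroutine, given $(u,v,r^*,q^*)$, guesses an attachment vertex $w$ and a cycle length $c$, and then needs to find a minimum-edge closed walk through $w$ of length $c \bmod q^*$. You note this is ``an \ewm-style problem with $s=t=w$'' and claim it ``nests naturally inside the same recursion.'' But this nested call has modulus $q^*$, not $\gcd(c, q^*)$---the modulus does not decrease for the cycle subproblem, so the recursion does not terminate. You would need an independent argument that the cycle piece can itself be found with a smaller modulus, or that it is a simple cycle findable by direct enumeration, and neither is true in general (the optimal ``cycle'' piece can again need its own detours to reach the right residue).

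\textbf{Edge sharing is not tractable via your proposed ``context.''} You acknowledge the union-versus-sum issue and propose carrying an explicit set of already-selected edges, appealing to a ``constant-size interface.'' But the pieces can share an unbounded number of edges and vertices (again, see Figure~\ref{fig:emw2}, where $P_2$ re-uses many edges of $P_1$), so there is no constant-size interface, and carrying an explicit edge set makes the state space exponential. The paper avoids this by replacing the global edge-set bookkeeping with a local dynamic program over the cutwidth ordering: each configuration stores only, for $O(\log q)$ ``live'' vertices, the achievable walk-length residues between them (\cref{def:config}), and correctness is maintained via the closure operation and the invariant of \cref{clm:invariant-dyn}. This is the mechanism that actually makes the algorithm polynomial; your proposal has no replacement for it.

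In short: your Lagrange-based counting matches Step~3 of the paper's \cref{lem:bounding-number-of-segments}, but the object being counted should be \emph{segments}, not \emph{cycles}, and the payoff should be a \emph{cutwidth} bound feeding a configuration DP, not a recursive divide-and-conquer over cycle gadgets.
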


 Specifically, the exponent of $n$ in \cref{th:main-theorem} has reasonable constants: $7 + 3\log_2 q$, though we did not focus on optimizing them.
 
\myparagraph{Generalizations.}
A natural generalization of \ewm is to consider weighted graphs. \ewm admits two natural notions of weights: (1) each edge has a \emph{length}, and the length of a walk is the sum of the lengths of the edges, and (2) each edge (or vertex) has a \emph{cost} and the edge-minimum walk is measured in terms of the sum of the costs of the edges (or vertices) that it traverses.
We show in \cref{sec:costs_distances} that our algorithm extends to accommodate costs, and that it can also accommodate lengths provided that their values are polynomial. 
In the case where lengths and the value $q$ are super-polynomial, we show that the problem is NP-complete by reduction from Subset Sum. 

Inspired by Directed Steiner Network, we also extend our results in \cref{sec:dsn} from one walk to multiple walks. The input is a directed graph $G$, and $k$ endpoint pairs
along with modularity constraints: $(s_1,t_1,q_1,r_1),\dots,(s_k,t_k,q_k,r_k)$, and the goal is to find an edge-minimum subgraph that contains a walk from $s_i$ to $t_i$ of length $r_i \bmod q_i$, for all $i$. 
Generalizing our main result, we
show (\cref{th:kdsnm-tractable}) that this problem, which subsumes the \dsn and \ewm problems, also admits a polynomial-time algorithm for constant $q_i$ values and a constant number $k$ of endpoint pairs. Our approach to show the result focuses on the case of strongly connected solutions (in line with the connection to \dsn mentioned earlier), before extending the proof to arbitrary solutions by re-using lemmas from~\cite{FeldmannM23}.

\section{Technical Overview}
\label{sec:overview}

Our work is related to the Directed Steiner Network (\dsn) problem studied by Feldman and Ruhl~\cite{FeldmanR06}, where we must find an edge-minimum subgraph 
of the input directed graph 
which satisfies connectivity requirements.
Their work shows that the \dsn problem is tractable when 
the number of endpoint pairs in the
connectivity requirements is constant.
Following their work, Feldmann and Marx~\cite{FeldmannM23} have investigated the parameterized complexity of the \dsn problem. 
One key idea of their approach is to 
show bounds on the \emph{cutwidth} of solutions to the \dsn problem.
Informally, cutwidth is a parameter that limits how many edges are ``cut'' when
arranging vertices in a well-chosen order $<$. Bounding the cutwidth of a graph
ensures that we can go over vertices in the order of~$<$ and that, at any cut
along~$<$, all edges except a constant number will be entirely to one side of
the cut. (See~\cref{sec:prelim} for the formal definition of cutwidth.)

Feldmann and Marx show that when the number of endpoint pairs is bounded by a constant~$k$,
then \dsn instances always have a solution of constant \emph{undirected} cutwidth (depending only on~$k$).
They further show that bounded-cutwidth solutions can be computed in
polynomial time by dynamic programming. We follow the same framework 
and show that solutions to \ewm also have bounded cutwidth. 
Then we can find the solution by an approach similar to the dynamic programming algorithm of~\cite{FeldmannM23}, or to the token game of~\cite{FeldmanR06}.

Thus, our main goal is to solve the purely structural problem of bounding the cutwidth of edge-minimum $st$-walks of length $r\bmod{q}$ in arbitrary directed graphs. We only focus on this goal in the rest of the technical overview.
To provide intuition, we will begin by considering the case of $q=2$, $r=1$ (i.e. odd walks), and then outline the obstacles that arise when generalizing to arbitrary $q,r$. 

\myparagraph{Odd walks.}
Let $w$ denote an edge-minimum odd-length $st$-walk, and let $G_w$ be the subgraph of $G$ consisting of the union of edges in the walk $w$. We first observe that, without loss of generality, $w$ does not contain any even cycles: If $w$ contained an even cycle, we could simply delete it, and $w$ would still be of odd length, and still be edge-minimum. (To clarify, we are not referring to even cycles in $G_w$, but rather even cycles in the walk $w$ itself. After deleting a cycle $C$ from $w$, edges from $C$ can still appear in $G_w$ if they are traversed at another point of the walk $w$.)

Now suppose $w$ contains two odd cycles in succession. Again, we could simply delete both cycles, and $w$ would still be of odd length, and still be edge-minimum. So, we can suppose without loss of generality that $w$ has either no cycles (and trivially has small cutwidth), or consists of a simple path $P_1$, followed by an odd cycle $C$, followed by a simple path $P_2$. 
We assume this latter case in the rest of the discussion, and we define $C$ by
looking at first time that the walk re-visits a previously visited vertex: this
ensures that, by construction, $P_1$ is vertex-disjoint from~$C$.
However, the graph does not necessarily have small cutwidth, because $P_2$ could
intersect $P_1$ and $C$ in intricate ways forming an unbounded number of nested
cycles. It is a priori conceivable that, e.g.,~a grid-like structure with high
cutwidth could emerge. However, we can show that this does not happen.

First, we can observe that once $P_2$ leaves $C$, without loss of generality it does not return. This is because if $P_2$ leaves $C$ at vertex $a$ and then returns at vertex $b$, we can delete the subpath of $P_2$ from $a$ to $b$, and instead traverse only the edges of $C$ to get from $a$ to $b$. This way, we can still ensure that $w$ is of odd length since every traversal of $C$ changes the parity of $w$, and we can traverse $C$ for ``free'' without adding any extra edges. 

Finally, we consider how $P_1$ and $P_2$ can interact. We can observe that without loss of generality $P_2$ cannot hit a vertex $a$ on $P_1$, then leave $P_1$, then return to a \emph{later} vertex $b$ on $P_1$. In this case we could delete the subpath of $P_2$ from $a$ to $b$, and instead traverse $P_1$ to get from $a$ to $b$. Again, we can still ensure that $w$ is of odd length since we can always traverse $C$ for free to change the parity of $w$. We stress that this argument only holds when $P_2$ re-enters $P_1$ at a \emph{later} vertex, and in fact, $P_2$ can re-enter $P_1$ at an \emph{earlier} vertex an unbounded number of times.

With all of these observations in mind, we conclude that if $w$ has a cycle, then $w$ must have the very specific structure depicted in \cref{fig:emw2}.
Then, it is visually clear that any vertical cut only has a constant number of crossing edges, and thus $w$ has constant cutwidth.

\begin{figure}
\centering
    \begin{tikzpicture}[xscale=.98,
    p1/.style={thick, blue, ->},
    p2/.style={very thick, orange, densely dotted, ->}
    ]

    \node (v0) at (0, 0) {$s$};
    \foreach \i in {1,...,5} {
        \node (v\i) at (\i, 0) {};
        \node (vb\i) at (\i, .1) {};
    }
    \node[blue] (v6) at (5.5, 0) {$\cdots$};

    \foreach \i in {6,...,12} {
        \node (v\i) at (\i, 0) {};
        \node (vb\i) at (\i, .1) {};
    }

    \foreach \i in {0,...,4} {
        \draw[p1] (v\i) -- (v\the\numexpr\i+1\relax);
    }
    \foreach \i in {6,...,11} {
        \draw[p1] (v\i) -- (v\the\numexpr\i+1\relax);
    }

        \node (o) at (13, 0) [draw,thick,decoration={markings,
        mark=at position .1 with {\arrow{>}},
        mark=at position .3 with {\arrow{>}},
        mark=at position .5 with {\arrow{>}},
        mark=at position 0.7 with {\arrow{>}},
        mark=at position 0.9 with {\arrow{>}},
        },
        postaction={decorate}, circle through=(v12),
        ] {$C$};

\node[inner sep=.25cm] (u1) at ([xshift=13.3cm]3*180/5:1) {};
\node(u2) at (10.5, 1-.05) {};
\draw (u1) edge[p2] (u2);
\draw (u2) edge[p2,bend right=20] (vb10);

\node(w1) at (9.5, -1) {};
\draw[p2] (vb10) -- (vb11);
\draw (vb11) edge[p2,bend left=20] (w1);
\draw (w1) edge[p2,bend left=20] (vb8);

\node(u3) at (7.5, 1) {};
\draw[p2] (vb8) -- (vb9);
\draw (vb9) edge[p2,bend right=20] (u3);
\draw (u3) edge[p2,bend right=20] (vb6);

\node(w2) at (5.5, -1) {};
\draw[p2] (vb6) -- (vb7);
\draw[p2] (vb7) .. controls +(-.25,-.5) and +(.5,0) .. ([xshift=.5cm]w2.east);

        \node[orange] (w6) at (5.5, -1) {$\cdots$};
        \node[orange] (u6) at (5.5, 1) {$\cdots$};

\node(w7) at (4.5, -1) {};
\draw (w7) edge[p2,bend left=20] (vb3);
\draw[p2] (vb3) -- (vb4);

\node(u8) at (2.5, 1) {};
\draw (v4) edge[p2,bend right=20] (u8);
\draw (u8) edge[p2,bend right=20] (vb1);

\node(w9) at (1, -1) {};
\node(t) at (0, -1) {$t$};
\draw[p2] (vb1) -- (vb2);
\draw (v2) edge[p2,bend left=20] (w9);
\draw[p2] (w9) -- (t);

\coordinate (k1) at (0,1.8);
\coordinate (k2) at ([xshift=1.6cm]k1);
\coordinate (k3) at ([xshift=1.6cm]k2);
\draw[p1] (k1) -- +(.6,0) node[right,p1] {$P_1$};
\draw[p2] (k2) -- +(.6,0) node[right,p2] {$P_2$};
\draw[p2] ([yshift=.06cm]k3) -- +(.6,0) node[right] {${\color{blue}P_1}\,{\color{black}\cap}\, {\color{orange}P_2}$};
\draw[p1] ([yshift=-.06cm]k3) -- +(.6,0);

\draw[Grey!40,thin] ([xshift=-.2cm, yshift=.3cm]k1) rectangle +(6,-0.7);
\end{tikzpicture}

\caption{Specific shape of solutions to \ewm for $q=2$. Each ``edge'' in the figure represents a subpath, not necessarily a single edge. The cycle $C$ is of odd length. %
}
\label{fig:emw2}
\end{figure}
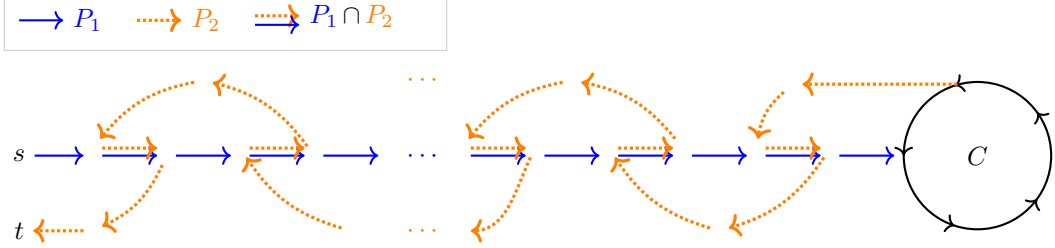

This concludes the outline for the case of odd walks. The same argument holds for even walks, and more generally an argument for a particular choice of constants $q,r$ is easily extendable by reduction to other constant values $r'$ with the same $q$,
simply by adding a new source connected to the original source by a path of constant length $r'-r$ mod~$q$. The challenging part is extending to arbitrary $q$.

\myparagraph{Challenges in extending to arbitrary $q$.}
Given $q,r$, let $w$ denote an edge-minimum $st$-walk of length $r \bmod q$. If $w$ happens to have a cycle $C$ whose length is co-prime with $q$, then we can use a similar argument to the case of odd walks above. This is because traversing the co-prime cycle $C$ over and over allows us to achieve \emph{every} remainder $r$. Then, we can simplify the walk $w$ in a similar way to above without worrying about the modularity changes resulting from our simplifications, because we can always reverse these modularity changes by traversing $C$ a different number of times for free.

The complications arise when $w$'s only cycles are not co-prime with $q$. In this case, the above arguments fall through, and $w$ can exhibit various kinds of complicated behavior: (a) $w$ could leave a cycle and then return to the same cycle, and (b) $w$ could intersect a previous subwalk, then leave, then return to a later point on that subwalk.

Our approach is to bound the cutwidth of such walks~$w$ by intuitively showing
that such behavior can only happen to change the remainder of the length of the
walk modulo~$q$, and so only
happens a number of times that depends on~$q$. More precisely, we define a
\emph{segment decomposition} of the walk and prove that each successive segment
allows us to expand the reachable set of remainder values, so that the number of
segments is bounded as a function of~$q$
(\cref{lem:bounding-number-of-segments}). We then show that the cutwidth of a
walk can be bounded linearly in its number of segments (\cref{prp:segment-bound}): we do this by defining 
a suitable ordering of the vertices following the \emph{chunks} of the walk,
which refine the segments in the decomposition. These two results imply that
solutions to \ewm have bounded cutwidth and hence that they can be found in
polynomial time.

\section{Preliminaries}\label{sec:prelim}
In this section, we present preliminary notions used throughout the proof of our main result (\cref{th:main-theorem}).

\myparagraph{Basic definitions.}
All graphs in the paper are finite and directed. Graphs are not necessarily simple; they may contain self-loops, but they do not contain multi-edges.
Given a directed graph $G =
(V, E)$, a \emph{walk} $w$ in $G$ is a sequence of edges
$w[1],\dots,w[\ell]$ such that the target vertex of $w[i]$ is equal to the source vertex of~$w[i+1]$ for each $1 \leq i < \ell$.
Note that an edge $e$ of~$G$ may occur at several positions in~$G$: considering a position $i$ for which $w[i] = e$,
we write $w[i]$ to refer to that occurrence of~$e$ at the $i$-th position of~$w$.
However, when convenient we abuse notation and identify $w[i]$ with~$e$ itself; e.g., we talk about the source and target vertices of~$w[i]$ to mean those of~$e$.
For $1 \leq i \leq \ell$ we call $w[i]$ 
a \emph{first-visited} edge occurrence if it is the first occurrence of some edge~$e$, i.e., if there is no $j < i$ such that $w[j] = w[i]$. Otherwise, $w[i]$ is a \emph{revisited} edge.
Of course, each edge that occurs in $w$ is first-visited at exactly one position.

The \emph{source vertex} $s$ of~$w$ is that of $w[1]$, and its \emph{target vertex} $t$ is that of~$w[\ell]$:
we then call $w$ an \emph{$st$-walk}.
The \emph{length} $|w|$ of~$w$ is~$\ell$.
The \emph{set of vertices used by~$w$}, denoted $V_w$, is simply the set of vertices that occur in~$w$, i.e., occur in some edge of~$w$; in particular the source and target vertices of~$w$ are in~$V_w$.
The \emph{set of edges used by~$w$} is 
$E_w := \{w[i] \mid 1 \leq i < \ell\}$.
The \emph{subgraph spanned by~$w$} is $G_w = (V_w, E_w)$.
Note that $G_w$ is \emph{not} the subgraph induced by~$V_w$,
as it only contains the edges that actually occur on the walk.
A \emph{subwalk} of~$w$ is a contiguous subsequence of~$w$.
We denote subwalks of $w$ with the slicing convention, i.e., $w[:i]=w[1]\dots
w[i]$ and $w[i:j]=w[i]\dots w[j]$. Note that the right endpoint is included so that, e.g., $w[i:j]$ is empty precisely when $j<i$,
and $w[:i]$ is empty precisely when $i \leq 0$.
When $u$ and $v$ are walks and when the target vertex of~$u$ is the source vertex of~$v$, we write $uv$ to mean the walk obtained by concatenating $u$ and $v$: it admits $u$ and $v$ as subwalks, and of course $|uv| = |u| + |v|$.

For convenience, throughout the paper we denote by $G_{w,i}$ the graph
$G_{w[:i]}$ 
spanned by the subwalk $w[:i]$ (up to $i$ included); by convention
$G_{w,0}$ is always the graph with the single vertex~$s$ and no edges.

\myparagraph{Strongly connected components.}
A \emph{strongly connected component} (SCC) of a graph~$G$ is a maximal set of
vertices $C$ of~$G$ such that, for any two distinct vertices $u, v \in C$, there
is a directed path from $u$ to~$v$ in~$G$.  As usual, belonging to the same SCC is an
equivalence relation, so that SCCs form a partition of the
vertices of~$G$.
We say that an SCC $C$ of~$G$ is \emph{non-trivial} if the subgraph
of~$G$ induced by $C$ contains at least one edge. This is the case if and only
if $C$ contains more than one vertex, or contains a single vertex on which there
is a self-loop.

\myparagraph{\ewm and solutions.}
We study the \emph{Edge-Minimum Walk of Modular Length} problem (\ewm), as
defined in the introduction: given a directed graph $G = (V, E)$, terminals $s$,$t$, and
non-negative integers $r<q$, we wish to compute a subset $E'$ of~$E$ such that
$(V, E')$ has an $st$-walk of $r\bmod{q}$ and
$E'$ is edge-minimum (i.e., the number of edges of $G'$ is minimum). This phrasing is
somewhat different from the one given in the introduction, in which we wanted
to compute an edge-minimum walk. However, we can easily compute an
edge-minimum walk from~$E'$ by a product construction, in time $O(|E'| \times
q)$.

In the sequel, we only consider the computation of edge-minimum
subsets of edges (rather than walks). When fixing inputs $G = (V, E)$, $s$, $t$, $q$, and $r$, we talk about a
\emph{candidate solution} to mean a subset $E'$ of~$E$ such that $(V, E')$ has an $st$-walk of
length $r\bmod{q}$ but $E'$ is not necessarily edge-minimum, and of an
\emph{optimal solution} to mean a candidate solution which is additionally edge-minimum.

\myparagraph{Cutwidth.}
The \emph{cutwidth} is a structural parameter of graphs. We show that there always exists an optimal solution to \ewm with bounded cutwidth, which suffices to ensure that such solutions can be found with an efficient algorithm.
We first recall the definition of cutwidth for undirected graphs.
Let $G=(V,E)$ be an undirected graph.
An \emph{ordering} of~$G$ is a total order over the vertex set~$V$ of~$G$.
A \emph{cut} $(V_-, V_+)$ of $G$ that respects the ordering $<$ 
is a partition $V_-\uplus V_+=V$ such that $v_-<v_+$ for all $(v_-,v_+)\in V_-\times V_+$.
We say that an edge $e$ of $G$ \emph{crosses the cut} if it has one
endpoint in $V_-$ and one in $V_+$, i.e.,
$e \cap V_-$ and $e \cap V_+$ are both nonempty.
Note that self-loops never cross cuts.
The \emph{cutwidth} of a cut $(V_-, V_+)$ that respects $<$ is the number of edges that cross
this cut, and the \emph{cutwidth} of $<$ is the maximum cutwidth of a cut that
respects~$<$.
The \emph{cutwidth} of $G$ is then the minimum cutwidth of an ordering of~$V$.

We now define cutwidth for directed graphs. Let $G = (V, E)$ be a directed graph. Its \emph{underlying undirected graph} is $G' = (V, E')$ where $E' = \{\{u, v\} \mid (u, v) \in E, u \neq v\}$: note that self-loops are not reflected in~$G'$. We then define the \emph{cutwidth} of~$G$ to be that of~$G'$. We underscore
that our definition of cutwidth for directed graphs is always the undirected cutwidth:
we do not consider the directed cutwidth in this paper.
We then define the \emph{cutwidth} of a walk $w$ of~$G$ as the cutwidth of the directed graph~$G_w$.

\section{Segment Decomposition of a Walk}\label{sec:segment}
Our proof of our main theorem (\cref{th:main-theorem}) consists of three steps,
spanning this section and the next two sections. First, in this section we introduce the notion of the \emph{segment decomposition} of a walk, and we show that any 
walk $w$ 
can be transformed into
a walk $w'$ that satisfies the same modularity conditions (i.e., has the same remainder modulo~$q$),
uses a subset of the edges (i.e., $E_{w'} \subseteq E_w$),
and has a segment decomposition with $O(\log q)$ segments.
Second, in the next section, we will show how the cutwidth of the graph $G_{w'}$ spanned by $w'$ can be bounded linearly in this number of segments, implying that 
optimal solutions to \ewm have bounded cutwidth. Third, 
in \cref{sec:cw-algo},
we show that this cutwidth bound makes it possible to solve \ewm in polynomial time.

\myparagraph{Segment decomposition.} 
Let $w$ be a walk. Its \emph{segment decomposition} is a sequence of walks $s_1, \ldots, s_\xi$ such that $w = s_1 \cdots s_\xi$; the number $\xi$ is the \emph{number of segments} of~$w$. \Cref{fig:walk} depicts an example of a segment decomposition of a walk. The segment decomposition is defined by processing the walk~$w$ from left to right as follows. Initially, the segment decomposition is the empty sequence, and the entire walk $w$ remains to be decomposed. At any point of the decomposition, we have already computed the first $\sigma$ segments $s_1, \ldots, s_\sigma$ for some number $\sigma \geq 0$ of segments, and can write $w = s_1 \cdots s_\sigma w'$. If $w'$ is non-empty, we 
compute the next segment as a prefix of~$w'$. 

Informally, we read $w'$ and terminate the segment whenever we have a first-visited edge $w'[j] = (u,v)$ followed (not necessarily immediately) by an edge $w'[k] = (x,y)$ where $y$ can be reached by a path from~$u$ using only edges of~$w$ up to~$w'[j]$ excluded. The intuition of segments is that they capture a moment where the walk revisits a vertex $y$ from a vertex $u$ via a path that starts with an edge $e = (u,v)$ which had not been previously visited in~$w$, even while there was already a path from $u$ to $y$ in the graph spanned by the walk before $e$ was visited. 
Formally, we will look for the \emph{segment end} inside the suffix~$w'$ of~$w$, according to the following definition:

\begin{definition}[Segment end, Segment detour]
  \label{def:segmentend}
Having fixed $w = s_1 \cdots s_\sigma w'$,
let $\lambda_\sigma = |s_1 \cdots s_\sigma| = |w|-|w'|$
  be the total length of the already-computed segments.
For $\lambda_\sigma < k \leq |w|$, we say that segment $s_{\sigma+1}$ \emph{ends at~$k$} if $k$ is minimum such that the following hold:
\begin{itemize}
\item There is $\lambda_\sigma < j \leq k$ such that $w[j]$ is first-visited in~$w$; we write $w[j]$ as $(u, v)$;
\item Writing the edge $w[k]$ as $(x,y)$,
then there is a path from $u$ to~$y$ in the
graph $G_{w,j-1}$
    which contains the edges of the walk $w$ up to $w[j]$ excluded. Note that, in particular, if $u=y$ then this requirement is always satisfied using the empty path from $u$ to $y$.
\end{itemize}

We stress that the path from $u$ to~$y$ required by the second item of the definition is a path in the \emph{graph spanned} by a certain prefix of~$w$; it may be the case that the path does not occur as a subwalk of~$w$.
As for the subwalk $w[j:k]$ from $u$ to $y$ in~$w$, which is a subwalk of~$w'$, we call it the \emph{segment detour} of $s_{\sigma+1}$, denoted $\det_{\sigma+1}$.
Note that, having chosen the minimal~$k$, there could be multiple valid choices for $j$, and we pick one arbitrarily.
\end{definition}

Given the definition of a segment end above, the next segment of the
decomposition after $s_\sigma$ is simply $s_{\sigma+1} :=
w[\lambda_\sigma+1:k]$.
This means that we terminate the $(\sigma+1)$-th segment right after the edge $w[k]$, and continue the decomposition if $s_1 \cdots s_{\sigma+1}$ does not yet cover the entire walk~$w$. If there is no segment end at any~$k$, then we take all the rest of the walk~$w'$ to be the last segment of the decomposition and we finish.
Note that for this reason, unlike other segments, the last segment of the decomposition may not feature a segment detour.

\begin{example}
We give a running example of a walk
in  \cref{fig:walk}, together with its segment decomposition.
 Note that edges may be visited by several different segments, which is
  illustrated by parallel edges and the indexes on edges in \cref{fig:walk}.
  Further note that in general it may also be the case that the same edge is revisited multiple times by the same segment, although this does not happen in the example.

  Let us clarify why the successive segments end:
\begin{itemize}
    \item \textcolor{csegm1}{Segment 1} ends because it takes the first-visited edge $(v_2, v_3)$ and immediately afterwards takes the edge $(v_3, v_2)$ so the segment ends, as this loop forms a detour with the trivial path from $v_2$ to itself.
    \item \textcolor{csegm2}{Segment 2} ends for a similar reason: it takes a first-visited edge $(v_{2}, v_{4})$, and afterwards reaches $v_{2}$ again.
    \item \textcolor{csegm3}{Segment 3} ends for a similar reason: it takes a first-visited edge $(v_{14}, v_{15})$, and afterwards reaches $v_{14}$ again.
    \item \textcolor{csegm4}{Segment 4} ends because it takes the first-visited edge $(v_{15}, v_{18})$, and then reaches $v_{16}$, but before the segment started there was already a path from $v_{15}$ to $v_{16}$.
    \item \textcolor{csegm5}{Segment 5} exemplifies how segments can alternate between first-visited edges and revisits for an unbounded number of times without terminating: this pattern resembles the one from \cref{sec:overview},
    and we will prove in \cref{lem:segbound} (see also \cref{fig:scc-chunks}) that this pattern does not prevent us from bounding the contribution of the segment to the cutwidth. Note how, for instance, the first-visited edge $(v_{17}, v_{13})$ cannot be used for a segment end, because the segment never revisits a vertex to which $v_{17}$ has a path using the edges that existed until then. The end of segment 4 is then similar to segment 1.
 \item \textcolor{csegm6}{Segment 6} shows how a segment can end by taking an edge which is not first-visited. Indeed, the segment takes the first-visited edge $e = (v_{23}, v_{21})$, and then takes the edge $(v_{21}, v_{22})$, but before $e$ there was a (single-edge) path from~$v_{23}$ to $v_{22}$.
 \item \textcolor{csegm7}{Segment 7} ends because the walk ends.
\end{itemize}
\end{example}

\begin{figure}[h!]
\centering
\begin{tikzpicture}[xscale=1.1,yscale=1.15,
	ve/.style={minimum width=2.5mm,inner sep=0.0cm,draw,circle,black},
	segm/.style={->,>=stealth,font=\footnotesize,thick},
    segm1/.style={segm,csegm1},
    segm2/.style={segm,csegm2},
    segm3/.style={segm,csegm3},
    segm4/.style={segm,csegm4},
    segm5/.style={segm,csegm5},
    segm6/.style={segm,csegm6},
    segm7/.style={segm,csegm7},
	every label/.style={inner sep=0.01cm,label distance=.08cm,font=\footnotesize,black},
	wi/.style={font=\notsotiny\sffamily,red},
	wab/.style={above=2ex,wi},
	wbe/.style={below=2ex,wi},
	wabr/.style={above right=1ex and 0.5ex,wi}
]

\foreach \i [count=\j from 20] in {1,...,5} {
\node[ve,label={90:\vname{\j}}] (na\i) at (-.5+\i*1.5,3) {};
}

\node[ve,label={90:\vname{4}}] (nm1) at (3,1.5) {};
\node[ve,label={90:\vname{5}}] (nm2) at (4,1.5) {};

\foreach \i in {1,2} {
\node[ve,label={90:\vname{\i}}] (n\i) at (\i,0) {};
}
\foreach \i [count=\j from 6] in {3,...,12} {
\node[ve,label={90:\vname{\j}}] (n\i) at (\i,0) {};
}

\node[ve,label={270:\vname{3}}] (nb1) at (2,-1) {};
\node[ve,label={270:\vname{16}}] (nb2) at (12,-1) {};
\node[ve,label={0:\vname{18}}] (nb2b) at (12.5,-.5) {};
\node[ve,label={270:\vname{17}}] (nb3) at (11,-1) {};
\node[ve,label={270:\vname{19}}] (nb4) at (4.5,-1) {};

\draw[segm1] (n1) -- node[above] (ne1) {1} (n2);
\draw[segm1] (n2) -- node[right] (ne2) {2} (nb1);
\draw[segm1] (nb1) .. controls  +(-1,0) and +(-1,-1) .. node[above left,segm1] (ne3) {3}(n2);

\draw[segm2] (n2) -- node[left,pos=.8] (ne4) {4}  (nm1);
\draw[segm2] (nm1) -- node[above] (ne5) {5} (nm2);
\draw[segm2] (nm2) -- node[right] (ne6) {6} (n2);

\foreach \i [count=\j from 3, count=\jlab from 7] in {2,...,11} {
\draw[segm3] (n\i) -- node[above] (ne\jlab) {\jlab} (n\j);
}
\draw[segm3] (n12) -- node[left] (ne17) {17} (nb2);
\draw[segm3] (nb2) -- node[below] (ne18) {18} (nb3);
\draw[segm3] (nb3) -- node[left] (ne19) {19} (n11);

\draw[segm4,transform canvas={yshift=.1cm}] (n11) -- node[above=1ex] (ne20) {20} (n12);
\draw[segm4] (n12) -- node[above right] (ne21) {21} (nb2b);
\draw[segm4] (nb2b) -- node[below right] (ne22) {22} (nb2);
\draw[segm5,transform canvas={yshift=-.1cm}] (nb2) -- node[below=1ex] (ne23) {23} (nb3);
\draw[segm5] (nb3) -- node[below left] (ne24) {24} (n10);
\draw[segm5] (n10) edge[bend left=50]
node[below] (ne25) {25} (n9);
\draw[segm5] (n9) edge[bend left=60] node[below] (ne26) {26} (n7);
\draw[segm5,transform canvas={yshift=-.1cm}] (n7) -- node[below] (ne27) {27}(n8);
\draw[segm5] (n8) edge[bend right=45] node[above] (ne28) {28} (n5);
\draw[segm5,transform canvas={yshift=-.1cm}] (n5) -- node[below] (ne29) {29} (n6);
\draw[segm5] (n6) .. controls +(-.2,-.5) and +(.5,0) .. node[below] (ne30) {30}(nb4);
\draw[segm5] (nb4) edge[bend left=20] node[below] (ne31) {31} (n3);
\draw[segm5] (n3) .. controls +(-.1,-.1) and +(.75,0) .. node[pos=.7,below right] (ne32) {32}(nb1);
\draw[segm5,transform canvas={shift={(-.15cm,-0.05cm)}}] (nb1) .. controls  +(-1,0) and +(-1,-1) .. node[above left=.05cm and .1cm] (ne33) {33} (n2);
\draw[segm5] (n2) edge[bend left=20] node[below left] (ne34) {34} (na1);
\draw[segm5] (na1) -- node[above] (ne35) {35}(na2);
\draw[segm5] (na2) -- node[below] (ne36) {36}(na3);
\draw[segm5] (na3) -- node[below=.8ex] (ne37) {37} (na4);
\draw[segm5] (na4) edge[bend right] node[above] (ne38) {38} (na3);
\draw[segm6,transform canvas={yshift=-.1cm}] (na3) -- node[below=1.8ex] (ne39) {39} (na4);
\draw[segm6] (na4) edge[bend right=55] node[below] (ne40) {40} (na2);
\draw[segm6,transform canvas={yshift=-.1cm}] (na2) -- node[below=1ex] (ne41) {41} (na3);
\draw[segm7,transform canvas={yshift=-.2cm}] (na3) -- node[below=2.8ex] (ne42) {42} (na4);
\draw[segm7] (na4) -- node[below] (ne43) {43} (na5);

\foreach \i in {1,...,7} {
\node[right,segm\i] at (9.5, 1.4+7*.35-.35*\i) (ns\i) {Segment \i{} =};
}
\node[segm1,right,inner sep=0cm] at (ns1.east) (ns1b) {$w[1:3]$};
\node[segm2,right,inner sep=0cm] at (ns2.east) {$w[4:6]$};
\node[segm3,right,inner sep=0cm] at (ns3.east) {$w[7:19]$};
\node[segm4,right,inner sep=0cm] at (ns4.east) {$w[20:22]$};
\node[segm5,right,inner sep=0cm] at (ns5.east) {$w[23:38]$};
\node[segm6,right,inner sep=0cm] at (ns6.east) {$w[39:41]$};
\node[segm7,right,inner sep=0cm] at (ns7.east) (ns7b) {$w[42:43]$};

\draw[Grey!40,thin] ([xshift=-.1cm, yshift=.1cm]ns1.north west) rectangle ([xshift=.1cm, yshift=-.1cm]ns7b.south east);
\end{tikzpicture}
\caption{A walk $w$ and its decomposition into  7 segments, denoted by various
colors.}
\label{fig:walk}
\end{figure}
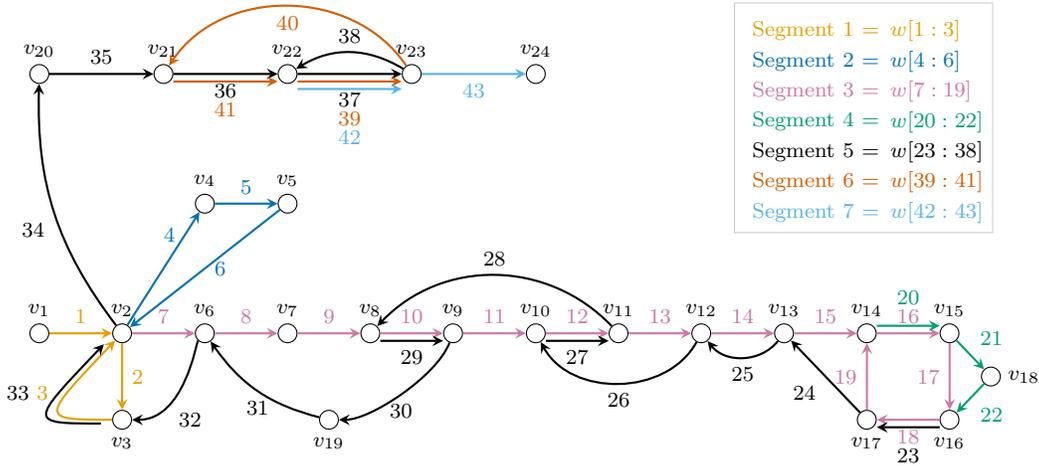

Let us formalize which paths are known to exist at the moment when a segment ends: 

\begin{lemma}
\label{lem:segment-contains-detour}
    Let $w$ be a walk and let $w[i:k]=s_\sigma$ be the $\sigma$-th segment of the walk.
    Let $y$ be the ending vertex of~$w[i:k]$, and assume that $k<|w|$, i.e., the segment finishes before the end of the walk.
    Let $w[j] = (u,v)$ in $w[i:k]$ be the first edge of the segment detour $\textnormal{det}_\sigma$. %
    There is:
    \begin{itemize}
      \item the segment detour $\textnormal{det}_\sigma$ from $u$ to~$y$ in~$G_{w,k}$  
        \item a path $p_\sigma$ from $u$ to~$y$ in $G_{w,j-1}$.
        \item a path $\overline{p_\sigma}$ from $y$ to~$u$ in $G_{w,j-1}$.
    \end{itemize}
\end{lemma}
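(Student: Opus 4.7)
The proof is straightforward once one unpacks the definitions: items~1 and~2 follow directly from \cref{def:segmentend}, while item~3 requires a short argument based on looking at the tail of the walk after its last visit to~$y$.

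For item~1, the detour $\textnormal{det}_\sigma = w[j:k]$ is a subwalk of $w[:k]$, so all of its edges belong to $E_{w[:k]}$; since it runs from $u$ (the source of $w[j]$) to $y$ (the target of $w[k]$), it gives a walk, hence a path, from $u$ to $y$ in $G_{w,k}$. For item~2, the existence of $p_\sigma$ is precisely what the second bullet of \cref{def:segmentend} asserts.

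For item~3, if $y = u$ the empty path suffices, so I assume $y \neq u$. Because $y$ is reachable from $u$ in $G_{w,j-1}$, the vertex~$y$ must occur in some edge of $w[:j-1]$, i.e., it is a vertex of $G_{w,j-1}$. Since $w$ is a directed walk originating at~$s$, this forces either $y = s$, or $y$ to appear as the target of some edge $w[m]$ with $1 \leq m \leq j-1$ (at the very least the first time the walk arrives at~$y$). Let $m^* \in \{0, 1, \ldots, j-1\}$ be the largest index such that $y$ is the target of $w[m^*]$, with the convention that $m^* = 0$ when $y = s$ and $y$ is never a target in $w[:j-1]$. Then the subwalk $w[m^*+1:j-1]$ starts at~$y$ (either because $m^* = 0$ and $y = s$, or because the source of $w[m^*+1]$ equals the target of $w[m^*]$, which is~$y$) and ends at the target of $w[j-1]$, which is the source of $w[j]$, namely~$u$. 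All its edges lie in $G_{w,j-1}$, so it yields the required path $\overline{p_\sigma}$ from $y$ to~$u$. The corner case $m^* = j-1$ cannot occur, for it would force $y$ to be the target of $w[j-1] = $ source of $w[j] = u$, contradicting $y \neq u$.

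I do not foresee any substantial obstacle: the real content is the simple observation that the tail of the walk after its last visit to~$y$ witnesses a walk from $y$ back to the current endpoint~$u$ in~$G_{w,j-1}$. Everything else is bookkeeping around the edge cases $y = u$, $y = s$, and $m^* = j-1$, each of which is immediate.
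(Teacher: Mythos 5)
Your proof is correct and uses essentially the same approach as the paper: both locate a position in $w[:j-1]$ at which the walk stands at $y$, and take the suffix of $w$ from that position through $w[j-1]$ (which ends at $u$) as the witnessing walk for $\overline{p_\sigma}$. The only cosmetic difference is that the paper identifies such a position via the last edge of $p_\sigma$, while you take the last visit of $y$ before step $j$; both are sound, and the edge-case bookkeeping you carry out is equivalent to the paper's.
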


Notice that we may have $y = u$, and then $p_\sigma$ and $\overline{p_\sigma}$ may be empty; but $\textnormal{det}_\sigma$ is never empty because it contains at least $w[j]$.
The paths are illustrated in \cref{fig:segment-paths}.

\begin{figure}[h!]
\null\hfill
\begin{tikzpicture}
    \tikzstyle{fvedge} = [->, thick, solid, blue]
    \tikzstyle{detourend} = [->, very thick, dotted, blue]
    \tikzstyle{psigma} = [->, thick, dashed, red]
    \tikzstyle{psigmabar} = [->, thin, dashed, magenta]
    \tikzstyle{socket outline} = [{rounded corners, thick, Grey, line width=8pt, opacity=0.3}]
    \node (u) at (0,0) {$u$};
    \node (y) at (0,-3) {$y$};

    \node[inner sep=.05cm] (z1) at (1.5,-.75) {};  %

    \draw[socket outline] [rounded corners](u) -- (1.5,-.75) -- node[right,opacity=1] {$\det_\sigma$ (detour)} +(0,-1.5)  -- (y);

    \draw[fvedge] (u) -- node[inner sep=0.05cm, midway, above right] {$w[j]$ (first visit)} (z1);
    \draw[detourend] (z1) -- +(0,-1.5) -- node[inner sep=0.05cm, pos=.7, below right=-0.05cm and .1cm] {$w[k]$ (end)}  (y);

    \draw[psigma] (u) edge[bend left=0] node[inner sep=0.05cm, midway, left] {$p_\sigma$} (y);
    \draw[psigmabar] (y) edge[bend left=50] node[inner sep=0.05cm, midway, left] {$\overline{p_\sigma}$} (u);

    \node[] (n1) at (6.5,-1) {${\color{red}p_\sigma}\subseteq G_{w,j-1}$};
    \node[below right] (n2) at (n1.south west) {${\color{magenta}\overline{p_\sigma}}\subseteq G_{w,j-1}$};

    \end{tikzpicture}
    \hfill\null

    \caption{Schema illustrating the two paths $p_\sigma$ and $\overline{p_\sigma}$,
    the detour $\mathrm{det}_\sigma$, and the two edges $w[j]$ and $w[k]$ that
    are part of $\mathrm{det}_\sigma$,
    from \cref{lem:segment-contains-detour}. The edge $w[j]$ is the only edge
    of the detour which is guaranteed to
    be a first-visited edge.}
    \label{fig:segment-paths}
\end{figure}
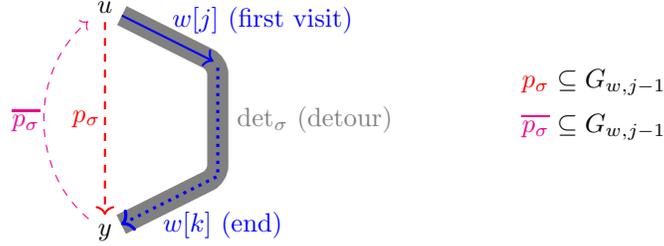

\begin{proof}[Proof of Lemma~\ref{lem:segment-contains-detour}]
  The segment detour $\textnormal{det}_\sigma$ exists by definition, and the path $p_\sigma$ is the path which is also required to exist by definition of segments. For the path $\overline{p_\sigma}$, the claim is vacuous if $y = u$, so assume that $y \neq u$. Then the path $p_\sigma$ from $u$ to~$y$ is non-empty. Let $w[j']$ be its last edge: the target of this edge is~$y$, and we have $j' \leq j-1$ because $p_\sigma$ is a path in $G_{w,j-1}$. We know that $j'\not=j-1$ because the target of $w[j']$ is $y$ while the target of $w[j-1]$ is $u$ and $y\not=u$; hence, $j'\leq j-2$.
   Then, $\overline{p_\sigma} = w[j'+1:j-1]$ is a subwalk of~$w$ from $y$ to~$u$, hence in particular $\overline{p_\sigma}$ witnesses that there is a path from~$y$ to~$u$ in~$G_{w,j-1}$.
\end{proof}

\begin{example}
    Referring back to \cref{fig:walk}, note that 
    $\textcolor{csegm1}{p_1}$, $\textcolor{csegm1}{\overline{p_1}}$, 
    $\textcolor{csegm2}{p_2}$, $\textcolor{csegm2}{\overline{p_2}}$, 
    $\textcolor{csegm3}{p_3}$, $\textcolor{csegm3}{\overline{p_3}}$,
    $\textcolor{csegm5}{p_5}$, and $\textcolor{csegm5}{\overline{p_5}}$, are empty paths from a vertex to itself. 
    The path $\textcolor{csegm4}{p_4}$ consists of the edge $(v_{15}, v_{16})$ and 
    $\textcolor{csegm4}{\overline{p_4}}$ goes from $v_{16}$ to $v_{15}$ via $v_{17}$ and $v_{14}$. 
    The path $\textcolor{csegm6}{p_6}$ is the singe edge $\textcolor{csegm5}{(v_{23}, v_{22})}$ and 
    $\textcolor{csegm6}{\overline{p_6}}$ is the reverse edge. 
    There are no paths $\textcolor{csegm7}{p_7}$ and $\textcolor{csegm7}{\overline{p_7}}$ because \textcolor{csegm7}{Segment 7} does not finish before the end of the walk.
\end{example}

In the sequel for each $1 \leq \sigma < \xi$ we denote by $p_\sigma$ and $\overline{p_\sigma}$ a pair of paths obtained by applying \cref{lem:segment-contains-detour} (if there are multiple valid choices for $p_\sigma$ and $\overline{p_\sigma}$, we choose arbitrarily).

\myparagraph{Achievable differences.}
For a walk $w$ and segment $\sigma$, we denote by $\Delta_q(w, \sigma)$ the value $(|\det_\sigma| - |p_\sigma|) \bmod q$
and call this the\footnote{%
Note that there could be several choices for the difference achievable by~$\sigma$ depending on the arbitrary choices made earlier, e.g., depending on the first edge $j$ for the segment detour, and on the choice of paths $p_\sigma$ and $\overline{p_\sigma}$; nevertheless, we fix 
one single value $\Delta_q(w, \sigma)$ according to the choices made.}
\emph{difference achievable by~$\sigma$}. 
We often drop the subscript $q$ when clear from context. Intuitively, we know that we can modify the walk $w$ to replace the subwalk $\det_\sigma$ from $u$ to $y$ by the existing path $p_\sigma$ that goes from $u$ to~$y$, and this change subtracts $\Delta_q(w,\sigma)$ from the length of the walk modulo~$q$; we will make this formal in the sequel.
It should also be intuitively clear that achievable differences can be assumed to be non-zero in an edge-minimum walk: intuitively, an achievable difference of 0 means that we can replace the detour $\det_\sigma$ by the existing path $p_\sigma$ that has the same endpoints, without changing the length of the walk modulo $q$. We know that this change does not harm the edge-minimality of $w$ because $p_\sigma$ consists of already visited edges.
What is more, having too many segments with the same achievable difference is useless, because we can intuitively replace $\det_\sigma$ by $p_\sigma$ and compensate the effect of this change by doing similar substitutions in earlier segments. 
Then, thanks to Lagrange's theorem on the additive group $\Z{q}$, the number of segments can in fact be bounded by $1+\log_2 q$.
We formalize this in the following lemma, which is the main result of this section:

\begin{lemma}[Segment Decomposition Lemma]\label{lem:bounding-number-of-segments}
For every $st$-walk $w$ with length $r\bmod q$, there is another $st$-walk of
  length $r \bmod q$ using only edges from $w$ whose number of segments is at most $1+\log_2 q$. 
\end{lemma}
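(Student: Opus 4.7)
The plan is to use a well-ordering argument driven by a subgroup-chain argument in the additive group $\Z{q}$. I pick $w'$ minimizing the number $\xi'$ of segments among $st$-walks of length $r\bmod q$ that use only edges from $E_w$, and denote by $\Delta_1,\dots,\Delta_{\xi'-1}$ its achievable differences. The aim is to show $\xi' \le 1+\log_2 q$.

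First, I set $H_0 = \{0\}$ and $H_i = \langle \Delta_1,\dots,\Delta_i\rangle \le \Z{q}$ for $1\le i\le \xi'-1$. By Lagrange's theorem every strict inclusion $H_{i-1} \subsetneq H_i$ at least doubles the subgroup order, so the chain $H_0 \le H_1 \le \cdots \le H_{\xi'-1}$ admits at most $\log_2 q$ strict inclusions. Hence if $\xi' - 1 > \log_2 q$, the pigeonhole principle yields an index $i$ with $H_{i-1} = H_i$, i.e.\ $\Delta_i \in H_{i-1}$, so that $\Delta_i \equiv \sum_{j<i} c_j \Delta_j \pmod{q}$ for some integers $c_j$.

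Second, I will construct a walk $w''$ with strictly fewer segments than $w'$, with length $r \bmod q$, and using only edges in $E_{w'} \subseteq E_w$, contradicting the minimality of $w'$. The construction combines two kinds of local modifications. The first is to replace the detour $\det_i$ in segment $i$ by the path $p_i$, whose edges all lie in $G_{w',j_i-1}$ by \cref{lem:segment-contains-detour}, shifting the length by $-\Delta_i \bmod q$ and introducing no new edges. The second is, for each $j<i$ with $c_j \ne 0$, to insert at vertex $u_j$ a nonnegative number of extra traversals of the two closed walks $\det_j \cdot \overline{p_j}$ and $p_j \cdot \overline{p_j}$, both of which sit inside $G_{w'}$. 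Since $\Delta_j$ is exactly the difference of these two cycle lengths, every integer multiple of $\Delta_j$ modulo $q$ lies in the subgroup of $\Z{q}$ generated by the two cycle lengths and is realizable via nonnegative insertion multiplicities (adding $q$ to a coefficient preserves its residue and can be iterated until the coefficient becomes nonnegative). Choosing the multiplicities so that the cumulative adjustment equals $\sum_{j<i} c_j \Delta_j \equiv \Delta_i \pmod{q}$ cancels the shift from replacing $\det_i$, yielding $|w''| \equiv r \pmod q$.

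The main obstacle will be the final step: verifying that $w''$ genuinely has strictly fewer segments than $w'$. The intuition is that replacing $\det_i$ by $p_i$ destroys the first-visited edge that forced segment $i$ to end at position $k_i$ (every edge of $p_i$ was visited before position $j_i$), so the corresponding segment boundary disappears and what was segment $i$ is absorbed into the surrounding material; and since every compensation loop is a closed walk over already-visited edges, it contains no first-visited edge and cannot create new segment boundaries. Making this global claim rigorous—by recomputing the segment decomposition of $w''$ and tracking how first-visit positions of later edges shift when compensation loops are inserted into the prefix—is the technical heart of the argument.
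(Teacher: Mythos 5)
Your subgroup-chain argument and the compensation-loop idea match the paper's, but you minimize the wrong quantity, and the gap you flag at the end is genuine and is exactly where the argument breaks.

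You pick $w'$ minimizing the \emph{number of segments} and then need to show that the modified walk $w''$ has strictly fewer segments. This is not something that follows from your construction, and it is likely false in general. The segment decomposition is a global greedy object computed left-to-right: replacing $\det_i$ by $p_i$ and inserting compensation loops at various earlier points shifts the positions of all later edges and changes which occurrences are first-visits. A first-visit that used to occur inside $\det_i$ may now occur later in the walk, and this can \emph{create} a new segment end downstream; conversely, removing the witness $w[j_i]$ does not necessarily kill the segment boundary at $k_i$, because the segment-end condition only requires that \emph{some} first-visited edge $w[j]$ admit a path from its source to the target of $w[k]$, and another $j$ may serve. So ``segment boundary disappears, nothing new is created'' is not a local fact, and you have no tool to control the recomputed decomposition of $w''$. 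The paper's own proof explicitly disclaims control over the modified walk's segment decomposition (``we do not claim to know anything about the segment decomposition of $w'$, which could be completely different from that of $w$'').

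The paper avoids this by minimizing a \emph{different} well-founded quantity: the lexicographic first-visit timestamp (how far from the end each first-visited edge occurs). After fixing an edge-minimum set $E'$ and restricting to walks using exactly $E'$, the modified walk $w''$ provably still visits all of $E'$ (by edge-minimality of $E'$), and the only change in the timestamp is that the edges that were first-visited inside $\det_\sigma$ are now first-visited strictly later, while all later first-visits keep their distance from the end. That is a local, checkable strict decrease in the timestamp preorder, contradicting timestamp-minimality without ever needing to recompute or compare segment decompositions. Once timestamp-minimality forces the subgroup chain to be strictly increasing, the $1+\log_2 q$ bound follows from Lagrange exactly as you say. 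If you want to salvage your plan, you should replace ``minimize the number of segments'' with the timestamp preorder (or some other potential function for which the local replacement demonstrably decreases it), and you should add the step that $w''$ uses the same edge set so that the timestamp tuples are comparable.
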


The rest of this section is devoted to proving the Segment Decomposition Lemma; we will then use it in the next section to bound the cutwidth of some edge-minimum solution.

Recall that a \emph{preorder} over a set is a binary relation that is both reflexive and transitive (this is a weaker requirement than non-strict partial orders which are additionally required to be antisymmetric).
To prove the Segment Decomposition Lemma (\cref{lem:bounding-number-of-segments}), we define a preorder over  the edge-minimum walks that are using a specific set of edges. 
Intuitively, the preorder favors walks which do their first visit of edges as late as possible relative to the end of the walk.
More precisely, the preorder criterion asks us to minimize the number of remaining steps of the walk at the moment where the last first-visited edge is visited; then break ties by minimizing the number of remaining edges at the moment where the second-to-last first-visited edge is visited; and so on. This is designed to ensure that replacing a detour $\det_\sigma$ by $p_\sigma$ improves the walk according to the preorder.

Let us fix from now on the graph $G = (V, E)$, the source $s$ and target $t$, and the integers $r$ and $q$. 
For any subset of edges $E' \subseteq E$, we define an \emph{$E'$-walk} to mean an $st$-walk of length $r\bmod{q}$ in $G$ which uses
precisely the edges of~$E'$. Let us then define the order:

\begin{definition}
\label{def:order}
Let $E' \subseteq E$ be a subset of edges and let $m := |E'|$. Let $w$ be an $E'$-walk, and let $i_1, \ldots, i_m$ be the indexes of the first-visited edges in ascending order, i.e., $w[i_1], \ldots, w[i_m]$ are the first-visited edges of $w$ and $i_1 < \cdots < i_m$. The \emph{first-visited timestamp} of $w$ is then the $m$-tuple $(|w|-i_m, \ldots, |w|-i_1)$.

We define as follows a preorder relationship $\trianglelefteq$ called the
  \emph{timestamp preorder} on $E'$-walks. 
Let $w$ and $w'$ be two $E'$-walks, and let $t_w$ and $t_w'$ be their first-visited timestamps, respectively. Then we have $w \trianglelefteq w'$ if we have $t_w \leq t_w'$ in lexicographic ordering.
\end{definition}

In other words, the timestamp preorder compares two walks by the number of remaining edges to traverse when visiting the last first-visited edge, then the second-to-last, and so on. Note that the timestamp preorder is not antisymmetric
because two different walks on the same set of edges may happen to have the same first-visited timestamp (i.e., they visit first-visited edges at the same positions from the end, even though the identity of these edges and the revisits may be different).
We then define:
\begin{definition}
  \label{def:timestampmin}
If $w$ is an $E'$-walk, we say $w$ is \emph{timestamp-minimum} if, for every $E'$-walk $w'$, we have $w \trianglelefteq w'$.
\end{definition}

Note that, whenever there is an $E'$-walk, then there is a timestamp-minimum $E'$-walk, but it is not necessarily unique because the timestamp preorder is not antisymmetric.

We now turn to the proof of
\cref{lem:bounding-number-of-segments}: we want to show that a timestamp-minimal
walk has at most $1 + \log_2 q$ segments. The intuition is the following.
  We will consider the successive subgroups of $\Z{q}$ generated by the achievable
  differences of the successive segments, and show that these must be a strictly
  increasing subsequence, so that the bound follows by Lagrange's theorem. To do
  this, we will show an intermediate claim (dubbed (*) in the formal proof
  below) according to which we can modify walks to augment their length by any combination of achievable differences of previous segments, while still using the same edges. Thanks to this claim, assuming by
  contradiction that there is a segment $s_\sigma$ whose achievable difference is already
  achievable using the preceding segments, we will rewrite the segment to
  replace its detour $\textnormal{det}_\sigma$ by the path $p_\sigma$
  from \cref{lem:segment-contains-detour}, and use the claim to modify the walk
  and fix its length. We will show that this modification yields a walk which is
  smaller in the timestamp preorder, contradicting the minimality of~$w$.

We are now ready to conclude the section and give the formal proof of the  Segment Decomposition Lemma (\cref{lem:bounding-number-of-segments}): 
\begin{proof}[Proof of \cref{lem:bounding-number-of-segments}]
Let $E'$ be an optimal solution and let $w$ be a timestamp-minimum $E'$-walk. Our goal is to show that $w$ has at most $1+\log_2 q$ segments. 

Let $w = s_1 \cdots s_\xi$ be the segment decomposition of~$w$.
For each $1 \leq \sigma < \xi$, i.e., for each segment $s_\sigma$ except the last, 
we let $S(w, \sigma)$ denote the subgroup of $\Z{q}$ generated by the achievable differences of the preceding segments, i.e., $S(w, \sigma)$ is generated by $\{\Delta(w, \sigma')\mid \sigma' \leq \sigma\}$.
For convenience, we write $S(w,0)$ to denote the singleton group containing only the identity $0_{\Z{q}}$.
By definition, as $\sigma$ increases, the $S(w,\sigma)$ form a sequence $\mathcal{S}$ so that each subgroup is contained in the next subgroup in the sequence.
The idea of the proof is to show that each subgroup in this sequence $\mathcal{S}$ must in fact be strictly contained in the next subgroup in~$\mathcal{S}$.

The proof consists of three steps. First, we characterize which walk lengths can be achieved using the first $\sigma$ segments of the walk $w$, using the notion of achievable differences: we show this as an invariant, which we dub (*). Second, using this invariant, we show that each subgroup in sequence $\mathcal{S}$ must be strictly contained in the next subgroup in $\mathcal{S}$: two consecutive subgroups in $\mathcal{S}$ 
that are equal would show a violation of the timestamp-minimality of~$w$. Third, we use the latter claim to conclude that there are at most $1 + \log_2 q$ segments thanks to Lagrange's theorem.

\medskip

\textbf{Step 1: An invariant on achievable differences.}
Let us first prove by induction on~$\sigma$ the following invariant, dubbed (*): for each $0 \leq \sigma < \xi$, letting $y_\sigma$ be the end of the segment $s_\sigma$ (or $y_\sigma = s$ for $\sigma = 0$), for each $r' \in S(w,\sigma)$, there is a walk from $s$ to~$y_\sigma$ using exactly the edges of $s_1 \cdots s_\sigma$ and whose length modulo $q$ is precisely $|s_1 \cdots s_\sigma| + r'$.

The base case of the induction is trivial with the empty path from $s$ to itself, so let us show the inductive case. 
Fix $1 \leq \sigma < \xi$. 
Let $r'$ be the element of $S(w,\sigma)$ to achieve: 
we can write it as 
$r'' + c \Delta(w,\sigma) \bmod q$
for some $r''\in S(w,\sigma-1)$ and $c \in \mathbb{N}$, and we can further ensure that $c <q$.
Using the inductive hypothesis, 
we obtain a walk $w_{r''}$ from $s$ to $y_{\sigma-1}$ using exactly the edges of $s_1 \cdots s_{\sigma-1}$ and whose length modulo~$q$ is $|s_1 \cdots s_{\sigma-1}| + r''$. 
Recalling the definition of $\overline{p_\sigma}$ and $p_\sigma$ from \cref{lem:segment-contains-detour}, 
let us build $s_\sigma' = s_\sigma (\overline{p_\sigma} p_\sigma)^q$. Note that $s_\sigma'$ is still a walk from $y_{\sigma-1}$ to $y_\sigma$ because $s_\sigma$ goes from $y_{\sigma-1}$ to $y_\sigma$, $\overline{p_\sigma}$ goes from $y_\sigma$ to the vertex $u$ defined in~\cref{lem:segment-contains-detour}, and $p_\sigma$ goes from $u$ back to $y_\sigma$.
Furthermore, $|s_\sigma'|\equiv |s_\sigma|\bmod q$ because the length of the inserted part is a multiple of~$q$. 

Now we build $s_\sigma''$ from $s_\sigma'$ to ensure that $|s_\sigma''| \equiv |s_\sigma| + c \Delta(w,\sigma) \bmod q$, simply by replacing $c$ occurrences of $p_\sigma$ by $\det_\sigma$ (which have the same endpoints) in~$s_\sigma'$.
Then we consider the walk $w' := w_{r''} s_\sigma''$. It is a walk from $s$ to~$y_\sigma$
whose length differs from $s_1 \cdots s_{\sigma}$ by adding $r' = r'' + c \Delta(w,\sigma)$.
Further, it uses exactly the same edges as $s_1 \cdots s_\sigma$.
Indeed, all edges used in $w'$ also occur in~$w$. Conversely,
all edges of $s_1 \cdots s_{\sigma-1}$ occur in $w_{r''}$ by the inductive hypothesis, and $s_\sigma$ is a prefix of $s_\sigma''$ so all its edges appear in $s_\sigma''$.
Hence, the walk $w'$ allows us to conclude the proof of invariant (*).  (Note that we do not claim to know anything about the segment decomposition of~$w'$, which could be completely different from that of $w$.)

\medskip
\textbf{Step 2: All subgroups are different thanks to timestamp-minimality.}
Let us now show that each subgroup in sequence $\mathcal{S}$ must be strictly contained in the next subgroup in $\mathcal{S}$, thanks to the timestamp-minimality of~$w$. Indeed, let us assume by contradiction that this is not true, so that there is $2 \leq \sigma < \xi$ such that $S(w,\sigma)$ is not a strict superset of~$S(w,\sigma-1)$. By definition, as $S(w,\sigma-1) \subseteq S(w, \sigma)$,
this must mean that $S(w,\sigma) = S(w,\sigma-1)$ and that $\Delta(w,\sigma) \in S(w,\sigma-1)$. We will use this to build a walk that contradicts the timestamp-minimality of~$w$.
Recall that $w = s_1 \cdots s_\xi$.
We will modify $w$ in two ways: first replace $\det_\sigma$ by $p_\sigma$ in $s_\sigma$, then use invariant (*) to change the first $\sigma-1$ segments of the walk
in order to adjust the remainder. 
Specifically, let $y_{\sigma-1}$ be the end vertex of $s_{\sigma-1}$.
From our assumption that $\Delta(w,\sigma) = |\det_\sigma| - |p_\sigma|$ is in $S(w, \sigma-1)$, 
we know by invariant (*)
that there is a walk $w'$ from $s$ to $y_{\sigma-1}$ whose length modulo $q$ is $|s_1 \cdots s_{\sigma-1}| + |\det_\sigma| - |p_\sigma|$ and which uses exactly the edges of $s_1 \cdots s_{\sigma-1}$.  

Now, let us build
$w''$, which is obtained from $w$ by replacing the first $\sigma-1$ segments of $w$
by~$w'$, and by replacing $\det_\sigma$ by $p_\sigma$ in~$s_\sigma$. Specifically, let $\hat{s}_\sigma$
be the portion of $s_\sigma$ before $\det_\sigma$. Then $w''=w' \hat{s}_\sigma p_\sigma s_{\sigma+1} \cdots s_\xi$. Note that $w''$ is still an $st$-walk because $p_\sigma$ is a path with the same endpoints as $\det_\sigma$, and $w'$ is a walk with the same endpoints as $s_1 \cdots s_{\sigma-1}$. Furthermore, this walk $w''$ uses exactly the edges of~$E'$: Indeed $w'$ uses exactly the same edges as $s_1 \cdots s_{\sigma-1}$, the path $p_\sigma$ uses edges of 
$s_1 \cdots s_{\sigma-1}$ by definition of a segment,
and the other edges of the walk are already edges of~$w$. Thus $w''$ uses no other edges than $E'$, and by the edge-minimality of $w$, $w''$ must use all of the edges of $w$.
What is more, $|w''| \equiv |w| \bmod q$ since the replacement of $\det_\sigma$ by $p_\sigma$ and the replacement of $s_1 \cdots s_{\sigma-1}$ by $w'$  cancel each other out.

It remains to show that we do not have $w \trianglelefteq w''$, and we will have reached a contradiction because we assumed $w$ to be timestamp-minimum. We recall that $w$ ends with $\hat{s}_\sigma \det_\sigma s_{\sigma+1} \cdots s_\xi$ while 
$w''$ ends with $\hat{s}_\sigma p_\sigma s_{\sigma+1} \cdots s_\xi$; what is more, in $w$ this is preceded by $s_1 \cdots s_{\sigma-1}$ and in $w''$ this is preceded by $w'$ with both using exactly the same edges by invariant (*). This ensures that exactly the same edges are first-visited in $p_\sigma s_{\sigma+1} \cdots s_\xi$ and $\det_\sigma s_{\sigma+1} \cdots s_\xi$.
But we know that, in $w''$, no edge in $p_\sigma$ is first-visited, because all these edges occur in $w'$. By contrast, in $w$, there is at least one edge in $\det_\sigma$
which is first-visited.
Let $E''$ be the non-empty set of such edges.
In $w''$, all edges of $E''$ are still visited (because $w''$ visits all edges of~$E'$ as we explained in the previous paragraph),
but they are first-visited in $s_{\sigma+1} \cdots s_\xi$, i.e., each such edge $e \in E''$ is visited closer to the end in $w''$ than in~$w$. The remaining first-visited edges in the suffix $p_\sigma s_{\sigma+1} \cdots s_\xi$ of $w''$ are first-visited at the same position (from the end) as $w$. 
By definition of the timestamp ordering, this implies that we do not have $w \trianglelefteq w''$, which contradicts the timestamp-minimality of~$w$. Hence, this shows by contradiction that each subgroup in sequence $\mathcal{S}$ must be strictly contained in the next subgroup in~$\mathcal{S}$. 

\medskip
\textbf{Step 3: Concluding the proof.}
We have now shown that, in our timestamp-minimum walk $w$, each $S(w,\sigma-1)$ is a proper subgroup of $S(w,\sigma)$. By Lagrange's theorem, this implies that 
$S(w, \sigma-1)$ contains at most half of the elements of $S(w, \sigma)$ for all $\sigma<\xi$. (This reasoning does not apply to the last segment, because it may be the case that $s_\xi$ is not useful to achieve the right remainder and is only here to reach $t$.)
Hence, after $\log_2 q$ segments, we must have $S(w,\log_2 q)=\Z{q}$, which means that there are at most $1 + \log_2 q$ segments (to account for the last segment).
This establishes the claim and concludes the proof.
\end{proof}

\section{Number of Segments and Cutwidth Bounds}\label{sec:walk}
In this section, we continue our proof of \cref{th:main-theorem} by showing that, for any walk $w$, the number of segments in the decomposition of the previous section gives a bound on the cutwidth of~$G_w$ up to a constant factor. This result is completely independent from the definition of the \ewm problem. Formally, we show:

\begin{proposition}[Segment Cutwidth Bound]
\label{prp:segment-bound}
    For any walk $w$, letting $\xi$ be the number of segments in its segment decomposition, the cutwidth of $G_w$ is at most $3 \xi$.
\end{proposition}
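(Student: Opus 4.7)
The plan is to exhibit an explicit total ordering of $V_w$ whose cutwidth is at most $3\xi$, by attributing every edge that crosses a cut to a unique segment and showing that each segment contributes at most three such crossing edges.

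First I would refine each segment $s_\sigma$ into finer pieces that I will call \emph{chunks}: intuitively, a chunk ends either when the walk takes a first-visited edge whose target is a brand-new vertex, or when it closes a small revisit loop internal to the segment. Chunks thus partition each segment and strictly refine the segment decomposition. Given this refinement, I would place the vertices of $G_w$ in the ordering so that vertices first-visited inside $s_\sigma$ precede those first-visited inside $s_{\sigma+1}$, and inside a single segment they appear in the order dictated by their enclosing chunks, with an internal ordering inside each chunk reflecting arrival time along the walk.

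I would then analyse an arbitrary cut $(V_-,V_+)$ respecting this ordering and charge its crossing edges to segments. For a segment $s_\sigma$ whose edges straddle the cut, I expect only three ``pivotal'' edges of $s_\sigma$ to cross: one encoding the forward progress of the walk inside the chunk that contains the cut, one witnessing that the detour $\textnormal{det}_\sigma$ eventually closes back to an earlier vertex of $V_-$ (whose existence relies on the paths $p_\sigma$ and $\overline{p_\sigma}$ from \cref{lem:segment-contains-detour}), and one transition edge by which the walk enters or leaves $s_\sigma$. All other edges of $s_\sigma$ should then lie entirely on one side of the cut.

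The main obstacle will be ruling out that revisits inside a single segment create further crossing edges. Within a segment the walk can alternate many times between first-visited edges and revisits (as illustrated by Segment~5 of \cref{fig:walk}), and naively each alternation could add a crossing. I plan to exploit the minimality enforced by \cref{def:segmentend}: since the segment does not end earlier, no first-visited edge $w[j]=(u,v)$ of $s_\sigma$ admits a path from $u$ in $G_{w,j-1}$ to the target of any subsequent edge of the segment. Combined with the chunk-based ordering, I expect this constraint to force every revisit inside a chunk to land on vertices already placed in $V_-$, so that such revisits can be absorbed into the three pivotal edges of the enclosing chunk rather than contributing new crossings.
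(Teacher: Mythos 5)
Your chunk refinement is the right starting point, but your proposed ordering---place all vertices first-visited in $s_\sigma$ before all vertices first-visited in $s_{\sigma+1}$---is genuinely different from the paper's and does not yield a $3\xi$ bound. The failure mode is precisely the segment-5-like pattern you flag: a segment can contain an unbounded number of normal chunks, each starting and ending at already-visited vertices but introducing new intermediate vertices along the way. Under your ordering those new intermediates all land \emph{after} every earlier vertex, so each such chunk crosses a cut twice (once forward into the new block, once backward to the already-visited endpoint), and there is no bound on how many chunks a single segment holds. Concretely: take $w$ to be the simple path $s, v_1, \ldots, v_n$, then $v_n \to b_1 \to v_{n-2}$, then a revisit $v_{n-2} \to v_{n-1}$, then $v_{n-1} \to b_2 \to v_{n-4}$, a revisit $v_{n-4} \to v_{n-3}$, then $v_{n-3} \to b_3 \to v_{n-6}$, and so on. This walk has only two segments, yet with your ordering (all $v_i$ and $b_1$ before all $b_2, b_3, \ldots$) the cut just after $b_1$ is crossed by both edges of every later chunk $v_\ast \to b_i \to v_{\ast\ast}$, giving $\Omega(n)$ crossings. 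The paper avoids this by placing the intermediate vertices of a normal chunk \emph{between} the chunk's two already-visited endpoints $u$ and $v$ in $\prec$ (monotonically from $u$ to $v$), so the chunk forms a contiguous monotone run and crosses any cut at most once. This necessarily interleaves vertices of later segments among earlier ones, violating your per-segment constraint; the same is true of cycle chunks, whose intermediates go right after the attachment vertex $v$.

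Your segment-minimality observation is correct as far as it goes, but it cannot rescue this ordering: the problematic crossings in the example above are all \emph{first-visited} edges of new chunks, not revisits. What the minimality condition actually buys you (and what the paper uses) is two structural facts: a normal chunk that crosses a cut forward must end its segment (\cref{lem:normal-forwards}), and after a non-final chunk backtracks to an earlier SCC, no later non-final chunk of the same segment can again cross the cut. Those facts, combined with the interleaved ordering and the SCC analysis of \cref{lem:sccstrut,lem:ordering-consistent-with-topological-order-scc}, are what give the per-segment bound of $3$. So the missing idea is the interleaved, chunk-monotone ordering; without it the charging argument fails at the first step.
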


We remark that there is no converse of this result: a walk formed of a succession of cycles connected by single edges will have an unbounded number of segments but has cutwidth 2. %
Combining \cref{prp:segment-bound} with
\cref{lem:bounding-number-of-segments} establishes the following:
\begin{corollary}
  \label{cor:combined}
For every $st$-walk $w'$ with length $r\bmod q$, there is another $st$-walk $w$
  of length $r \bmod q$ using
  only edges from $w'$ such that the cutwidth of~$G_w$ is at most $3+3\log_2 q$.
\end{corollary}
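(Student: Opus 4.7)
The plan is to derive this as an immediate chaining of the two main results that precede it, since the statement is essentially their composition. First I would apply \cref{lem:bounding-number-of-segments} (the Segment Decomposition Lemma) to the given $st$-walk $w'$ of length $r \bmod q$. This produces an $st$-walk $w$ which (i) has length $r \bmod q$, (ii) uses only edges of $w'$ (i.e., $E_w \subseteq E_{w'}$), and (iii) has a segment decomposition with at most $1 + \log_2 q$ segments.

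Next, I would invoke \cref{prp:segment-bound} (the Segment Cutwidth Bound) on this walk $w$: since its number of segments $\xi$ satisfies $\xi \leq 1 + \log_2 q$, the proposition yields that the cutwidth of $G_w$ is at most $3\xi \leq 3(1 + \log_2 q) = 3 + 3\log_2 q$, which is precisely the claimed bound. Note that properties (i) and (ii) are preserved from the first step, and property (iii) feeds directly into the cutwidth bound, so no additional argument is needed. There is no real obstacle here: both required inputs (same modular length, subset of edges, bounded segment count) are exactly what \cref{lem:bounding-number-of-segments} provides, and \cref{prp:segment-bound} is stated for arbitrary walks, so it applies to $w$ without modification. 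The corollary is therefore a one-line composition of the two results.
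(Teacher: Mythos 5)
Your proof is correct and matches the paper's approach exactly: the paper itself states this corollary as an immediate consequence of combining \cref{lem:bounding-number-of-segments} with \cref{prp:segment-bound}, and your one-line composition is precisely that combination.
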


This implies the following bound on the cutwidth of optimal solutions to the
\ewm problem, on which our algorithm relies:
\begin{corollary}
  \label{cor:optbound}
  For any graph $G = (V, E)$, terminals $s$,$t$, and
non-negative integers $r<q$, every optimal solution to the \ewm problem on $G$,
  $s$, $t$, $r$, and $q$ has cutwidth at most $3+3\log_2 q$.
\end{corollary}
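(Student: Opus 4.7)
The plan is to derive this corollary as a short consequence of \cref{cor:combined} together with the definition of optimality. The key observation is that an edge-minimum set $E'$ must coincide exactly with the edge set of any walk witnessing it as a candidate solution.

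First, I would let $E'$ be an optimal solution to the given \ewm instance. By definition there exists some $st$-walk $w'$ of length $r\bmod q$ in $(V, E')$. The edges actually used by $w'$ form a subset $E_{w'}\subseteq E'$, and the graph $(V, E_{w'})$ still contains $w'$ as an $st$-walk of length $r\bmod q$, so $E_{w'}$ is itself a candidate solution. By the edge-minimality of $E'$ we must have $|E_{w'}|\geq |E'|$, which combined with $E_{w'}\subseteq E'$ forces $E_{w'}=E'$.

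Next, I would apply \cref{cor:combined} to $w'$. This produces an $st$-walk $w$ of length $r\bmod q$ using only edges from~$w'$, with $G_w$ of cutwidth at most $3+3\log_2 q$. In particular $E_w\subseteq E_{w'}=E'$, and $(V, E_w)$ admits $w$ as an $st$-walk of length $r\bmod q$, so $E_w$ is again a candidate solution. Invoking optimality of $E'$ one more time yields $E_w=E'$. Hence $G_w=(V_w, E_w)$ is, up to isolated vertices, exactly the subgraph carved out by~$E'$: taking an ordering of $V_w$ realizing the cutwidth of $G_w$ and extending it by appending all vertices in $V\setminus V_w$ (which contribute no crossing edges since they are isolated in $(V, E')$), we obtain an ordering of $V$ witnessing that the cutwidth of the optimal solution $E'$ is at most $3+3\log_2 q$.

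I do not foresee any real obstacle: the argument is purely bookkeeping on top of \cref{cor:combined}. The only point that deserves care is the double use of the ``walk saturates its optimal edge set'' observation, first for $w'$ (to ensure the hypothesis $E_{w'}=E'$ needed to apply \cref{cor:combined} meaningfully) and then for $w$ (to conclude that the cutwidth bound on $G_w$ transfers to $E'$ itself).
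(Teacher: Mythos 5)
Your proof is correct and takes essentially the same route as the paper: apply \cref{cor:combined} to a witnessing walk and use subset-minimality of the optimal solution to conclude that the resulting walk uses precisely the edges of the solution, so the cutwidth bound transfers. Your extra preliminary step establishing $E_{w'}=E'$ is harmless but unnecessary (\cref{cor:combined} applies to any walk $w'$ of the right modularity), while your closing remark about extending the ordering to the isolated vertices of $V\setminus V_w$ is a small but welcome refinement the paper glosses over when it writes $G_w=(V,E')$.
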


\begin{proof}
  Let $E' \subseteq E$ be any optimal solution, and consider any $st$-walk $w'$ of
  length $r \bmod q$ witnessing that $E'$ is a solution. By \cref{cor:combined},
  there is an $st$-walk $w$ using only edges from~$w'$ which achieves the same
  remainder and such that $G_w$ satisfies the cutwidth bound.
  As $E'$ is subset-minimum, we know that $w$ must use precisely the edges of
  $E'$, so that in fact $G_w = (V, E')$, so $E'$ obeys the cutwidth bound.
\end{proof}

To prove \cref{prp:segment-bound}, 
we need several intermediate steps. First, we define the notion of \emph{chunk} of a walk, which is a contiguous sequence of first-visited edges whose intermediate vertices are also first-visited.
Second, we define the ordering $\prec$ on the vertices of $G_w$ along which the
cutwidth bound will be shown: this order is defined using the notion of chunks,
intuitively because all vertices of a chunk will be ordered relative to the
already-visited vertices that are the endpoints of the chunk (also
distinguishing special cases like \emph{cycle chunks} and \emph{tadpole chunks}).
Third, we show that the cutwidth along the order $\prec$ is bounded by $3\xi$, by showing for each segment that the number of times its first-visited
edges can cross the cut is at most~$3$.
Throughout this section, we fix an arbitrary walk~$w$ in a graph~$G$, and call $s$ and $t$ its source and target vertices.

\myparagraph{Chunks.}
We define first-visited vertices similarly to first-visited edges:
for every vertex $v \in V_w$ occurring in~$w$, we say $v$ is \emph{first-visited
at $w[i]$} if $w[i]$ is the first edge in which~$v$ occurs. Note that a vertex
$v$ which is first-visited at $w[i]$ always occurs as the target vertex of~$w[i]$, except in the specific case of the first edge $w[1]$ where both the source $s$ and the target vertex are first-visited at $w[1]$.
Of course, every vertex of~$V_w$ is first-visited at exactly one position.
Further, if a vertex $v$ is the target vertex of an edge $w[i]$ and is first-visited at~$w[i]$, then both $w[i]$ and $w[i+1]$ (if it exists) must be first-visited edges.

We can now define \emph{chunks}:
\begin{definition}
    \label{def:chunk}
A \emph{chunk} of~$w$ is a maximal subwalk $w[i:j]$ of~$w$ where all edges are first-visited, and where, for every $i \leq k < j$ (if any), the target vertex of $w[k]$ is first-visited at~$w[k]$. 
\end{definition}
In other words, a chunk is a maximal sequence of one or more consecutive first-visited edges such that the first and last vertex are not first-visited (unless they are extremities of the whole walk) but all intermediate vertices are first-visited.
A chunk may consist of a single first-visited edge $w[i]$ between two
already-visited vertices, in which case there are no intermediate vertices. 

The \emph{first} and \emph{last} vertices of chunk $w[i:j]$ are the source vertex of~$w[i]$, and the target vertex of~$w[j]$, respectively.
Note that two successive chunks in the walk need not be separated by revisited
edges and can simply be separated by a revisited vertex: for instance, for the
length-$2$ walk $(s, s), (s, t)$, all edges are first-visited, but there are two chunks of length 1 which are separated by the revisit of the vertex $s$.

It will be useful to distinguish two special kinds of chunks. First, \emph{tadpole} chunks, which loop back on an intermediate vertex of the chunk:
\begin{definition}
    A chunk is a \emph{tadpole} if it consists of at least two edges and if,
    letting $u_1, \ldots, u_\ell$ be the successive vertices that it visits,
    with $u_1$ its first vertex and $u_\ell$ its last vertex, we have $u_\ell = u_{\ell'}$ for some $1 < \ell' < \ell$.
\end{definition}
In other words, a \emph{tadpole} is a chunk that consists of a path (containing at least one edge), followed by a cycle (possibly a self-loop).

Second, \emph{cycle} chunks, which loop back on the vertex from which they started:
\begin{definition}
    A chunk is a \emph{cycle} if its first vertex and last vertex are identical.
\end{definition}

Note that these two cases are mutually exclusive. A chunk which is neither a tadpole nor a cycle, and thus is a simple path, is a \emph{normal chunk}. 

The notion of chunk must be distinguished from the notion of segments used in the segment decomposition of the previous section, but we can notice the following connections between the two notions:
\begin{itemize}
    \item Segment ends never happen within a chunk: indeed, the intermediate vertices $y$ reached within a chunk are first-visited by definition, so there is no way to reach them except by the edge that precedes them, and thus no earlier path can reach $y$ in the walk.
    
    \item At the end of a tadpole chunk or cycle chunk, the current segment always ends. Indeed, the last edge $e'= (x,y)$ of the chunk reaches a vertex $y$ which already has an outgoing edge in the chunk: this edge is a first-visited edge $e=(y,v)$, and the empty path from $y$ to itself allows us to finish the segment at the end of the chunk, i.e., just after~$e'$.
\end{itemize}

In summary, chunks are contiguous subsequences of the walk that never straddle segment boundaries, and the end of a tadpole chunk or cycle chunk always triggers the end of a segment. 
Also note that, by definition, chunks form a partition of the first-visited edges.
This implies that, except possibly for the last segment, every segment must
contain at least one chunk, because they contain at least one first-visited
edge.

\begin{example}
We refer back to \cref{fig:walk} on page~\pageref{fig:walk}, and describe the chunks of the segments exemplified there:
\begin{itemize}
    \item \textcolor{csegm1}{Segment 1} consists of a single tadpole chunk.
    \item \textcolor{csegm2}{Segment 2} consists of a single cycle chunk.
    \item \textcolor{csegm3}{Segment 3} consists of a single tadpole chunk.
    \item \textcolor{csegm4}{Segment 4} consists of one revisited edge followed by a single normal
      chunk and the segment ends at the end of that chunk.
    \item \textcolor{csegm5}{Segment 5} alternates between revisited edges and normal chunks, before finishing by a tadpole chunk (which starts by the edge $(v_2, v_{20})$).
    \item \textcolor{csegm6}{Segment 6} revisits one edge, then does a single-edge normal chunk,
      then revisits edge $(v_{21}, v_{22})$ which ends the segment. Note how this illustrates
      that segment ends do not necessarily occur at the end of chunks.
    \item \textcolor{csegm7}{Segment 7} finishes by a normal chunk. Note that the last vertex of
      this chunk is first-visited (this is only possible for the last segment).
\end{itemize}
\end{example}

\myparagraph{Defining the ordering.}
Having defined the notion of chunks of the walk $w$, we now define the ordering
$\prec$ along which we will show that the cutwidth is bounded. This definition
only depends on chunks; it does not depend on the segment decomposition.
We see the total order $\prec$ as a sequence of vertices. 
Initially, the order is the empty order on the single vertex~$s$.
Then, for every chunk $w[i:j]$ successively, we consider the (possibly
empty) sequence $\sigma$ of the intermediate vertices of~$w[i:j]$.
Recall that the first vertex of $w[i:j]$ is already in the domain of~$\prec$: either it is $s$ (for the first chunk), or it is a vertex which is already visited.
Then there are four cases:
\begin{itemize}
\item Tadpole: If $w[i:j]$ is a tadpole, then we insert all its intermediate
  vertices at the end of the current ordering $\prec$, in the order in which
    they were first-visited.
\item Cycle: Otherwise, if $w[i:j]$ is a cycle, then, letting $v$ be its first and last vertex, we know that $v$ already occurs in~$\prec$, and we insert all its intermediate vertices right after the vertex~$v$, in the order in which they were visited.
\item Normal: We consider two subcases: 
\begin{itemize} \item First, suppose 
that the last vertex $t$ of $w[i:j]$ is first-visited at $w[j]$. This is only
possible with $j=|w|$, so that $w[i:j]$ is the last chunk. Then we do the same
    as in the tadpole case: insert the intermediate vertices and $t$ at the end
    of the current ordering~$\prec$, in the order in which they were
    first-visited.
\item Otherwise, $w[i:j]$ is a chunk whose last vertex $v$ is first-visited at
  $w[k]$ with $k<i$ so $v$ already occurs in~$\prec$, and as we explained the
    first vertex $u$ of the chunk also already occurs in~$\prec$. Further, we
    have $v \neq u$ because the chunk is not a cycle.
Then, we insert the intermediate vertices so that the vertices along the chunk are ordered in a monotone fashion in the ordering. In other words, 
let $x_1, \ldots, x_\ell$ be the successive intermediate vertices of the chunk, so that its edges are $(u, x_1), (x_1, x_2), \ldots, (x_{\ell-1}, x_\ell), (x_\ell, v)$.
If $u \prec v$ then we insert $x_1, \ldots, x_\ell$ in order between $u$ and
    $v$, and if $v \prec u$ we insert $x_\ell, \ldots, x_1$ in order between $v$
    and $u$. The order $\prec$ between the newly inserted elements and the
    existing elements between $u$ and $v$ is arbitrary, for instance we can arbitrarily say that we insert the new vertices just after the smallest of $u$ and $v$ in $\prec$.
\end{itemize}
\end{itemize}

Note that, in all cases above, the (intermediate) vertices of a chunk are always ordered as a monotone sequence:

\begin{remark}
  \label{rem:monotone}
  Consider the sequence $u, x_1, \dots, x_\ell, v$ formed by the first vertex of the chunk $u$, the intermediate vertices $x_1, \ldots, x_\ell$ in the order where they are first-visited, and the last vertex $v$ of the chunk. We then have 
  $u \prec x_1 \prec \cdots \prec x_\ell \prec v$ except that:
\begin{itemize}
  \item In the second subcase of Normal chunks we may have instead 
$u \succ x_1 \succ \cdots \succ x_\ell \succ v$;
\item For cycle chunks and tadpole chunks, $v$ already occurs earlier in
  the chunk, so we can only write: $u \prec x_1 \prec \cdots \prec x_\ell$.
\end{itemize}
\end{remark}

\begin{example}
We refer again to \cref{fig:walk} and describe the order $\prec$ after each segment:
\begin{itemize}
    \item The vertices visited by the first segment are ordered in the order of
      their first visit, i.e., $v_1$ then $v_2$ then $v_3$.
    \item For the second segment, vertices $v_4$ and $v_5$ are inserted right after $v_2$ (and therefore before $v_3$).
    \item For the third segment, first-visited vertices are again ordered in
      the order of their first visit, and are inserted after those of the first
      two segments, i.e., after $v_3$.
    \item Vertex $v_{18}$ is inserted between $v_{15}$ and $v_{16}$ in the
      order.
    \item Vertex $v_{19}$ is inserted between $v_6$ and $v_7$ in the order.
    \item Vertices $v_{20}$ to $v_{23}$ are inserted in that order at the end of
      the order, i.e., after $v_{17}$.
    \item The last segment inserts $v_{24}$ at the end of the order.
\end{itemize}
Overall, one possible ordering of all vertices of the walk which satisfies the
  conditions above is the following: 
  \[v_1, v_2, v_4, v_5, v_3, v_6, v_{19}, v_7, v_8, v_9, v_{10}, v_{11},
  v_{12}, v_{13}, v_{14}, v_{15}, v_{18}, v_{16}, v_ {17}, v_{20}, v_{21},
  v_{22}, v_{23}, v_{24}.\]
\end{example}

\myparagraph{SCCs of the segment decomposition.}
We have defined, from our walk $w$, the order $\prec$ along which we will bound the cutwidth.
To show the cutwidth bound, we will now study
the SCC decompositions of the graphs spanned by prefixes of~$w$. 

Let us consider the successive decompositions of $G_{w,0}, \ldots, G_{w,|w|}$
into SCCs. Graphs generated by a walk have very simple SCC decompositions:

\begin{lemma}
\label{lem:sccstrut}
At each step $0 \leq i \leq |w|$, 
the successive SCCs $C^i_1, \ldots, C^i_{\kappa_i}$ of~$G_{w,i}$ 
are linearly ordered by the reachability relationship 
(i.e., every vertex in $C^i_b$ is reachable from every vertex in $C^i_a$ iff $a < b$),
and the target vertex of the last edge $w[i]$ is in the last SCC $C^i_{\kappa_i}$ of~$G_{w,i}$.
\end{lemma}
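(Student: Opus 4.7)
The plan is to prove both claims simultaneously by induction on~$i$. The base case $i=0$ is immediate: $G_{w,0}$ consists of the single vertex~$s$ with no edges, so there is exactly one SCC (which trivially contains~$s$, the only ``target vertex'' relevant at this stage), and the linear ordering condition is vacuous.

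For the inductive step, assume the statement holds for $G_{w,i-1}$ and write $w[i] = (u, v)$. Since $u$ is the target of $w[i-1]$ (or $u = s$ when $i = 1$), the inductive hypothesis guarantees $u \in C^{i-1}_{\kappa_{i-1}}$, the last SCC of $G_{w,i-1}$. I then split into two cases depending on whether $v$ has been visited before. If $v \notin V_{w,i-1}$, then $v$ is a new vertex and the only edge incident to it in $G_{w,i}$ is the incoming edge $(u,v)$; thus $\{v\}$ is a new singleton SCC reachable from every previous SCC (via $u$), and it is not possible to leave $\{v\}$, so appending $\{v\}$ at the end of the sequence of SCCs preserves the linear reachability order, and $v$ lies in the new last SCC as required.

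If instead $v \in V_{w,i-1}$, then $v$ belongs to some SCC $C^{i-1}_a$ with $a \leq \kappa_{i-1}$. If $a = \kappa_{i-1}$, the addition of $(u,v)$ creates no new SCC relationships since $u$ and $v$ already lie in the same SCC, and the conclusion is immediate. Otherwise $a < \kappa_{i-1}$, and I claim that the SCCs $C^{i-1}_a, C^{i-1}_{a+1}, \ldots, C^{i-1}_{\kappa_{i-1}}$ all merge into a single new SCC, while $C^{i-1}_1, \ldots, C^{i-1}_{a-1}$ remain unchanged. Indeed, for any $a \leq c \leq \kappa_{i-1}$, a vertex of $C^{i-1}_c$ is reachable from $v \in C^{i-1}_a$ by the linear order of the inductive hypothesis, and in turn reaches $u \in C^{i-1}_{\kappa_{i-1}}$, from which the new edge $(u,v)$ returns to~$v$; so $C^{i-1}_c$ merges with $C^{i-1}_a$. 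Conversely, no SCC $C^{i-1}_b$ with $b < a$ can merge with $C^{i-1}_a$, because the only new edge $(u,v)$ has both endpoints in the merged block and hence cannot create a path from $C^{i-1}_a$ back to~$C^{i-1}_b$.

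It remains to verify that the new sequence of SCCs is still linearly ordered by reachability and that $v$ lies in its last element. The unchanged SCCs $C^{i-1}_1, \ldots, C^{i-1}_{a-1}$ retain their reachability relationships, and each still reaches the merged block via any original path into $C^{i-1}_a$. The merged block contains $v$, which is the target of $w[i]$, completing the inductive step. The only delicate point—and the one I would want to write out carefully—is arguing that no \emph{further} merging occurs and that no new reachability relationships arise among the lower SCCs; this follows from the observation above that the new edge has both endpoints inside the merged block, so it cannot be used to cross outside of it.
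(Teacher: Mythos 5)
Your proposal is correct and uses essentially the same inductive argument as the paper's proof, just with a slightly different case split: you branch on whether $v$ is new and, if not, on which SCC it belongs to, whereas the paper branches on whether the edge $w[i]$ and its target are first-visited (the paper's ``edge not first-visited'' case lands in your ``$v$ already in the last SCC'' bucket, since a revisited edge $(u,v)$ forces $v$ into $u$'s SCC). The key step---merging a suffix of the linearly-ordered SCCs and observing that the lone new edge has both endpoints inside the merged block, so no further reachability is created among the lower SCCs---is the same as in the paper.
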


\begin{proof}
We show this claim by induction on the position $i$ in the walk $w$.
Let $0\leq i\leq |w|$, let $v_i$ be the target vertex of $w[i]$, and let us show the claim.
  Remember that $G_{w,0}$ is by convention the graph with only the vertex~$s$ 
  (in a singleton trivial SCC) and no
  edges, so the lemma statement is immediately true.
 Now, for each $i \geq 1$, exactly one of the following three cases happens (pictured in order in \cref{fig:sccstruct}):
\begin{itemize}
    \item The edge $w[i]$ is not first-visited. In this case $G_{w,i} =
      G_{w,i-1}$, the SCCs are also unchanged, and the walk $w[:i-1]$ finished
      at a vertex in the last SCC $C^{i-1}_{\kappa_{i-1}}$ by induction
      hypothesis. The walk remains in that SCC by definition of it being the last SCC; formally $C^{i-1}_{\kappa_{i-1}}$
    is also the SCC of $v_{i}$ 
    in~$G_{w,i}$ and the invariant is true.
    \item 
    The edge $w[i]$ is first-visited, and its target $v_i$ is also first-visited at~$w[i]$. In this case
    the SCC decomposition of $G_{w,i}$
    is the same as that of $G_{w,i-1}$
    except we have a new vertex $v_i$ in a new trivial SCC with a single edge $(v_{i-1}, v_i)$, this edge comes from the SCC $C^{i-1}_{\kappa_{i-1}}$ of $G_{w,i-1}$ by the inductive hypothesis
    which as we explained is the same as the SCC $C^i_{\kappa_i-1}$:
    note that $\kappa_i = \kappa_{i-1}+1$.
    The fact that the SCCs of $G_i$ is linearly ordered by the reachability
    relationship follows from the inductive hypothesis.
    \item The edge $w[i]$ is first-visited, but its target $v_i$ is not first-visited at~$w[i]$. In this case, $G_{w,i}$ differs from $G_{w,i-1}$ by the addition of the edge $w[i]$ going from $v_{i-1}$ to~$v_i$. Starting from the SCC decomposition $C^{i-1}_1, \ldots, C^{i-1}_{\kappa_{i-1}}$ of $G_{w,i-1}$ which satisfies the induction hypothesis, and letting $C^{i-1}_j$ 
    with $1 \leq j \leq \kappa_{i-1}$
    be the SCC of $v_i$ in~$G_{w,i-1}$, we see that the addition of the new edge will merge all SCCs $C^{i-1}_j, \ldots, C^{i-1}_{\kappa_{i-1}}$ into one SCC which is the last SCC in the decomposition of~$G_{w,i}$. 
    (Note that if $j = \kappa_{i-1}$ then the merge is trivial.)
    The fact that the SCCs of $G_i$ is linearly ordered by the reachability relationship is again by the inductive hypothesis.\qedhere
\end{itemize}
\end{proof}

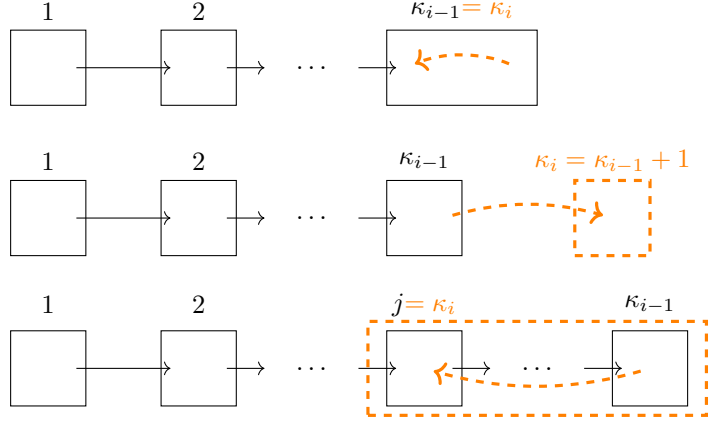
\begin{figure}[h!]
\centering
    \begin{tikzpicture}[xscale=1,yscale=.5,
        snew/.style={very thick, orange,dashed}
        ]
    \begin{scope}
        \node (al1) at (.5, 1.5) {1};
        \node (al2) at (2.5, 1.5) {2};
        \node (alk) at (6, 1.5) {$\kappa_{i-1} \textcolor{orange}{= \kappa_i}$};
        \draw (0, -1) rectangle (1, 1);
        \draw (2, -1) rectangle (3, 1);
        \draw (5, -1) rectangle (7, 1);
        \node (dots) at (4,0) {$\ldots$};
        \node (a1) at (.75, 0) {};
        \node (b1) at (2.25, 0) {};
        \node (a2) at (2.75, 0) {};
        \node (b2) at (3.5, 0) {};
        \node (ak) at (4.5, 0) {};
        \node (bk) at (5.25, 0) {};
        \draw[->] (a1) -- (b1);
        \draw[->] (a2) -- (b2);
        \draw[->] (ak) -- (bk);
\node (akk) at (6.75, 0) {};
        \draw (akk) edge [snew,->,bend right=40] (bk);
        
    \end{scope}

    \begin{scope}[yshift=-4cm]
        \node (al1) at (.5, 1.5) {1};
        \node (al2) at (2.5, 1.5) {2};
        \node (alk) at (5.5, 1.5) {$\kappa_{i-1}$};
        \node (alkk) at (8, 1.5) {$ \textcolor{orange}{\kappa_i
        = \kappa_{i-1}+1}$};
        \draw (0, -1) rectangle (1, 1);
        \draw (2, -1) rectangle (3, 1);
        \draw (5, -1) rectangle (6, 1);
        \draw[snew] (7.5, -1) rectangle (8.5, 1);
        \node (dots) at (4,0) {$\ldots$};
        \node (a1) at (.75, 0) {};
        \node (b1) at (2.25, 0) {};
        \node (a2) at (2.75, 0) {};
        \node (b2) at (3.5, 0) {};
        \node (ak) at (4.5, 0) {};
        \node (bk) at (5.25, 0) {};
        \node (akk) at (5.75, 0) {};
        \node (bkk) at (8, 0) {};
        \draw[->] (a1) -- (b1);
        \draw[->] (a2) -- (b2);
        \draw[->] (ak) -- (bk);
        \draw (akk) edge[->,snew,bend left=30] (bkk);
    \end{scope}

    \begin{scope}[yshift=-8cm]
        \node (al1) at (.5, 1.7) {1};
        \node (al2) at (2.5, 1.7) {2};
        \node (alj) at (5.5, 1.7) {$j \textcolor{orange}{
        = \kappa_i}$};
        \node (alk) at (8.5, 1.7) {$\kappa_{i-1}$};
        \draw[snew] (4.75, -1.25) rectangle (9.25, 1.25);
        \draw (0, -1) rectangle (1, 1);
        \draw (2, -1) rectangle (3, 1);
        \node (dots) at (4,0) {$\ldots$};
        \draw (5, -1) rectangle (6, 1);
        \node (dots2) at (7,0) {$\ldots$};
        \draw (8, -1) rectangle (9, 1);
        \node (a1) at (.75, 0) {};
        \node (b1) at (2.25, 0) {};
        \node (a2) at (2.75, 0) {};
        \node (b2) at (3.5, 0) {};
        \node (aj) at (4.5, 0) {};
        \node (bj) at (5.25, 0) {};
        \node (ajj) at (5.75, 0) {};
        \node (bjj) at (6.5, 0) {};
        \node (ak) at (7.5, 0) {};
        \node (bk) at (8.25, 0) {};
        \node (akk) at (8.5, 0) {};
        \node (bkk) at (5.5, 0) {};
        \draw[->] (a1) -- (b1);
        \draw[->] (a2) -- (b2);
        \draw[->] (aj) -- (bj);
        \draw[->] (ajj) -- (bjj);
        \draw[->] (ak) -- (bk);
        \draw (akk) edge[->,snew,bend left=30] (bkk);
    \end{scope}
    \end{tikzpicture}

    \caption{Illustrating the three cases of the proof of \cref{lem:sccstrut}
    to show that the successive SCCs of a walk are always linearly ordered by the
    reachability relation.
    Squares denote strongly connected components, labeled with their SCC number in the decomposition of $G_{w,i-1}$. The new edge is dashed in orange, and new SCCs are also dashed in orange.}
\label{fig:sccstruct}
\end{figure}

To connect the proof above with the chunk decomposition, notice that, when we are processing the intermediate vertices of a chunk, it is always the second case that applies.
The third case only applies when a chunk ends, either because it is a cycle or a
tadpole or because it revisits a vertex that was first-visited in a previous
chunk. The first case applies during the revisits between chunks.

To connect the proof with the ordering that we define on vertices, notice that we can inductively show the following claim: for any $1 \leq i \leq |w|$, each SCC of $G_{w,i}$ is a contiguous group of vertices in the ordering $\prec$ defined at step~$i$, and these groups are ordered along the (linear) topological order on the SCCs. Formally:

\begin{lemma}\label{lem:ordering-consistent-with-topological-order-scc}
For every prefix $w[:i]$ of the walk such that a chunk of $w$ ends at $w[i]$, the ordering $\prec$ induced by $w[:i]$  is consistent with the topological order of the SCCs of $w[:i]$. 
\end{lemma}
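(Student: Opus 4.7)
The plan is to prove the statement by induction on the number of chunks of $w$ processed so far, strengthening the claim to the contiguity property mentioned in the paragraph preceding the lemma: after processing the first $k$ chunks (ending at position $i_k$), each SCC of $G_{w,i_k}$ forms a \emph{contiguous block} of vertices in $\prec$, and these blocks appear along $\prec$ in the linear topological order guaranteed by \cref{lem:sccstrut}. The contiguity strengthening is essential, because the inductive step will need to combine several consecutive SCC blocks into one when a chunk triggers a merge, which only works cleanly if the blocks were already adjacent in $\prec$.

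The base case is trivial ($\prec$ contains only $s$, single SCC $\{s\}$), and between two consecutive chunks only revisited edges occur, leaving both $G_{w,\cdot}$ and $\prec$ unchanged. For the inductive step I would analyze how the four cases defining $\prec$ agree with the evolution of SCCs described in the proof of \cref{lem:sccstrut}. First, for a tadpole chunk $u_1, \ldots, u_\ell$ with $u_\ell = u_{\ell'}$, the intermediate edges create singleton SCCs $\{u_2\}, \ldots, \{u_{\ell-1}\}$ appended at the end of the topological order, and the final back-edge merges the tail $\{u_{\ell'}\}, \ldots, \{u_{\ell-1}\}$ into one cycle SCC; inserting $u_2, \ldots, u_{\ell-1}$ at the end of $\prec$ in first-visit order reproduces exactly this sequence of contiguous blocks. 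Second, for a cycle chunk at $v$, intermediate edges create singletons and the final edge merges them into $v$'s SCC, which is the last SCC of $G_{w,i-1}$ by \cref{lem:sccstrut}; inserting the intermediate vertices just after $v$ keeps them inside $v$'s old (contiguous) block, which by induction is the suffix of $\prec$. Third, for a normal chunk whose last vertex $t$ is first-visited (necessarily the final chunk of $w$), every edge falls under case~2 of \cref{lem:sccstrut} and so contributes a new singleton SCC at the end, matching the end-of-$\prec$ insertion in first-visit order.

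The main obstacle is the fourth case: a normal chunk whose last vertex $v$ is already visited and distinct from $u$. Here the final edge $(x_\ell, v)$ triggers a large case-3 merge of all SCCs $C^{i-1}_j, \ldots, C^{i-1}_{\kappa_{i-1}}$ (with $v \in C^{i-1}_j$ and $u \in C^{i-1}_{\kappa_{i-1}}$) together with the new intermediate vertices $x_1, \ldots, x_\ell$. The strengthened inductive hypothesis tells us that the blocks for $C^{i-1}_j, \ldots, C^{i-1}_{\kappa_{i-1}}$ form a contiguous suffix of $\prec$, which contains both $u$ and $v$; consequently, inserting $x_1, \ldots, x_\ell$ between $u$ and $v$ (in whichever monotone direction the rule specifies) places them strictly inside this suffix, so the merged SCC becomes a single contiguous block that extends the suffix. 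The SCCs $C^{i-1}_1, \ldots, C^{i-1}_{j-1}$ before the suffix are unaffected, so their blocks remain contiguous and the topological order is preserved, completing the induction.
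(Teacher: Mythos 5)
Your proof is correct and follows essentially the same strategy as the paper's: induction on chunks, with the contiguity-of-SCC-blocks strengthening (which the paper also invokes, stating it in the paragraph preceding the lemma), and a case analysis by chunk type mirroring \cref{lem:sccstrut}. The explicit observation that inter-chunk revisited edges leave both $G_{w,\cdot}$ and $\prec$ unchanged is a helpful clarification that the paper leaves implicit, but otherwise the arguments coincide, including the key point in the normal subcase~2 that the merged suffix of SCC blocks is contiguous and both $u$ and $v$ lie within it.
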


\begin{proof}
We show the claim by induction on successive chunks. Remember that the first vertex of a chunk $w[j:j']$ always belongs to the last 
SCC of $G_{w,j}$ in the topological order by \cref{lem:sccstrut}. 
We consider the possible cases of chunks:

\begin{itemize}
\item Tadpole: The addition of a tadpole chunk creates 
a sequence of trivial SCCs (on the initial path of the tadpole) followed by a
    new non-trivial SCC at the end of the topological order (corresponding to
    the cycle of the tadpole), and the vertices of these SCC are placed in
    first-visited order at the end of the ordering $\prec$. Thus, the ordering $\prec$ remains consistent with the topological order of SCCs.

\item Cycle: The addition of a cycle chunk grows the last SCC, and the new
  vertices are added in $\prec$ just after the vertex $v$ to which the cycle
    is attached. Thus, the ordering $\prec$ remains consistent with the topological order of SCCs.

\item Normal: We consider the same two subcases as in the definition of the
  ordering:
\begin{itemize}
    \item The addition of a chunk that ends the entire walk $w$ at a first-visited vertex adds new trivial SCCs to the end of the topological order, while adding these new vertices in the same order to the end of $\prec$. Thus, the ordering $\prec$ remains consistent with the topological order of SCCs.
    \item Otherwise, the addition of a chunk
    merges some suffix of the SCCs (possibly the merge is trivial if the chunk is re-entering the last SCC). 
    Further, the chunk adds new vertices to the resulting last SCC (possibly none if there are no intermediate vertices). 
It is clear that the ordering $\prec$ remains consistent with the topological
    order of the SCCs for all of the pre-existing vertices (not in the current
    chunk), because a suffix of the SCCs (in topological order and in the
    order of $\prec$ by induction hypothesis) get merged into the same SCC. As
    for the vertices added in the current chunk (if any), they are inserted in
    $\prec$ between two vertices which both belong to the SCC resulting
    from the merge, so the order $\prec$ on these vertices is also consistent
    with the topological order of the SCCs. \qedhere
\end{itemize}
\end{itemize}
\end{proof}

\myparagraph{Bounding the cutwidth from the number of segments.}
We can now turn back to the segment decomposition introduced in the previous
section for the walk~$w$, and show how the number of segments of~$w$ can be used as a bound on the cutwidth of the graph $G_w$ spanned by~$w$, following the order $\prec$ that we defined.
For this, we will consider an arbitrary cut $V_- \uplus V_+$ of $V_w$ following $\prec$,
and count how many edges of~$w$ cross the cut. 
As each edge is first-visited once, and first-visited in exactly one segment, it suffices to bound how many edges each segment contributes to the cut, and to count only the first-visited edges of each segment. We want to show:

\begin{lemma}\label{lem:segbound}
    For each segment $w[i:j]$ of the walk $w$, the number of first-visited edges in $w[i:j]$ that cross the cut is at most 3.
\end{lemma}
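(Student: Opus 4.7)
The plan is to decompose the segment $w[i:j]$ into its chunks $C^1, \ldots, C^m$ and to bound the contribution of each chunk to a given cut. Recall from the preceding observations that tadpole and cycle chunks always end the current segment, so $C^1, \ldots, C^{m-1}$ are necessarily normal chunks, while the last chunk $C^m$ may be normal, cycle, or tadpole.

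The first step would bound each chunk's contribution individually by exploiting \cref{rem:monotone}. A normal chunk's first-visited edges form a monotone path in $\prec$ from $u_i$ to $v_i$, contributing at most $1$ to any cut. A cycle chunk's first-visited edges form a monotone path from $u_m$ through its intermediates (inserted just after $u_m$ in $\prec$) closed by a single back edge to $u_m$, contributing at most $2$. A tadpole chunk's first-visited edges likewise form a monotone path from $u_m$ through intermediates $x_1, \ldots, x_{\ell-1}$ (inserted at the end of the current $\prec$), closed by a back edge $(x_{\ell-1}, x_{\ell'})$, again contributing at most $2$.

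The second step would analyze the $\prec$-positions of $u_i, v_i$ for the non-last normal chunks, relying on \cref{lem:sccstrut} and \cref{lem:ordering-consistent-with-topological-order-scc}. Writing $j_i$ for the index of the SCC containing $v_i$ in the graph just before $C^i$, I would argue from the segment-end condition, applied both at chunk boundaries and along the revisited edges between chunks, that the sequence $j_1 > j_2 > \cdots > j_{m-1}$ is strictly decreasing, and that the walk between $C^{i-1}$ and $C^i$ must remain in the SCCs newly merged into the current last SCC by $C^{i-1}$. This places each $u_i$ in a specific ``layer'' $L_{i-1}$, where the layers $L_0, L_1, \ldots, L_{m-1}$ are pairwise disjoint and appear in $\prec$ in the order $L_{m-1}, L_{m-2}, \ldots, L_0$, with $v_i$ lying in the earliest SCC of $L_i$.

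The final step combines the two analyses. For a cut at a position $p$ lying in some layer $L_l$, the disjoint layer structure forces any contributing non-last chunk to satisfy $i \in \{l, l+1\}$: for $i < l$ one has $v_i \succ p$, and for $i > l+1$ one has $u_i \prec p$. So non-last chunks contribute at most $2$, and the naive combination with $C^m$'s contribution gives $4$. The main obstacle is a short case analysis that tightens this to $3$. When $l = m-1$, the slot $i = l+1$ is $C^m$ itself, so only $C^{m-1}$ contributes among non-last chunks; combined with $C^m$'s contribution of at most $2$, the total is at most $3$. When $l < m-1$, a cycle $C^m$ contributes $0$ because all its edges lie within $L_{m-1}$ on the same side of $p$, while a tadpole or normal $C^m$ contributes at most $1$ (namely the single edge from $u_m \in L_{m-1}$ into the tadpole's intermediates, or the single crossing of the monotone normal-chunk path), yielding at most $2 + 1 = 3$. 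Cuts falling outside $D_{m-1}$'s $\prec$-range, which is only possible among tadpole intermediates, receive contributions only from $C^m$ and are bounded by $2$. The bound of $3$ then follows in all cases.
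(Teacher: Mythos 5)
Your proposal takes a genuinely different and arguably more transparent route. Where the paper argues via a three-case analysis around a threshold SCC $C_\theta$ of $G_{w,i-1}$ (all SCCs to the left of the cut; the walk stays to the right; the walk dips below and the argument tracks when it can re-cross), you extract an explicit structural invariant: the SCC indices $j_1 > j_2 > \cdots > j_{m-1}$ of the endpoints $v_i$ of the non-last chunks decrease strictly (a consequence of the segment-end condition applied to the first-visited starting edges of the chunks and to the revisits between them), inducing $\prec$-disjoint ``layers'' $L_{m-1} \prec \cdots \prec L_0$ that house the $u_i$'s and localise the cut. This makes the crossing count uniform -- the cut fixes a layer $L_l$ and only $c_l$ and $c_{l+1}$ among non-last chunks can cross, once each -- and the reduction from $4$ to $3$ is a crisp case split on $l = m-1$ versus $l < m-1$. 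The decreasing-index observation is implicit in the paper's Case~3 subanalysis, but your version isolates it cleanly as a reusable invariant. Two points need tightening in a full write-up. First, the $\prec$-disjointness of the layers, once chunk intermediates are folded in, genuinely hinges on the convention of inserting new intermediate vertices just after the smaller of the chunk's endpoints (the paper states this as an arbitrary example when defining $\prec$ for normal chunks, but without it the intermediates of $c_i$ could land in the $\prec$-range of $L_{i-1}$, $u_{i+1}$ could end up on the wrong side of the cut, and the claim ``$u_i \prec p$ for $i > l+1$'' would fail). Second, cuts falling outside all layer ranges (before $L_{m-1}$, between adjacent layers, or beyond $L_0$ among tadpole intermediates) should be handled explicitly -- the reference to ``$D_{m-1}$'s $\prec$-range'' is presumably a typo for the span of the layers -- though those cases are easy and give at most~$2$.
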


This immediately implies the Segment Cutwidth Bound (\cref{prp:segment-bound}) stated at the beginning of the section. We show \cref{lem:segbound} in the rest of the section.

We will study segments chunk by chunk. The intuition is that a chunk usually
crosses the cut at most once, and that only constantly many chunks can do so.
However, we must first eliminate the special case of cycle chunks and tadpole
chunks, which may sometimes cross the cut twice. First, let us state a lemma
describing what may happen with cycle chunks:

\begin{lemma}\label{lem:cycle-twice}
  Let $w[i:j]$ be a segment of $w$ and assume that the last chunk of $w[i:j]$ is a cycle $w[k:j]$ (any cycle chunk must end its segment, as previously observed).
  Then $w[k:j]$ crosses the cut at most twice. Further, if it does cross the cut
  twice then it must be the case that the first and last vertex of $w[k:j]$ is
  in $V_-$ and that $w[k:j]$ contains some intermediate vertex in $V_+$.
\end{lemma}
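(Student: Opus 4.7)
The plan is to unpack the definition of $\prec$ for cycle chunks, observe that the intermediate vertices are forced to appear in a monotone contiguous block in $\prec$, and then do a case analysis on which side of the cut the repeated vertex~$v$ lies.

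First I would set up notation: let $v$ be the common first/last vertex of the cycle chunk $w[k:j]$ and let $x_1, \ldots, x_\ell$ be its intermediate vertices in the order they are first-visited, so that its edges are exactly $(v, x_1), (x_1, x_2), \ldots, (x_{\ell-1}, x_\ell), (x_\ell, v)$. By the Cycle case in the definition of $\prec$, these intermediate vertices are inserted into $\prec$ immediately after $v$, in first-visit order; and by \cref{rem:monotone} this yields the monotone chain $v \prec x_1 \prec x_2 \prec \cdots \prec x_\ell$. The key point I would emphasize is that later chunks only \emph{insert} new vertices into $\prec$ and never reorder existing ones, so this chain persists in the final ordering which the cut $V_- \uplus V_+$ respects.

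Next I would do the case analysis. If $v \in V_+$, then by monotonicity every $x_i$ is in~$V_+$ as well, so none of the $\ell+1$ cycle edges crosses the cut. If $v \in V_-$, then either all $x_i$'s lie in $V_-$ (in which case again no edge crosses), or there is a smallest index $j$ with $x_j \in V_+$. In that latter case, monotonicity forces $x_1, \ldots, x_{j-1} \in V_-$ and $x_j, \ldots, x_\ell \in V_+$, so among the cycle edges the only ones with endpoints on opposite sides are the edge entering $V_+$ (namely $(v, x_1)$ if $j = 1$, or $(x_{j-1}, x_j)$ otherwise) and the closing edge $(x_\ell, v)$. That is exactly two crossings, and both require $v \in V_-$ together with at least one intermediate vertex in $V_+$, which is the additional conclusion of the lemma.

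The main thing to be careful about, and essentially the only nontrivial ingredient, is the persistence of the monotone chain $v \prec x_1 \prec \cdots \prec x_\ell$ through all subsequent chunk insertions: vertices may be inserted \emph{between} $x_i$ and $x_{i+1}$, and between $v$ and $x_1$, but the relative order of the $x_i$'s and of $v$ is preserved, which is all that the crossing count needs. Everything else is a direct monotonicity argument on the $\ell+1$ edges of the cycle.
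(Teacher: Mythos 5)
Your proof is correct and follows essentially the same argument as the paper's: unfold the Cycle case of the definition of $\prec$, observe that $v, x_1, \ldots, x_\ell$ form a monotone chain in $\prec$ that persists under later insertions, and do a case analysis on the position of the cut relative to $v$. The only difference is that you spell out the persistence of the monotone order and the identification of the two crossing edges slightly more explicitly than the paper does, but the substance is identical.
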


\begin{proof}
    We refer back to the definition of the ordering $\prec$ on cycle chunks (case Cycle). Let $v_1, \ldots, v_\ell$ be the vertices traversed by the cycle chunk, where all vertices except $v_1=v_\ell$ are first-visited. 
    We have modified $\prec$ after the chunk by inserting $v_2, \ldots, v_{\ell-1}$ in that order right after~$v_1$. Hence, if $v_1$ is in $V_+$ then so are all other vertices and there is no crossing of the cut. Likewise, if all vertices are in $V_-$ then there is no crossing of the cut. The remaining case is that $v_1$ is in $V_-$ and at least one intermediate vertex is in $V_+$,
    in which case the chunk crosses the cut exactly twice: once forwards when reaching the first intermediate vertex in~$V_+$, and once again backwards when going back to~$v_1$.
\end{proof}

Second, let us state a lemma about what may happen with tadpole chunks:

\begin{lemma}\label{lem:tadpole-twice}
  Let $w[i:j]$ be a segment of $w$ and assume that the last chunk of $w[i:j]$ is a tadpole $w[k:j]$ (any tadpole chunk must end its segment, as previously observed). 
  Then $w[k:j]$ crosses the cut at most twice. Further, if it crosses the cut
  twice then it has an intermediate vertex in~$V_-$ and an intermediate vertex
  in~$V_+$.
\end{lemma}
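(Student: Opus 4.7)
The plan is to mirror the argument used for cycle chunks in \cref{lem:cycle-twice}, but relying on the Tadpole case of the ordering definition of $\prec$. Let $u_1, u_2, \ldots, u_\ell$ be the successive vertices of the tadpole chunk $w[k:j]$, with $u_\ell = u_{\ell'}$ for some $1 < \ell' < \ell$; its edges are $(u_1, u_2), \ldots, (u_{\ell-2}, u_{\ell-1})$ together with the closing edge $(u_{\ell-1}, u_{\ell'})$ of the cycle.

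The first step is to establish the monotone chain $u_1 \prec u_2 \prec \cdots \prec u_{\ell-1}$ in the final ordering. By the Tadpole case of the definition of $\prec$, the intermediate vertices $u_2, \ldots, u_{\ell-1}$ are appended at the end of $\prec$ in first-visited order, which for a tadpole coincides with the traversal order; together with the fact that $u_1$ was already present in $\prec$ before the chunk was processed, this yields the chain at the moment the chunk is processed. Since later insertions never reorder existing vertices, this monotone relationship persists in the final $\prec$---this is the main subtlety of the proof, as subsequent cycle or normal chunks may insert new vertices \emph{between} the $u_i$ but cannot violate the pairwise relations $u_a \prec u_b$ for $a < b$.

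Next, I would use this monotone chain to bound the crossings. Along the sequence $u_1, u_2, \ldots, u_{\ell-1}$, the vertices transition from $V_-$ to $V_+$ at most once, so the path edges $(u_1, u_2), \ldots, (u_{\ell-2}, u_{\ell-1})$ contribute at most one crossing. For the closing edge $(u_{\ell-1}, u_{\ell'})$, the inequality $u_{\ell'} \prec u_{\ell-1}$ implies that a crossing forces $u_{\ell'} \in V_-$ and $u_{\ell-1} \in V_+$ (otherwise $u_{\ell-1} \in V_-$ would also force $u_{\ell'} \in V_-$ and no crossing would occur). Altogether the chunk crosses the cut at most twice. Moreover, if exactly two crossings occur, the closing edge must be one of them, giving $u_{\ell'} \in V_-$ and $u_{\ell-1} \in V_+$; both are intermediate vertices since $2 \leq \ell' \leq \ell - 1$, which yields the required conclusion.
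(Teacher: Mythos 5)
Your proof is correct and follows essentially the same route as the paper: both rely on the Tadpole case of the definition of $\prec$, establish the monotone chain $u_1 \prec u_2 \prec \cdots \prec u_{\ell-1}$, observe that the path edges can cross the cut at most once and the closing edge at most once, and deduce which intermediate vertices must straddle the cut when both crossings occur. Your explicit remark that the monotone relations persist under later insertions is a valid observation that the paper leaves implicit. One small caveat: when $\ell' = \ell-1$, the closing edge is a self-loop and the inequality $u_{\ell'} \prec u_{\ell-1}$ you invoke does not hold; however, self-loops never cross cuts (as noted in the preliminaries), so in that subcase the chunk crosses at most once and the lemma holds trivially.
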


\begin{proof}
    We refer back to the definition of $\prec$ on tadpole chunks (case Tadpole).
Let $v_1, \ldots, v_\ell$ be the vertices traversed, with $v_\ell = v_{\ell'}$
  for some $1 < \ell' < \ell$. We have modified $\prec$ after the chunk by inserting $v_2,
  \ldots, v_{\ell-1}$ at the end of~$\prec$, in that order. Hence, there are 3 cases:
\begin{itemize}
    \item All vertices are in $V_+$, or all vertices are in $V_-$, and the cut is not crossed
    \item There is $\ell'' < \ell'$ such that vertices $v_1, \ldots, v_{\ell''}$ are in $V_-$ and the others are in $V_+$, then the cut is crossed once by the edge $(v_{\ell''}, v_{\ell''+1})$
    \item There is $\ell'' \geq \ell' > 1$ such that vertices $v_1, \ldots, v_{\ell''}$ are in $V_-$ and the others are in $V_+$, then the cut is crossed twice: by the edge $(v_{\ell''}, v_{\ell''+1})$, and by the edge $(v_{\ell-1}, v_{\ell})$.
\end{itemize}
In all cases the conditions of the lemma statement are respected.
\end{proof}

In the other cases, chunks only cross the cut at most once:

\begin{lemma}\label{lem:normal-once}
  Let $w[i:j]$ be a normal chunk of~$w$, then it crosses the cut at most once.
\end{lemma}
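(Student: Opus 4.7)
The plan is to deduce \cref{lem:normal-once} directly from the monotonicity recorded in \cref{rem:monotone}. For a normal chunk with first vertex $u$, last vertex $v$, and intermediate vertices $x_1, \ldots, x_\ell$ listed in order of first visit, that remark tells us that the sequence of vertices $u, x_1, \ldots, x_\ell, v$ traversed by the chunk is totally $\prec$-monotone: in the last-chunk-of-walk subcase we have $u \prec x_1 \prec \cdots \prec x_\ell \prec v$, and in the remaining subcase we have either that same chain or else the reverse chain $u \succ x_1 \succ \cdots \succ x_\ell \succ v$, depending on the relative $\prec$-order of $u$ and $v$. No further case analysis on the structure of the chunk is needed.

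Once monotonicity is in hand, I would finish as follows. Since the cut $(V_-, V_+)$ respects $\prec$, the set $V_-$ is downward-closed: if $a \in V_-$ and $b \prec a$ then $b \in V_-$. Hence, along any $\prec$-monotone sequence of vertices, the elements belonging to $V_-$ form a contiguous run at one end of the sequence. Therefore at most one consecutive pair of the sequence has its two endpoints on opposite sides of the cut. The edges of the chunk are precisely these consecutive pairs, so at most one of them crosses the cut.

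I do not expect any serious obstacle: the entire statement is a one-step consequence of \cref{rem:monotone} together with the downward-closedness of $V_-$. The only subtlety to flag is that in the second normal subcase the order $\prec$ between the newly inserted $x_i$'s and the pre-existing vertices lying between $u$ and $v$ is arbitrary; but this plays no role in counting the chunk's crossings, since the number of crossings depends only on which side of the cut each vertex of the chunk lies on, i.e., on $\prec$ restricted to the chunk's own vertices, and on that restriction the sequence is monotone by construction.
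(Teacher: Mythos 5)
Your proof is correct and follows essentially the same route as the paper's: both argue from the monotonicity of the chunk's vertices under~$\prec$ recorded in \cref{rem:monotone}, and then observe that a $\prec$-monotone sequence can straddle a $\prec$-respecting cut at most once. Your version just spells out the final step (downward-closedness of~$V_-$, contiguity of the run) a bit more explicitly than the paper does.
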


\begin{proof}
    By case Normal in the definition of $\prec$, the vertices of the chunk form
    a monotone sequence in $\prec$
    (see \cref{rem:monotone}).
    Further, in that case, the edges of the chunk always go from one vertex
    to the next vertex in the order. Hence, the edges of the chunk can cross the
    cut at most once.
\end{proof}

What is more, whenever a normal chunk crosses the cut forwards, then the segment ends:

\begin{lemma}\label{lem:normal-forwards}
  Let $w[i:j]$ be a normal chunk of~$w$ which crosses the cut forwards, i.e.,
  the starting vertex of the chunk is in~$V_-$ and its ending vertex is
  in~$V_+$. Then the segment containing $w[i:j]$ ends at~$j$.
\end{lemma}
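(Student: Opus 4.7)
Let $s_\sigma = w[p{:}e]$ be the segment containing the chunk $w[i{:}j]$. Since segment ends cannot occur strictly inside a chunk (the intermediate vertices there are freshly introduced, and so are absent from every strictly earlier prefix graph), we automatically have $p \le i$ and $e \ge j$. The plan is to verify that $k = j$ satisfies the segment-end conditions of \cref{def:segmentend}, with the first-visited witness chosen to be $j'' := i$: minimality of segment ends will then force $e \le j$, and combined with $e \ge j$ this will yield $e = j$, which is the desired conclusion.

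First I would dispatch the trivial subcase in which $w[i:j]$ falls under the first Normal subcase of the definition of $\prec$ (the last vertex of the chunk is first-visited at $w[j]$): then $j = |w|$, so $s_\sigma$ is the last segment of the walk and $e = |w| = j$ holds automatically. The substantive case is the second Normal subcase, where the last vertex $v$ and the first vertex $u$ of the chunk are both already present in $G_{w,i-1}$ and the global ordering satisfies $u \prec v$. Writing $w[i] = (u,x_1)$ and $w[j] = (x_\ell, v)$, the edge $w[i]$ is first-visited (every edge of a chunk is), so the whole argument reduces to exhibiting a directed path from $u$ to $v$ inside $G_{w,i-1}$.

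This reachability step is where the main obstacle lies, and the plan is to resolve it using the SCC analysis. By \cref{lem:sccstrut} applied at step $i-1$, the vertex $u$ (as the target of $w[i-1]$, equivalently the source of $w[i]$) lies in the topologically last SCC of $G_{w,i-1}$. Revisited edges between chunks modify neither the graph $G_{w,\cdot}$, nor its SCC decomposition, nor the restriction of $\prec$ to already-present vertices; hence I can transport \cref{lem:ordering-consistent-with-topological-order-scc} from the endpoint of the previous chunk across the intervening revisits to $G_{w,i-1}$, concluding that $\prec$ restricted to $G_{w,i-1}$ is consistent with the topological order of its SCCs. Combined with $u \prec v$ and with $u$ sitting in the last SCC, this forces $v$ to lie in the same SCC as $u$, yielding the required path from $u$ to $v$ in $G_{w,i-1}$. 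Plugging this path together with the witness $j'' = i$ into \cref{def:segmentend} then certifies $k = j$ as a valid segment end, so $e \le j$ and $e = j$, completing the proof.
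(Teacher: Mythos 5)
Your proof is correct and follows essentially the same route as the paper's: reduce to finding a path from the chunk's first vertex $u$ to its last vertex $v$ in $G_{w,i-1}$, use \cref{lem:sccstrut} to place $u$ in the last SCC, and deduce $v$ lies there too from $u \prec v$ and the consistency of $\prec$ with the topological SCC order. You spell out a few steps the paper leaves implicit (the trivial last-chunk subcase, the explicit appeal to \cref{lem:ordering-consistent-with-topological-order-scc} with the transport across revisits, and the $e \geq j$ direction from no-segment-ends-inside-chunks), which the paper's terse "thus $y \in C^{i-1}_{\kappa_{i-1}}$ since $u \prec y$" relies on without citation.
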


\begin{proof}
    Consider the graph $G_{w,i-1}$ spanned by the walk right before $w[i:j]$. 
    As $w[i:j]$ is a chunk, it starts with a first-visited edge $(u,v)$. 
    By case Normal in the definition of $\prec$, for a normal chunk to cross the cut forwards, it must be the case that its first vertex $u$ and its last vertex
    $y$ are such that $u \in V_-$ and $y \in V_+$: in particular we have $u
    \prec y$. We know that $y$ is already-visited when the chunk $w[i:j]$ reaches it and ends, so $y$ is in $G_{w,i-1}$. Using
    \cref{lem:sccstrut}
    we know that the SCCs $C^{i-1}_1, \ldots, C^{i-1}_{\kappa_{i-1}}$ 
    of~$G_{w,i-1}$
    are linearly ordered and $u\in C^{i-1}_{\kappa_{i-1}}$, and thus $y\in C^{i-1}_{\kappa_{i-1}}$ since $u \prec y$. 
    But then by definition of $C^{i-1}_{\kappa_{i-1}}$ being an SCC, there is a path from $u$ to $y$ in $G_{w,i-1}$. We also know that the edge $w[i]=(u,v)$ is first-visited, and $w[j]$ finishes at~$y$. These are precisely the criteria for ending at $j$ the segment that contains $w[i:j]$.
\end{proof}

All that remains is to bound the contribution to the cut of the normal chunks that cross the cut backwards. To this end, let us distinguish the last chunk of a segment which we call the \emph{final} chunk, and the remaining chunks in that segment which we call \emph{non-final} chunks. 
Non-final chunks must be normal (because the segment ends right after cycle chunks or tadpole chunks), and they cannot cross the cut forward by the previous lemma.

We are ready to conclude the proof of \cref{lem:segbound}:

\begin{proof}[Proof of \cref{lem:segbound}]
    Our goal is to show that for each segment $w[i:j]$ of the walk $w$, the number of first-visited edges in $w[i:j]$ that cross the cut is at most 3. 
    Fix a segment $w[i:j]$.
    If $w[i:j]$ contains no chunks, which is possible only for the last segment,
    then there is nothing to show. Hence, we assume that $w[i:j]$ has
    at least one chunk, and number the chunks $c_1, \ldots, c_{\ell+1}$ with
    $\ell+1 \geq 1$ being the number of chunks. We distinguish the non-final
  chunks $c_1, \ldots, c_\ell$ from the final chunk $c_{\ell+1}$.

    Recall that non-final chunks must be normal chunks.
    By \cref{lem:normal-forwards} we know that non-final chunks cannot cross the
    cut forwards. So it suffices to bound the number of times that non-final
    chunks cross the cut backwards, and the number of times that the final chunk
    crosses the cut. 
    Let us consider the SCC decomposition of $G_{w,i-1}$ before the segment
    $w[i:j]$ started, and recall by 
    \cref{lem:sccstrut}
    that the SCCs $C^{i-1}_1, \ldots, C^{i-1}_{\kappa_{i-1}}$ 
    of~$G_{w,i-1}$ are linearly ordered: for simplicity we write the SCCs
    as $C_1, \ldots, C_\kappa$, dropping the dependency on~$i$.
   By 
    \cref{lem:sccstrut} we also know that the segment $w[i:j]$ starts in $C_\kappa$.
    Considering the cut $V_- \uplus V_+$ on the vertices of $G_{w,i-1}$, we call
    a set of vertices \emph{left} if all its vertices are in~$V_-$, \emph{right} if all its
    vertices are in~$V_+$, and \emph{middle} if it contains some vertices
    of~$V_-$ and some vertices of~$V_+$.
    By \cref{lem:ordering-consistent-with-topological-order-scc}, there is
    an index $1 \leq \theta \leq \kappa+1$ so that all SCCs of $C_1, \ldots,
    C_{\theta-1}$ are left, $C_\theta$ is middle or right, and all
    SCCs of $C_{\theta+1}, \ldots,
    C_\kappa$ are right.

  The overall intuition for the rest of the proof is that we will distinguish three cases.
  If all chunks but the final one stay on the left of the cut (Case
  1), or on the right of the cut (Case 2), then only the final chunk can cross
  the cut and the preceding lemmas (\cref{lem:cycle-twice},
  \cref{lem:tadpole-twice} and \cref{lem:normal-once})
  show that the bound is satisfied. Otherwise (Case 3), there is a first
  non-final chunk which ends on some SCC $C_\phi$ that contains nodes to the
  left of the cut and this chunk thus potentially crosses the cut (backwards,
  by \cref{lem:normal-forwards}). If the next chunk is final, the bound is
  satisfied, 
  and if the next chunk is non-final we show that it is the last non-final chunk
  which may cross the cut.
  This implies that the cut is crossed at most 4 times in total: once for each
  of the two non-final chunks considered, and potentially twice for the final
  chunk. We will lower the bound to 3 by showing that in fact the final chunk
  can only cross the cut once if it is preceded by two chunks that already
  crossed the cut.

    \medskip
    \textbf{Case 1}: $\theta = \kappa+1$. In that case, all SCCs $C_1, \ldots, C_\kappa$
    are left.
    We reason by induction and show the following two claims 
    by induction on $1 \leq \chi \leq \ell+1$:
    (1.) all vertices visited until~$c_\chi$ starts (including its starting vertex) 
    are in~$V_-$; and (2.) the
    cut is never crossed by any chunk of~$w[i:j]$ until $c_\chi$ excluded. For the base
    case, notice that (1.) all vertices visited before $c_1$ starts are
    in~$V_-$; and (2.) the cut is not yet crossed then because all edges
    are revisits. For the induction case, considering any $1 < \chi \leq
    \ell+1$, we know that $c_{\chi-1}$ is
    normal, and by induction hypothesis point (1.) we know that all
    vertices visited until $c_{\chi-1}$ starts are in~$V_-$. In particular
    $c_{\chi-1}$  starts at a vertex of $V_-$ and ends
    (because it is non-final) by revisiting an
    already-visited vertex, which is also in $V_-$.
    Thus, by definition of $\prec$ in case Normal (subcase~2),
    we know that
    all intermediate vertices of~$c_{\chi-1}$ are also in $V_-$, which,
    together with induction hypothesis point (1.), establishes (1.).
    Further, $c_{\chi-1}$ does not cross the cut, so we conclude (2.) by induction
    hypothesis point~(2.). This concludes the
    induction.  So, only the last chunk can cross the cut and 
    \cref{lem:cycle-twice}, \cref{lem:tadpole-twice} and \cref{lem:normal-once} conclude.
    Thus, we can assume that $\theta\leq\kappa$,
    so that $C_\kappa$ is middle or right.
    
    \medskip
    \textbf{Case 2}: 
    the vertices visited by $w[i:j]$ up to the starting vertex of the final chunk $c_{\ell+1}$ (included) do not belong to $C_1 \uplus \dots \uplus C_\theta$.
    In that case, given that the segment starts in~$C_\kappa$, we know that we must have $\kappa >\theta$.
    Further, in this case, all vertices visited before
    $c_{\ell+1}$ in $w[i:j]$ are in~$V_+$.
    Indeed, either these vertices are in
    $C_{\theta+1}, \ldots, C_{\kappa}$
    or they are intermediate
    vertices of non-final chunks with endpoints in $V_+$ and so are in $V_+$, by an inductive reasoning which is similar to Case~1.
    Hence, in this case only
    the final chunk could cross the cut 
    and we conclude like in the previous
    case.

    \medskip
    \textbf{Case 3}: the two previous cases do not apply. We focus on Case~3 in the rest of the proof.
    We then know that
    $C_\kappa$ is middle or right (otherwise Case~1 would have applied) and
    that at least one vertex of $w[i:j]$ up to the first vertex (included) of its final chunk $c_{\ell+1}$ belongs to $C_1 \uplus \dots \uplus C_\theta$ (otherwise Case~2 would have applied).
    Let $u$ be the first vertex of the segment $w[i:j]$ that belongs to $C_1 \uplus \dots \uplus C_\theta$, let $C_\phi$ ($1\leq \phi \leq \theta$) be the SCC of~$u$, and let $h$ be the index of the first edge that contains $u$ in the segment (smallest $i \leq h \leq j$ such that $w[h]$ has an extremity on $u$).
    There are two subcases: either $\theta=\kappa$ (Subcase~1), in which case
    $h=i$ and $u$ is the source of~$w[h]$; or $\theta<\kappa$ (Subcase~2), in which case
    $h\geq i$, and $u$ will necessarily be the target of~$w[h]$.
    In Subcase~2, any chunk whose last edge is in $w[i:h-1]$ must be non-final (because the
    final chunk is in $w[h:j]$), and all vertices visited by such chunks must belong to~$V_+$ by a reasoning similar to Case~2. 
    Further, in Subcase~2, there may be only one chunk whose first edge is in $w[i:h]$ and whose last edge is in $w[h:j]$ (note that it ends with $w[h]$ because in that case $w[h]$ ends on the already-visited vertex $u$): such a chunk is non-final, so it is normal, and it can cross the cut at most once by \cref{lem:normal-once}.
    In \cref{fig:scc-chunks} we represent Subcase~2 of Case $3$, where $\theta <
    \kappa$: the chunk whose first edge is in $w[i:h]$ and whose last edge is in $w[h:j]$ is
    called  $c_{\chi-1}$, and starts from an SCC $C_\lambda$ with $\theta<
    \lambda \leq \kappa$.\\
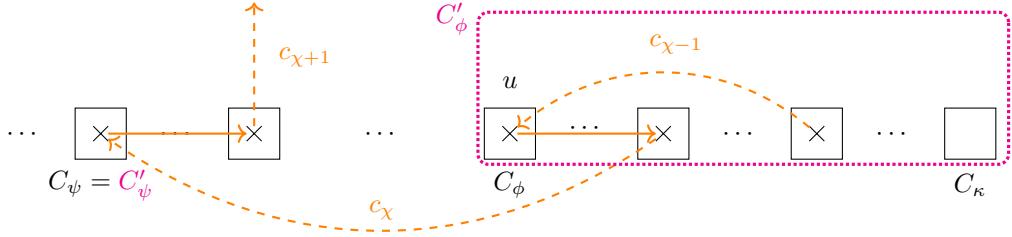
\begin{figure}[h!]
\centering
    \begin{tikzpicture}[scale=.34]
        \node (cx) at (30.5,3.5) {$\textcolor{orange}{c_{\chi-1}}$};
        \node (cxx) at (19,-3) {$\textcolor{orange}{c_{\chi}}$};
        \node (cxxx) at (16,3) {$\textcolor{orange}{c_{\chi+1}}$};
        \node (jx) at (24,2) {$u$};
  \node(dots0) at (5,0) {$\cdots$};
        \draw (7,1) rectangle +(2,-2);
\node(dots0) at (11,0) {$\cdots$};
        \draw (13,1) rectangle +(2,-2);
\node(dots2) at (19,0) {$\cdots$};
        \draw (23,1) rectangle (25,-1);
\node(dots3) at (27,.2) {$\cdots$};
        \draw (29,1) rectangle (31,-1);
\node(dots4) at (33,0) {$\cdots$};
        \draw (35,1) rectangle (37,-1);
\node(dots5) at (39,0) {$\cdots$};
        \draw (41,1) rectangle (43,-1);
        \node (tchi) at (24,-2) {$C_\phi$};
        \node (schii) at (42,-2) {$C_\kappa$};
        \node (tchii) at (8,-2) {$C_\psi=\textcolor{magenta}{C_\psi'}$};
        \node (schiii) at (16,-2) {};
        \node[cross] (f) at (36, 0) {};
        \node[cross] (v1) at (24, 0) {};
        \node[cross] (v2) at (30, 0) {};
        \node[cross] (v3) at (8, 0) {};
        \node[cross] (v4) at (14, 0) {};
        \node (t) at (14, 5.5) {};

        \draw (f) edge[->,dashed,orange,thick,bend right=40] (v1);
        \draw (v1) edge[->,orange,thick] (v2);
        \draw (v2) edge[->,dashed,orange,thick,bend left=35] (v3);
        \draw (v3) edge[->,orange,thick] (v4);
        \draw (v4) edge[->,dashed,orange,thick] (t);

        \draw[rounded corners,magenta,densely dotted,very thick] (43.5,4.75) rectangle (22.75,-1.2);
        \node[magenta] at (21.75,4.5) {$C_\phi'$};

    \end{tikzpicture}
    \caption{Illustration of non-final normal chunks relative to the SCCs of $G_{w,i-1}$. The successive squares correspond to the SCCs $C_1^{i-1}, \ldots, C^{i-1}_{\kappa_{i-1}}$. The successive chunks are depicted in curved dashed orange lines, separated by revisited edges (straight solid orange lines). The dashed pink rectangle shows the last SCC $C'_\phi$ in the SCC decomposition of~$G_{w,h'-1}$. We number the chunks as in Case~3 of the proof of \cref{lem:segbound}. We only illustrate the starting SCC of chunk $c_{\chi+1}$ as this one may be the final chunk.
    }
    \label{fig:scc-chunks}
\end{figure}

    Let $h'$ be the index of the first edge which starts from $u$ in the segment
    $w[i:j]$: i.e., we set $h'=h$ if we were in Subcase~1 above, and $h'=h+1$ if
    we were in Subcase~2.
    Let $c_{\chi}$
    be the first chunk contained in $w[h':j]$. If $c_{\chi}$
    is the final chunk, then it crosses the cut at most twice (by
    \cref{lem:cycle-twice} and \cref{lem:tadpole-twice} and \cref{lem:normal-once}), and we
    can conclude that the segment $w[i:j]$ crosses the cut at most three times in total.
    Hence, in the rest of the proof, we only need to consider the case where $c_{\chi}$ is not the final chunk.

    Let us consider the SCC decomposition of $G_{w,h'-1}$,
    which we write $C_1', \ldots, C_{\kappa'}'$. We
    make the following claims, dubbed $(\dagger)$: we have $\kappa' = \phi$, we
    have $C_{\eta} = C'_{\eta}$ for all $1
    \leq \eta < \phi$,
    and we have that $C_{\phi}'$ contains all vertices of $C_{\phi}, \ldots,
    C_{\kappa}$ together with all intermediate vertices of the non-final chunks
    $c_1, \ldots, c_{\chi-1}$.

    Let us show $(\dagger)$. The subwalk $w[i:h'-1]$ (which is empty precisely in Subcase 1)
    goes from a vertex of $C_\kappa$
    to a vertex of $C_\phi$, so the SCCs $C_\phi, \ldots, C_\kappa$ and the
    intermediate vertices of non-final chunks $c_1, \ldots, c_{\chi-1}$ have all
    been merged into the SCC $C_\phi'$ (this has no effect in Subcase 1);
    and $w[i:h'-1]$
    did not revisit any vertex of $C_\eta$ for $\eta < \phi$ by minimality of~$h$.
    Thus, we have shown $(\dagger)$. 

    Let us now continue the proof, and remember that  the
    first chunk $c_{\chi}$ of $w[h':j]$ is non-final. 
    We claim that $c_{\chi}$ must conclude by
    re-visiting a vertex in an SCC $C_{\psi}'$ with $\psi < \phi$, as pictured
    in \cref{fig:scc-chunks}. Indeed, let
    us proceed by contradiction and assume that $c_{\chi}$ ends by reaching another 
    vertex $y$. Then that vertex must be already visited (because the chunk is non-final so it is not a tadpole), 
    so $y$ is a vertex of $C_\phi'$.
    Then, letting $e= (u,v)$ be the first edge of $c_{\chi}$, 
    we know that $e$ is first-visited. 
    Further, by definition of~$C_{\phi}'$ being an SCC, 
    we have a path $\pi$ from $u$ to~$y$ in~$C_{\phi}'$.
    Thus, $e$ and $\pi$ witness a segment end when $c_{\chi}$ ends. 
    This contradicts the assumption that $c_{\chi}$ is non-final.
    Hence, $c_{\chi}$ finishes by revisiting a vertex in an SCC
    $C_{\psi}'$ with $\psi < \phi$. By \cref{lem:normal-once} the chunk
    $c_{\chi}$ also crosses the cut at most once, and we proved above that the segment cannot cut the cross more than once in $w[i:h']$.
    We next show two last properties, which suffice to conclude: 
    (1.) no non-final chunk can cross the cut after $c_{\chi}$, 
    and (2.) the final chunk $c_{\ell+1}$ crosses the cut at most once.
    Note that (2.) is not necessary to get a constant bound on the number of times
    $w[i:j]$ crosses the cut, but we use the property to get down to the precise bound of 3 that we claimed.

    To show (1.), let us proceed by contradiction. Assume that there is a
    non-final chunk $c_{\chi'}$ with $\chi' > \chi$ which crosses the cut.
    The chunk $c_{\chi'}$ is non-final, so by \cref{lem:normal-forwards} it 
    crosses the cut backwards. Now, recall that after $c_{\chi}$ concludes, the walk $w$
    starts in some SCC
    $C_{\psi}' = C_{\psi}$
    which is a left SCC. Hence, to cross the cut
    backwards, the segment $w[i:j]$
    must reach a vertex 
    $y$ 
    of~$V_+$ after $c_{\chi}$ has concluded. 
    The vertex $y$ cannot be in $C_{\eta}'$ for any $\eta <
    \phi$, because for such values of $\eta$ we have $C_{\eta}' =
    C_{\eta}$ and such SCCs are left.
    By the
    definition of $\prec$ in case Normal (subcase~2),
    the only way for the segment $w[i:j]$ to reach a vertex $y$ of~$V_+$, is for
    the vertex $y$ to be in $C_{\phi}'$.
    Now, we will again
    exhibit a contradiction by showing that the segment should have ended before
    $c_{\ell+1}$.
    Indeed, the first edge of $c_{\chi}$ is a
    first-visited edge $e'$ whose source $u'$ is in $C_{\phi}'$. 
    By definition of the SCC $C_{\phi}'$ we know that there was a path
    $\pi'$ from $u'$
    to~$y$ in the graph spanned by $w$ before $c_{\chi}$ started.
    The path $\pi'$ and edge $e'$ then witness that the segment $w[i:j]$
    ends upon reaching $y$, contradicting the assumption that a normal chunk of~$w[i:j]$
    will again cross the cut backwards after that point. Thus, we have shown
    by contradiction point (1.): non-final chunks after $c_{\chi}$ no longer
    cross the cut backwards.

    To show (2.), we make a case disjunction depending on the type of the final
    chunk $c_{\ell+1}$. If $c_{\ell+1}$ is a normal chunk, we immediately conclude by \cref{lem:normal-once}.
    If $c_{\ell+1}$ is a tadpole chunk, by \cref{lem:tadpole-twice},
    the chunk $c_{\ell+1}$
    only crosses
    the cut twice if it contains intermediate vertices in $V_-$ and $V_+$.
    But in this case, 
    the definition of $\prec$ in case Tadpole would ensure that all SCCs of
    $C_1, \ldots, C_\kappa$ were left (i.e., the cut is within the tadpole
    vertices, which are put in $\prec$ after all vertices of these SCCs); but
    the case where all SCCs $C_1, \ldots, C_\kappa$ are left was excluded
    earlier in the proof.

    The last case is when $c_{\ell+1}$ is a cycle chunk. By
    \cref{lem:cycle-twice},
    $c_{\ell+1}$ crosses the cut twice
    precisely when its starting and ending vertex $v$ is in $V_-$ but one of
    the intermediate vertices is in~$V_+$.
    By the reasoning in (1.) above, we know that after $c_{\chi}$ and
    before $c_{\ell+1}$ the walk~$w$ never visits a vertex of $C_\phi'$, so
    the attachment point $v$ of $c_{\ell+1}$
    must have been outside of $C_{\phi}'$.
    We then know by case Cycle in the
    definition of~$\prec$ that the intermediate vertices of $c_{\ell+1}$ are put
    right after $v$ in~$\prec$, hence before all vertices of~$C_{\phi}'$. But
    this implies that the vertices of $C_{\phi}'$ are all in $V_+$, so
    $C_{\phi}'$ is right. But we had shown that $C_{\phi}'$ contains $C_\phi$
    which by definition is middle or left. Thus, we have reached a contradiction and $c_{\ell+1}$ does not cross the cut twice.

    We have now covered all cases, so we have successfully established that the
    segment $w[i:j]$ crosses the cut at most three times in total, concluding the proof.
\end{proof}

\section{Computing an Edge-Minimum Walk of Bounded Cutwidth}\label{sec:cw-algo}
Up to now, we have shown \cref{cor:optbound}: optimal solutions to the
\ewm problem have cutwidth at most $3 + 3\log_2 q$. In this section, we conclude the proof of \cref{th:main-theorem} by showing that optimal solutions of bounded cutwidth can be efficiently found.

At a high level, our algorithm proceeds by searching for a shortest path in a graph of configurations. The approach is similar to the token game approach used in~\cite{FeldmanR06} (which finds a shortest path in a graph whose nodes represents the position of tokens moving along the edges of the solution), or to the dynamic programming approach used in~\cite{FeldmannM23} (but we use cutwidth instead of treewidth for simplicity).

The intuitive principle of our algorithm is the following. As we follow a path in the graph of configurations and move from one configuration to another, we choose which edges of the original graph to keep in the solution. 
To be more precise, the path will start on the empty configuration which denotes that no edges have been selected, and will iteratively add vertices and some of their incident edges to the solution. 

To make sure that the selected edges do form a solution, we could try to record
in the configurations the exact set of edges kept in the solution so far.
However, the space of configurations would then be exponential. To avoid
this, a configuration does not really store the entire set of chosen edges: instead it concisely represents the possible lengths modulo $q$ of walks between a small subset of vertices of~$G$, whose size is bounded as a function of the cutwidth.

The distance in the configuration graph from the initial configuration to any configuration $\Xi$ will reflect the smallest cardinality of a set of edges that achieve the set of walks witnessed by~$\Xi$. 
The algorithm therefore looks for a shortest path in the configuration graph from the initial configuration to any configuration which witnesses the $st$-walk of length $r \bmod q$ required by the \ewm problem.

Formally, fix the input graph $G = (V, E)$, the source $s \in V$ and the target $t \in V$, and the modularity requirement $r \bmod q$.
Let $\omega \in \mathbb{N}$ be a \emph{domain size bound}, which is related to the cutwidth, but will be precisely defined later. 
Let us define the notion of configurations formally: 

\begin{definition} [Configuration]
\label{def:config}
A \emph{$V,\omega,q$-configuration} $\Xi=(D,\rho)$ 
is a subset $D$ of at most $\omega$ vertices of~$V$,
called the \emph{domain} of the configuration, together with a function $\rho$ mapping each ordered pair $(u,v)$ for $u,v \in D$ to a subset of $\{0, \ldots, q-1\}$. Having fixed the vertex set $V$ of the input graph $G$, we denote by $\Phi_{\omega,q}$ the set of all possible $V,\omega,q$-configurations, and omit the subscript when clear from context. 
\end{definition}

To compute transitions in the configuration graph, our algorithm will need to compute the \emph{closure} of a configuration $(D, \rho)$. 
The point of the closure is to ensure that, whenever we can achieve a walk from $u\in D$ to $v \in D$ via intermediate vertices of~$D$ and using remainders given by~$\rho$, then that walk can be witnessed directly by a value in $\rho(u,v)$.
Formally:

\begin{definition} [Closure]
    \label{def:closure}
Given a configuration $\Xi = (D, \rho)$, an 
\emph{internal walk} of~$\Xi$ is a sequence of vertices of~$D$ together with a choice of remainder for each step. Formally, it is a sequence $v_1, \ldots, v_\ell \in D$ and a choice of remainders $0\leq r_1, \ldots, r_{\ell-1} < q$ such that we have $r_i \in \rho(v_i, v_{i+1})$ for each $1 \leq i < \ell$. The \emph{total length} of the internal walk is the sum of the remainders modulo~$q$, namely, $\sum_{1 \leq i < \ell} r_i \bmod q$. 

The \emph{closure} of $(D,\rho)$ is the configuration $(D,\rho')$ where, for each $u,v\in D$, for each $r' \in \{0, \ldots, q-1\}$, we have $r' \in \rho'(u,v)$ iff there is an internal walk of total length $r'$ from~$u$ to~$v$ in~$(D,\rho)$.
Note that we always have $\rho'(u,v) \supseteq \rho(u,v)$ because for each $r' \in \rho(u,v)$ we can take the single-edge internal walk $u, v$ with remainder $r'$.
\end{definition}
We can easily compute the closure of a configuration in polynomial time in $q$
and $\omega$, for instance using a product construction. Specifically, create a graph on vertices $D \times \{0, \ldots, q-1\}$, then add the following edges: for each $u,v \in D$ and each $r' \in \rho(u,v)$, for each $i \in \{0, \ldots, q-1\}$ create an edge from $(u,i)$ to $(v,i+r'\bmod q)$. Then compute the transitive closure and define $\rho'(u,v)$ to be the set of $r'$ such that $(u,0)$ has a path to $(v,r')$.

We now define the graph over configurations, which we call $\Gamma_{\omega,q}$
and which also depends on the directed graph $G$ given as input to \ewm.
We omit the subscripts when clear from context.
\begin{definition}
\label{def:configuration-graph}
    The \emph{graph of configurations} $\Gamma_{\omega,q}$  is 
a weighted and labeled directed graph: each edge carries an integer cost and is labeled by a subset of edges of the original graph~$G$.
The vertex set of $\Gamma$ is the set $\Phi$ of all $V,\omega,q$-configurations. To define the edges, let us choose any configuration $\Xi \in \Phi$ and define the outgoing edges of~$\Xi$. These edges are of two kinds:
\begin{itemize}
    \item \emph{Forget}: from $\Xi = (D, \rho)$ with $D$ nonempty, for each $v \in D$,
    letting $D' := D \setminus \{v\}$ be the new domain,
    we have an edge leaving $\Xi$ which has cost 0, is labeled with the empty set of edges, and
    leads to 
    $(D', \rho_{|D'})$ where $\rho_{|D'}$ is the restriction of~$\rho$ to~$D'$
    \item \emph{Introduce}: from $\Xi = (D, \rho)$ with $|D| < \omega$, for each $v \in V \setminus D$, 
   let $D' := D \uplus \{v\}$ be the new domain,
    and let $E_{v,D'}$ be the set of edges of~$G$ which are of the form $(v,
    v')$ or $(v',v)$ with $v' \in D'$; we also add to $E_{v,D'}$ the self-loop edge on~$v$ if it exists.
    Then for each $E' \subseteq E_{v,D'}$, we have an edge leaving $\Xi$ which has cost $|E'|$, is labeled by~$E'$, and leads to
    the closure of the configuration $(D', \rho')$ with $\rho'$ intuitively defined from $\rho$ by adding the edges of~$E'$, formally:
\begin{itemize}
  \item For all $u,u' \in D \cup \{v\}$, we initialize $\rho'(u, u') := \emptyset$.
  \item For each $(u, u') \in D \times D$, we set $\rho'(u,u') := \rho(u,u')$.
  \item For each edge $e \in E_{v,D'}$, we set $\rho'(e) := \{1\}$. Note that by definition of $E_{v,D'}$ this case is disjoint from the previous one.
\end{itemize}
\end{itemize}
Note that we may have several Introduce edges with the same source and target configurations but labeled with different sets of edges and with different costs; in this case all of these edges exist in $\Gamma$ as defined above, although we could equivalently have decided to keep only one of these edges among those with minimum cost.
\end{definition}

We have now defined the graph $\Gamma$ of configurations. We will look for shortest paths in this graph from the \emph{initial configuration} to a \emph{final configuration}, namely:

\begin{definition}
\label{def:initial-final-configuration}
The \emph{initial configuration} is simply the configuration with empty domain.

A configuration is \emph{final} if it contains $s$ and $t$ and features a walk with the requisite remainder, i.e., 
the configuration $(D, \rho)$ is final if $s,t \in D$ and 
$r \in \rho(s,t)$.
\end{definition}

We can now define our algorithm to solve the \ewm problem, which we call the
\emph{\ewm algorithm}.
We first set  the value $\omega$, following the cutwidth bound, to be: $\omega := 6 + 3\log_2 q$,
i.e., three more than the cutwidth bound that follows from \cref{cor:optbound}.
(Intuitively, we add 2 to make sure that the sources $s$ and $t$ can always be part of the domain of configurations, and we add 1 extra to make sure that we can always perform Introduce steps before Forget steps.)
The \ewm algorithm then builds explicitly the graph~$\Gamma_{\omega,q}$
and computes a shortest path $\pi$ in $\Gamma$ from the initial configuration to a final configuration.
Once such a shortest path $\pi$ is found\footnote{
If there is no path in~$\Gamma$ from an initial
configuration to a final configuration, then we return $\emptyset$ to 
indicate that there is no subgraph of $G$ with an $st$-walk of length $r \bmod
q$. Note that this case can be excluded from the outset, simply by checking
whether $G$ contains an $st$-walk of length $r \bmod q$: this can be done, e.g.,
with the product construction.},
then the algorithm returns the subgraph
of~$G$ formed of the edges of~$G$ obtained as the union of all edge labels
in~$\pi$.

We first analyze the time complexity of the \ewm algorithm, and then we show that it is correct. We have:

\begin{claim}
\label{clm:algo-runningtime}
    The \ewm algorithm has running time $n^{\omega+1} \cdot 2^{O(\omega^2 q)}$.
\end{claim}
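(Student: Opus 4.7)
The plan is to bound separately the size of the configuration graph $\Gamma$, the cost of generating each edge (including closure computation), and the cost of shortest path search, then combine these factors.

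First I would bound the number of vertices of $\Gamma$, i.e., the number of $V,\omega,q$-configurations. A configuration $(D,\rho)$ is specified by choosing a domain $D \subseteq V$ of size at most $\omega$ (at most $\sum_{d=0}^{\omega} \binom{n}{d} \leq (n+1)^\omega$ choices), and by choosing, for each of the at most $\omega^2$ ordered pairs $(u,v) \in D \times D$, a subset of $\{0,\dots,q-1\}$ (giving $2^q$ choices per pair). This yields at most $n^\omega \cdot 2^{q\omega^2}$ configurations.

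Next I would bound the out-degree of each configuration. Forget transitions give at most $\omega$ outgoing edges. For Introduce transitions from $(D,\rho)$, we pick $v \in V \setminus D$ (at most $n$ choices) and then a subset $E' \subseteq E_{v,D'}$; since $E_{v,D'}$ has at most $2\omega+1$ edges, there are at most $2^{2\omega+1}$ subsets. Thus each vertex of $\Gamma$ has at most $\omega + n \cdot 2^{2\omega+1}$ outgoing edges, and the total number of edges of $\Gamma$ is at most $n^{\omega+1} \cdot 2^{O(\omega^2 q)}$.

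For each outgoing edge I would verify that it can be produced in time $\mathrm{poly}(n,\omega,q)$. The only nontrivial part is computing the closure of the target configuration of an Introduce edge; as sketched after \cref{def:closure}, this is done by a product construction on $D \times \{0,\dots,q-1\}$ followed by a transitive closure, costing $\mathrm{poly}(\omega,q)$. Hence the whole graph $\Gamma$ can be constructed in time $n^{\omega+1} \cdot 2^{O(\omega^2 q)} \cdot \mathrm{poly}(n,\omega,q)$, which is absorbed in the claimed bound.

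Finally, since edge weights are non-negative integers bounded by $|E_{v,D'}| \leq 2\omega+1$, computing a shortest path from the initial configuration to any final configuration (e.g., by Dijkstra's algorithm) runs in time nearly linear in the size of $\Gamma$, i.e., $n^{\omega+1} \cdot 2^{O(\omega^2 q)}$. Summing all contributions gives the claimed bound. No step is really an obstacle; the main care is in the counting of subsets $E' \subseteq E_{v,D'}$ and in observing that the $2^{q\omega^2}$ factor from $\rho$ dominates the $2^{O(\omega)}$ factor from edge labels, so the final exponent in $q$ and $\omega$ is $O(\omega^2 q)$ as stated.
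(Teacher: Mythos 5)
Your proof takes essentially the same approach as the paper: count the configurations ($n^\omega 2^{q\omega^2}$), bound the out-degree (at most $\omega$ Forget plus $O(n\cdot 2^\omega)$ Introduce edges, each costing $\mathrm{poly}(\omega,q)$ to produce including the closure), and run Dijkstra on the resulting graph. One small inaccuracy: you write that construction costs an extra factor $\mathrm{poly}(n,\omega,q)$ per edge which ``is absorbed in the claimed bound''; a genuine $\mathrm{poly}(n)$ factor would \emph{not} be absorbed (it would raise the exponent of $n$ above $\omega+1$), but this factor is not actually needed since, as you note yourself, the per-edge work is only $\mathrm{poly}(\omega,q)$ --- which is dominated by $2^{O(\omega^2 q)}$.
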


\begin{proof}
The number of configurations is bounded from above by
$(\omega+1) \cdot n^\omega \cdot 2^{\omega^2 q}$:
we choose the cardinality of the domain from $\omega+1$ possible sizes, then choose concrete elements from the domain and values for the function $\rho$ (we overestimate the number of elements and values by assuming the domain has size $\omega$).
Further, the space to store a configuration is $O(\omega^2 \cdot q)$.
For each configuration, we define at most $\omega$ outgoing edges of type Forget, and
  $O(n \cdot 2^{\omega})$ edges of type Introduce, each of which induces a
  running time of $O(\text{poly}(\omega \cdot q))$ (in particular to
  compute the closure). Hence:
\begin{itemize}
    \item The number $N$ of vertices of the graph is  
    $O(\omega \cdot n^\omega \cdot 2^{\omega^2 q})$
    i.e., $n^\omega \cdot 2^{O(\omega^2 q)}$.
\item The number $M$ of edges of the graph is $n^{\omega+1} \cdot 2^{O(\omega^2 q)}$.
\item The complexity of building the graph is $n^{\omega+1} \cdot 2^{O(\omega^2
  q)} \cdot \text{poly}(\omega\cdot q)$, i.e., 
$n^{\omega+1} \cdot 2^{O(\omega^2 q)}$.
\end{itemize}
Then the complexity of computing the shortest path, e.g., with Dijkstra's algorithm, is in
$O(M + N\log N)$, 
which is again 
$n^{\omega+1} \cdot 2^{O(\omega^2 q)}$.
Hence the overall complexity is $n^{\omega+1} \cdot 2^{O(\omega^2 q)}$.
\end{proof}

We now show that the \ewm algorithm correctly solves the \ewm problem. There are two directions to the proof. First, we must show that the algorithm is \emph{sound}:
whenever the algorithm returns a subgraph $E'$ then indeed that subgraph is a candidate solution, i.e., it contains an $st$-walk of length $r \bmod q$. 
Second, we must show that the algorithm is \emph{complete}: whenever $E'$ is an optimal solution, i.e., an edge-minimum subgraph of~$G$ with an $st$-walk of length $r\bmod q$,
then there is a path in~$\Gamma$ of cost $|E'|$ whose union of labels is~$E'$
without duplicates (hence the path cost is $|E'|$) that goes from an initial
configuration to a final configuration. This implies that the algorithm always returns a subgraph of cost no more than $E'$.
Combining the soundness and completeness, we deduce that the algorithm always
returns an optimal solution if one exists, i.e., when the entire graph $G$
contains an $st$-walk of length $r \bmod q$. Further, when there is no candidate solution
at all then the algorithm correctly identifies it.

For the soundness and completeness proof, it will be useful to state an invariant about the configurations reached in any path starting from an initial configuration, which intuitively states that the remainders expressed in the configuration correctly reflect the possible remainders of paths between vertices of the configuration using the edges taken so far. Namely:

\begin{claim}
\label{clm:invariant-dyn}
  Let $\pi = \Xi_1, \ldots, \Xi_\ell$ be a path in $\Gamma$ from the initial
  configuration $\Xi_1$. Let $G_1 = (V, E_1)$, $\ldots$, $G_\ell = (V, E_\ell)$
  be the sequence of subgraphs of~$G$ defined in the following way: we have $E_1
  = \emptyset$, and for each $1 < i \leq \ell$ we set $E_i = E_{i-1} \cup E'_i$
  where $E_i'$ is the edge set that labels the edge of~$\Gamma$ used to go from
  $\Xi_{i-1}$ to~$\Xi_i$ in~$\Gamma$. Then the following is true: for any $1
  \leq i \leq \ell$, writing $\Xi_i = (D_i, \rho_i)$, for any $u_1, u_2 \in D_i$,
  for any $r' \in \{0, \ldots, q-1\}$, we have $r' \in \rho_i(u_1, u_2)$ iff there
  is a $u_1 u_2$-walk in $G_i$ whose length modulo~$q$ is~$r'$.
\end{claim}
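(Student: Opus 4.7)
The plan is to prove \cref{clm:invariant-dyn} by induction on the index $i$ along the path $\pi$. The base case $i = 1$ is vacuous: $D_1 = \emptyset$, so there are no pairs $(u_1, u_2)$ to check, and $G_1$ has no edges. The inductive step splits into the two types of transition in~$\Gamma$. A Forget transition takes $D_i = D_{i-1} \setminus \{v\}$, $\rho_i = \rho_{i-1}|_{D_i}$, and leaves $E_i = E_{i-1}$ (the Forget edge is labeled by the empty set), so $G_i = G_{i-1}$ and $\rho_i(u_1,u_2) = \rho_{i-1}(u_1,u_2)$ for every $u_1, u_2 \in D_i \subseteq D_{i-1}$; the claim then follows immediately from the inductive hypothesis applied to~$\Xi_{i-1}$.

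The substantive case is an Introduce transition, for which $D_i = D_{i-1} \cup \{v\}$, $E_i = E_{i-1} \cup E'$ for some $E' \subseteq E_{v, D_i}$, and $\rho_i$ is the closure of the intermediate configuration $\rho'$ described in \cref{def:configuration-graph}. For the $(\Rightarrow)$ direction, if $r' \in \rho_i(u_1, u_2)$ then \cref{def:closure} provides an internal walk $u_1 = y_1, y_2, \dots, y_m = u_2$ in $(D_i, \rho')$ whose step remainders sum to $r' \bmod q$. Each such step is witnessed either by an edge of $E' \subseteq E_i$ (a single-edge walk of remainder $1$ in $G_i$) or, for a pair in $D_{i-1} \times D_{i-1}$, by a concrete walk in $G_{i-1} \subseteq G_i$ supplied by the inductive hypothesis; concatenating these pieces yields a $u_1 u_2$-walk in $G_i$ of length $r' \bmod q$, as required.

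For the $(\Leftarrow)$ direction, starting from a $u_1 u_2$-walk $w$ in $G_i$ of length $r' \bmod q$, the plan is to split $w$ at its visits to the newly introduced vertex $v$ and construct a matching internal walk in $(D_i, \rho')$. Each maximal sub-walk avoiding $v$ uses only edges not incident to $v$, hence lies in $G_{i-1}$ with both endpoints in $D_{i-1}$, so the inductive hypothesis captures its remainder inside $\rho_{i-1}$, which coincides with $\rho'$ on $D_{i-1} \times D_{i-1}$; meanwhile each edge of $w$ touching $v$ contributes a single step of remainder $1$ via $\rho'$. Piecing these together produces an internal walk of total remainder $r' \bmod q$, and the closure definition then yields $r' \in \rho_i(u_1, u_2)$. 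The main obstacle I anticipate is exactly this direction: it hinges on verifying that every $v$-incident edge traversed by $w$ is reflected in~$\rho'$, which amounts to a careful bookkeeping of how $v$-incident edges enter $E_i$ and interact with the Introduce and Forget transitions along~$\pi$.
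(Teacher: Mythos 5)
Your induction structure (base case, Forget case, Introduce case split into two implications) and your forward direction match the paper's. The one deviation is the decomposition in the backward direction: you split the walk at visits to the newly introduced vertex $v$, whereas the paper isolates occurrences of $E_i'$-edges. When $v$ is introduced for the first time along $\pi$, so that $E_{i-1}$ contains no $v$-incident edges, the two decompositions coincide and both arguments go through: every piece between split points is then a $G_{i-1}$-walk with both endpoints in $D_{i-1}$, and the inductive hypothesis applies.

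The obstacle you flag at the end is genuine, and in fact the paper's own backward direction makes the same unstated assumption. If $v$ was introduced at an earlier step of $\pi$, forgotten, and is now re-introduced with a different label $E_i'$, then $E_{i-1}$ may contain $v$-incident edges. Your split at $v$-visits leaves those edges unaccounted for (they are not reflected in $\rho_i'$), and the paper's split produces a $G_{i-1}$-subwalk with endpoint $v \notin D_{i-1}$, so the inductive hypothesis does not apply. One can build a simple path $\pi$ in $\Gamma$ on which the backward direction fails: introduce $a$; introduce $v$ with $(a,v)$; introduce a helper vertex $w$ with $(a,w)$ so that the intermediate domains stay distinct; forget $v$; introduce $b$; re-introduce $v$ with only $(v,b)$. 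Then $G_\ell$ contains the walk $a,v,b$ of length $2$ but $\rho_\ell(a,b) = \emptyset$, since $\rho'$ holds no step from $a$ to $v$. So the claim, as applied in the backward direction, really requires restricting to paths on which each vertex is introduced at most once (or asserting only the forward implication for general $\pi$). The algorithm is unaffected, because soundness uses only the forward direction and completeness applies the backward direction only to the explicitly constructed path that introduces every vertex exactly once; but no amount of ``bookkeeping'' in the general case can close the hole you rightly noticed.
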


Note that, in the definition of the graphs $G_i$, we may add the same edge
of~$G$ multiple times because it occurs in the label of several edges of~$\pi$;
this can happen even if $\pi$ is a simple path in~$\Gamma$. As it turns out,
this never happens when $\pi$ is a shortest path (because we are then, in
effect, paying twice for the same edge); but the invariant of
\cref{clm:invariant-dyn} applies to paths $\pi$ even if they do traverse edges labeled with non-disjoint edge sets.

\begin{proof}[Proof of Claim~\ref{clm:invariant-dyn}]
We show this claim by induction on the length of the path~$\pi$.
The base case of the induction is immediate: if $\ell=1$ then $\pi$ only consists of the initial configuration, then $\Xi_1 = (D_1, \rho_1)$ is the initial configuration so both $D_1$ and $G_1$ are empty. 

For the induction case, let us show the claim for $1 < i \leq \ell$. If the edge from $\Xi_{i-1}$ to $\Xi_i = (D_i, \rho_i)$ is a Forget edge, then we have $G_i = G_{i-1}$ and 
$D_i \subseteq D_{i-1}$ so we immediately conclude using the induction hypothesis.
If the edge is an Introduce edge, then let us show both directions.
Let us fix the endpoints $u_1,u_2 \in D_i$ and the remainder $r' \in \{0, \ldots, q-1\}$.

For the forward implication, assume that we have $r' \in \rho_i(u_1,u_2)$.
  We know that $\Xi_i$ was defined by applying the closure operator: let $\Xi_i' = (D_i',\rho_i')$ be the configuration defined in \cref{def:configuration-graph} before applying the closure operator. 
Recalling \cref{def:closure},
the fact that $r' \in \rho_i(u_1,u_2)$ must be witnessed by an internal walk of $\Xi_i'$ of total length $r'$. 
By definition of $\Xi_i'$ in \cref{def:configuration-graph}, this internal walk
  can traverse two kinds of edges of $\Xi_i'$: edges that already existed in
  $\Xi_{i-1}$, and edges that correspond to edges of~$G$ added as part of
  $E_i'$. We apply the induction hypothesis to the first kind of edges to obtain witnessing walks in $G_{i-1}$ with the same endpoints and remainder.
  Together with the edges of~$E_i'$ used in the internal walk, this gives us a witnessing walk whose length modulo~$q$ is the total length $r'$ of the
  internal walk. This establishes that $G_i$ indeed contains the requested
  $u_1 u_2$-walk.
  
For the backward implication, assume that there is a walk from $u_1$ to $u_2$ in $G_i$ whose length modulo $q$ is $r'$.
  Decompose this walk to isolate the occurrences of the edges of $E_i'$ on the
  one hand, and the subwalks using only edges of $G_{i-1}$ on the other hand.
  Using the induction hypothesis, we know that the subwalks of the second kind
  are reflected in the function $\rho_{i-1}$ which witnesses the existence of
  these walks between their endpoints and with the right remainder. The
  definition of $\rho_i'$ ensures that whenever we have $r' \in
  \rho_{i-1}(u_1',u_2')$ for some $r' \in \{0, \ldots, q-1\}$ and some
  vertices $u_1'$ and $u_2'$, then we have
  $r' \in \rho_i'(u_1',u_2')$, and it also ensures that for each edge $(u,v) \in
  E_i'$ with $v$ the one vertex of $D_i \setminus
  D_{i-1}$ and $u \in D_i$,
  we have $1 \in \rho_i'(u,v)$; the same applies to edges of the form $(v,u)
  \in E_i'$. This information on $\rho_i'$ allows us to
  conclude to the existence of an internal walk from $u_1$ to $u_2$ with total
  length~$r'$, which ensures that $r' \in \rho_i(u_1,u_2)$ because we took the closure of~$\Xi_i'$ to define~$\Xi_i$.

We have shown the equivalence, which establishes the inductive step and concludes the proof of the invariant.
\end{proof}

This invariant immediately implies the soundness of the algorithm:
\begin{claim}
\label{clm:algo-correct}
    The \ewm algorithm is sound.
\end{claim}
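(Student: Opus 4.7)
The plan is to derive soundness as a direct consequence of the invariant established in Claim~\ref{clm:invariant-dyn}, with essentially no additional work beyond unpacking definitions. Suppose the \ewm algorithm returns a subgraph $E'$ (the case where it returns $\emptyset$ is ruled out a priori by the preliminary check described in the footnote). Then, by construction, there exists a shortest path $\pi = \Xi_1, \ldots, \Xi_\ell$ in $\Gamma$ from the initial configuration $\Xi_1$ to some final configuration $\Xi_\ell = (D_\ell, \rho_\ell)$, and $E'$ is the union of all edge labels appearing along $\pi$.

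First, I would observe that $E'$ coincides with the set $E_\ell$ defined in Claim~\ref{clm:invariant-dyn}: both are obtained by accumulating (as a union) the edge labels of $\pi$, so the fact that the construction in the invariant writes $E_i = E_{i-1} \cup E'_i$ with possible repetition is immaterial for the resulting set. Next, by Definition~\ref{def:initial-final-configuration}, the fact that $\Xi_\ell$ is a final configuration means that $s, t \in D_\ell$ and that $r \in \rho_\ell(s, t)$.

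Finally, I would apply Claim~\ref{clm:invariant-dyn} at index $i = \ell$ with $u_1 = s$, $u_2 = t$, and $r' = r$. The invariant directly gives an $st$-walk in $G_\ell = (V, E_\ell) = (V, E')$ whose length modulo~$q$ equals~$r$, which is exactly the defining condition for $E'$ to be a candidate solution in the sense of Section~\ref{sec:prelim}. This completes the argument.

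I do not anticipate any real obstacle here: the work has all been done by Claim~\ref{clm:invariant-dyn}, and the soundness proof is essentially a one-line specialization of that invariant to the endpoints of the shortest path returned by the algorithm. The only subtlety worth stating explicitly is the identification of the labeling union $E'$ with the accumulated set $E_\ell$, which requires no more than noting that set union is idempotent.
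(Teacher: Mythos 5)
Your proof is correct and follows the same route as the paper's: apply the invariant of Claim~\ref{clm:invariant-dyn} to the final configuration $\Xi_\ell$ reached by the shortest path, observing that $G_\ell = (V, E')$ and that finality gives $r \in \rho_\ell(s,t)$, so the backward direction of the invariant yields the required $st$-walk. The paper states the argument a bit more tersely (and explicitly remarks that the shortest-path property is never used), but the content and structure are identical.
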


\begin{proof}
Assume that the algorithm finds a shortest path $\pi = \Xi_1, \ldots, \Xi_\ell$ from the initial configuration to a final configuration, and let $E'$ be the set of edges that the algorithm returns. 
Let us use \cref{clm:invariant-dyn} -- note that we do not even use the fact that $\pi$ is a shortest path.
Applying that claim to the configuration $\Xi_\ell$, since the configuration is final,
we know that $G_\ell = (V, E')$ has an $st$-walk of length $r \bmod q$, so it is a candidate solution. This establishes the soundness of the algorithm.
\end{proof}

We then show completeness, which will rely on the other direction of the invariant, but also use this time the cutwidth bound:
\begin{claim}
\label{clm:algo-complete}
    The \ewm algorithm is complete.
\end{claim}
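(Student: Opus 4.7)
The plan is to construct, from an arbitrary optimal solution $E'$, an explicit path in $\Gamma$ from the initial configuration to a final configuration whose edge labels are pairwise disjoint, cover $E'$, and whose total cost is therefore exactly $|E'|$. Once such a path is exhibited, the shortest path computed by the algorithm has cost at most $|E'|$, so by soundness (\cref{clm:algo-correct}) it yields a candidate solution of size at most $|E'|$, which by the optimality of $E'$ must itself be optimal.

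To build the path, I would first invoke \cref{cor:optbound} to obtain that $(V, E')$ has cutwidth at most $3 + 3\log_2 q$ and fix a linear ordering $v_1 < \cdots < v_k$ of the vertices incident to $E'$ witnessing this bound. (The degenerate case $E' = \emptyset$, which occurs only when $s = t$ and $r = 0$, is handled trivially by a single Introduce of $s$ with no edges, yielding $0 \in \rho(s,s)$ via the closure.) I would then walk through this ordering and, at step $i$, perform an Introduce for $v_i$ with label $E'_i := E' \cap E_{v_i, D_{i-1} \cup \{v_i\}}$, i.e., all edges of $E'$ joining $v_i$ to a vertex of $D_{i-1} \cup \{v_i\}$ together with the self-loop on $v_i$ if it lies in $E'$. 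Immediately afterwards, I would Forget every vertex $v_j \in D_{i-1} \setminus \{s,t\}$ all of whose incident $E'$-edges go to vertices of $\{v_1, \ldots, v_i\}$.

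The main obstacle will be the bookkeeping needed to check that the domain never exceeds $\omega = 6 + 3\log_2 q$, so that every Introduce in the construction is legal. Define $X_i$ to be the set of vertices $v_j$ with $j \leq i$ that still have an incident $E'$-edge to some $v_{j'}$ with $j' > i$; by the Forget rule, the domain after step $i$ equals $X_i \cup (\{s,t\} \cap \{v_1, \ldots, v_i\})$, of cardinality at most $|X_i| + 2$. Each vertex of $X_i$ is witnessed by an $E'$-edge crossing the cut $\{v_1, \ldots, v_i\} \mid \{v_{i+1}, \ldots, v_k\}$ in the underlying undirected graph, so by the cutwidth bound $|X_i| \leq 3 + 3\log_2 q$ and hence $|D_i| \leq 5 + 3\log_2 q = \omega - 1$. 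The subsequent Introduce at step $i+1$ is then legal (the domain reaches size at most $\omega$ temporarily, before the next Forget batch), so the full path lies in $\Gamma$.

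It remains to read off the cost and final-configuration properties. Each non-loop edge $\{v_a, v_b\} \in E'$ with $a < b$ lies in $E'_b$ and only in $E'_b$ (since $v_a \in D_{b-1}$ by the Forget rule while $v_b \notin D_{b-1}$), and each self-loop at $v_i$ lies in $E'_i$ only; hence the labels are pairwise disjoint and their union is $E'$, giving total cost exactly $|E'|$. After processing all $k$ vertices, $s$ and $t$ remain in the domain (never forgotten), and \cref{clm:invariant-dyn} applied to this path yields $r \in \rho(s,t)$ at the end because $(V, E')$ contains an $st$-walk of length $r \bmod q$ by assumption. The reached configuration is therefore final, completing the construction.
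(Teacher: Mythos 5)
Your proposal is correct and follows essentially the same route as the paper's proof: invoke \cref{cor:optbound} for the cutwidth bound, take a witnessing linear order, Introduce vertices one at a time with the back-edges of $E'$ incident to them, Forget vertices whose $E'$-neighbourhood is exhausted while protecting $s$ and $t$, bound the domain via the cut at each position, and finally apply \cref{clm:invariant-dyn} to deduce that the terminal configuration is final. The few surface differences (restricting the ordering to vertices incident to $E'$ rather than all of $V$, and explicitly treating the $E' = \emptyset$ case) are harmless simplifications that do not change the argument.
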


\begin{proof}
  Let $E_0$ be an optimal solution to \ewm.
  We use \cref{cor:optbound} to argue that the cutwidth of~$E_0$ is at most $3 + 3 \log_2 q$, i.e., remembering our definition of~$\omega$, the cutwidth of $E_0$ is at most $\omega-3$.
Let $\prec$ be an order that witnesses the fact that $E_0$ has cutwidth at most $\omega-3$, and number the vertices of $V$ accordingly as $v_1, \ldots, v_n$. We consider the sequence of 
vertex subsets $D_0, \ldots, D_n$ where $D_0 = \emptyset$ and for each $1 \leq i \leq n$ we let $D_i$ be the subset of vertices of $\{v_1, \ldots, v_i\}$ that occur in an edge together with a vertex of $\{v_{i+1}, \ldots, v_n\}$.
We claim that $|D_i| \leq \omega-3$ for each $1 \leq i \leq n$. Indeed, considering the cut $V_- = \{v_1, \ldots, v_i\}$ and $V_+ = \{v_{i+1}, \ldots, v_n\}$, given that the cutwidth is at most $\omega-3$ we know that the number of edges involving both a vertex of~$V_-$ and of $V_+$ is at most $\omega-3$, so at most $\omega-3$ vertices in $V_-$ occur in an edge of $E_0$ together with a vertex of~$V_+$, and $D_i$ is precisely the set of these vertices.

We will show the existence of a path in $\Gamma$ via configurations 
  defined from the sets $D_i$ (with small changes to handle the source $s$ and target $t$ differently).
  We will specify the configurations $(D', \rho')$ along the path in terms of the set of vertices $D'$ that they have as first components, and the sets $E'$ of edges that they keep: the functions $\rho'$ are then given accordingly following \cref{def:configuration-graph}. Specifically, let us build a sequence of configurations
$\pi = \Xi_0, \Xi_1', \Xi_1, \ldots, \Xi_n', \Xi_{n}$ as follows:
\begin{itemize}
\item $\Xi_0$ is the initial configuration
\item Each $\Xi_i' = (D_i', \rho_i')$ is obtained from the previous configuration $\Xi_{i-1}$
by setting
$D_i' := D_{i-1}'' \cup \{v_i\}$ and performing
    an Introduce step which adds the vertex~$v_i$ and selects the set of edges $E'_i$ consisting of all edges $e$ of $E_0$ such that one vertex of~$e$ is $v_i$ and the other is a vertex $v_j$ with $j\leq i$; in particular, if $E_0$ contains a self-loop on $v_i$, then we add it at that moment.
    (We explain below why these edges $e$ of~$E_0$ can actually be selected in the definition of~$\Xi_i'$, i.e., why they involve vertices that are all in~$D_i'$.)
    The function $\rho'_i$ is defined according to the definition of Introduce steps.
  \item Each $\Xi_i = (D''_i, \rho_i)$ is obtained from $\Xi_i'$ by Forget steps (possibly none, one, or many) to forget the vertices of $D_{i-1} \setminus (D_i \cup \{s, t\})$, i.e., we forget the vertices that are not in $D_i$ except that $s$ and $t$ are never forgotten.
In other words each $\Xi_i$ is of the form $(D_i'', \rho_i)$ with $\rho_i$  being the restriction of $\rho_i'$
    to $D_i''$, and $D_i''$ being $D_i$ with the possible addition of~$s$ and~$t$.
\end{itemize}
  Note how the cutwidth bound of $\omega-3$ ensures that the domain of configurations always has the right cardinality: we have shown $|D_i| \leq \omega-3$ for each $1 \leq i \leq n$, and for all $i$ we have $|D_i''| \leq |D_i| + 2$, and $|D_i'| = |D_{i-1}''| + 2$.
Thus, this process correctly defines a sequence of configurations, and 
the second item ensures that,
in the edges of $\Gamma$ traversed to define $\pi$, the union of the $E_i'$ will be precisely $E_0$, and every edge of~$E_0$ will occur only once. Indeed, for every edge $e$ of~$E_0$ containing $v_i$ and $v_j$ with $i\leq j$, then $e$ gets added at the point where we consider $v_j$, in particular $e$ is either a self-loop on $v_i = v_j$ and we have $v_i \in D_i'$, or $v_i \prec v_j$ and $v_i$ is in $D_i'$ because it is in $D_{i-1}$ by definition (indeed $v_i$ occurs in an edge together with $v_j$ and $j > j-1$).
What is more, the total cost of~$\pi$ will be $|E_0|$ as required.
  (This argument also ensures that the sets of edges $E_i'$ selected in the second bullet point are edges involving vertices that are all present in the domain~$D_i'$.)

The only remaining point is to show that~$\pi$ finishes on a final configuration.
By assumption, the graph $(V, E_0)$ contains an $st$-walk of length $r \bmod q$. Applying the invariant of \cref{clm:invariant-dyn} to the path~$\pi$, considering the sequence of graphs $G_i$ defined in the lemma statement, we know that the last graph in this sequence is $(V, E_0)$, because we argued that the union of the $E_i'$ is precisely $E_0$. So we know that there is an $st$-walk of length $r \bmod q$ in this last graph. The construction above ensures that $s$ and $t$ get added at some point in the sequence of configurations and that they never get removed, so we know that $s$ and $t$ are in the domain of the final configuration. Hence, applying the other direction of the invariant, we know that $r \in \rho(s,t)$ for $\rho$ the second component of the final configuration. This means that the path $\pi$ does indeed finish on a final configuration.
This is all that remained to be shown, so the proof is complete.
\end{proof}

Putting everything together, we have defined the \ewm algorithm which ensures by completeness
(\cref{clm:algo-complete})
that optimal solutions $(V, E_0)$ to \ewm must
give rise to a path in $\Gamma$ from the initial configuration to a final configuration labeled by edge sets whose union is $E_0$ without duplicates, so that the cost of the path is~$|E_0|$.
(This direction uses \cref{cor:optbound}: optimal solutions have bounded cutwidth.)
Conversely, since the \ewm algorithm is sound
(\cref{clm:algo-correct}),
it can only return paths labeled by edge sets whose union gives a subgraph $(V,
E_0)$ which is a candidate solution, and whose cost is then at least $|E_0|$
(and possibly more if there are duplicates, though again this does not actually
occur along shortest paths). This implies that, indeed, either there is no
candidate solution in~$G$ at all and the \ewm algorithm correctly fails, or there are optimal solutions and the algorithm must return one of them. The running time is given as \cref{clm:algo-runningtime}. 
Altogether, we have shown that the algorithm correctly solves the \ewm problem, and has complexity 
$n^{\omega+1} \cdot 2^{O(\omega^2 q)}$. Instantiated with our choice of $\omega = 3 + 3 \log_2 q$, we get a final complexity of 
$n^{O(\log q)} \cdot 2^{O(q \log^2 q)}$ for the algorithm.
This concludes the proof of our main result (\cref{th:main-theorem}).

\section{Extension to Weighted Graphs}\label{sec:costs_distances}

Having shown our main result (\cref{th:main-theorem}), in this section we start exploring variants and extensions of the \ewm problem. We first study two extensions of \ewm in this section. First, we study the addition of \emph{costs} on edges, meaning that we look for a walk where the total cost is minimum (instead of the number of edges).
Second, we study the addition of integer \emph{lengths} on edges. 

In this section and the next, the problems that we define and study are always
posed on a directed graph $G$ and ask about the existence of a subgraph of~$G$
satisfying certain properties and which is optimal according to some criterion
(usually, being edge-minimum). We use the same terminology as before: a
\emph{candidate solution} is a subgraph which satisfies the properties but is
not necessarily edge-minimum, and an \emph{optimal solution} is a candidate
solution which is additionally edge-minimum.

\myparagraph{Costs on edges and vertices.}
We first study the extension of the \ewm problem with costs on edges. Specifically, the \emph{\ewm problem with costs on edges} takes as input a directed graph $G = (V, E)$, a cost function $\gamma$ giving to each edge $e \in E$ a cost $\gamma(e)$, a pair of a source $s \in V$ and target $t \in V$, and a modularity requirement $r \bmod q$ for integers $q$ and~$r$.
The output to the \ewm problem with costs on edges is 
an $st$-walk of length $r \bmod q$ 
such that the cost $\gamma(E_w) := \sum_{e \in E_w} \gamma(e)$ is minimum.
We assume that all costs given by~$\gamma$ are nonnegative. Indeed, in the presence of negative weights, we lose the correspondence explained in \cref{sec:prelim}: computing a minimum-weight \emph{subgraph} that contains a walk reduces to the case of positive weights (all negative-weight edges will always be included in the optimal solution subgraph); but computing a minimum-weight \emph{walk} is \NP-hard even without modularity constraints:

\begin{proposition}
  \label{prp:npc}
  The following problem is \NP-complete: given a directed graph $G = (V, E)$, a cost function $\gamma\colon E \mapsto \mathbb{Z}$, and source and target $s, t \in V$, compute a minimum-weight subset of edges $E' \subseteq E$ such that there is an $st$-walk using precisely the edges in~$E'$.
\end{proposition}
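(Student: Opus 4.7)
The plan is to establish both membership in \NP{} and \NP-hardness. For membership, the natural certificate is the edge subset $E' \subseteq E$ itself, which has polynomial size. Verification requires checking that $\gamma(E')$ is at most the given threshold and that $(V, E')$ admits an $st$-walk using every edge of $E'$ at least once. The latter can be decided in polynomial time via a classical flow-based argument: such a walk exists iff one can assign integer edge multiplicities $x_e \geq 1$ so that the multigraph on $E'$ augmented with a single virtual $(t,s)$-edge has balanced in- and out-degrees at every vertex and is weakly connected.

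For \NP-hardness, the plan is to reduce from Directed Hamiltonian $st$-Path, which is classically \NP-hard. Given an instance $(H, s, t)$ with $n := |V_H|$ (assumed without loss of generality to have no self-loops), I would construct $G$ on the same vertex set, with edges $E_G := E_H \cup \{\ell_v : v \in V_H\}$ where each $\ell_v$ is a self-loop on $v$, and with weights $\gamma(e) := +1$ for $e \in E_H$ and $\gamma(\ell_v) := -n$. The intended equivalence is that $H$ admits a Hamiltonian $st$-path if and only if $G$ admits a feasible $E'$ of weight at most $-n^2 + n - 1$.

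The forward direction will be immediate: the edges of a Hamiltonian path, together with all self-loops, yield a feasible $E'$ realizing the bound, via the walk that interleaves each path-edge with the local self-loop at each vertex. The backward direction is where the main work lies. Writing $k$ for the number of self-loops and $m$ for the number of $H$-edges in an optimal $E'$, the bound $m - nk \leq -n^2 + n - 1$ combined with $k \leq n$ forces $k = n$ and $m \leq n-1$, so every vertex must be visited (to use each self-loop) and the walk uses at most $n-1$ distinct $H$-edges. Since any walk visiting $n$ distinct vertices uses at least $n - 1$ ``first-visit'' edges, one obtains $m = n - 1$ exactly, and the $H$-edges in $E'$ form a spanning out-arborescence rooted at $s$. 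The delicate structural step, which I expect to be the main obstacle, is to argue that this arborescence must in fact be a directed $s$-to-$t$ path (hence a Hamiltonian $st$-path in $H$): any branching vertex of out-degree at least $2$ would prevent the walk from visiting both children's subtrees, since a walk in a directed graph whose underlying undirected graph is a tree cannot backtrack to revisit a vertex. This closes the reduction and \NP-hardness follows.
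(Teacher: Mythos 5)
Your proof is correct, but it takes a genuinely different route from the paper's. The paper reduces from \textbf{Directed Steiner Tree}: given $(G, r, T)$, it adds a back-edge of very negative weight $-(|E|+1)$ from each terminal of $T$ to the root $r$, sets $s := t := r$, and observes that a minimum-weight $rr$-walk must include all back-edges (so it must reach every terminal) and otherwise use as few unit-weight original edges as possible; this directly recovers a minimum Steiner tree. You instead reduce from \textbf{Directed Hamiltonian $st$-Path} via very negative self-loops ($-n$) at every vertex and unit weights elsewhere, forcing the walk to visit every vertex and then showing that the $\leq n-1$ remaining first-visit edges must form an out-arborescence that is necessarily a path. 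The two reductions exploit the same trick — large negative weights force inclusion of edges, which in turn force the walk to reach certain vertices — but the DST reduction is ``one step'' (DST is already a smallest-subgraph problem, so correctness is almost immediate), whereas yours starts from the more elementary Hamiltonian Path and pays for it with the more delicate arborescence-must-be-a-path argument. Your argument there is sound, though the stated justification (``whose underlying undirected graph is a tree cannot backtrack'') is imprecise as written: the operative fact is that the discovery edges form an \emph{out-arborescence} — all arcs point away from $s$ — so depth in the tree is monotone along the walk and the walk can occupy only a single root-to-leaf path; and the endpoint of that path must be $t$ since the walk both ends at $t$ and must reach the deepest vertex. A further plus of your write-up: you explicitly argue \NP{} membership (weak connectivity of $E'$ plus the virtual $(t,s)$-edge, together with feasibility of the balanced-degree integer flow with lower bounds $x_e\geq 1$, which is decidable in polynomial time), whereas the paper's proof only establishes \NP-hardness and leaves membership implicit.
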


\begin{proof}
  We reduce from the Directed Steiner Tree (DST) problem, which is
    known to be \NP-hard (the proof is similar to the one for undirected Steiner Tree~\cite{karp1972reducibility}).
  Given a directed graph $G$ of weight-1 edges, a root $r$, and a subset $T$ of vertices, the DST problem asks for a minimum-weight graph that contains a path from $r$ to every vertex of $T$.
  We assume without loss of generality that the input graph $G$ contains a path from~$r$ to every vertex of~$T$, as there is clearly no solution otherwise.
  Given $G$, $r$, and $T$, to perform our reduction, we add to $G$ an edge of weight $-(|E|+1)$ from every vertex of $T$ to $r$, and give weight~$1$ to the original edges of~$G$.

  We then take $s := r$ and $t := r$. An $rr$-walk of minimum weight in the resulting graph will then consist of the $|T|$ negative-weight edges, together with some minimum-cardinality subset of original edges of~$G$.
  To visit the $|T|$ negative edges, the walk must go from $r$ to every vertex in $T$, so the optimal solution returns a minimum-weight Steiner Tree 
  plus the $|T|$ negative edges. This completes the reduction.
\end{proof}

Our tractability result for \ewm (\cref{th:main-theorem}) can be easily extended to solve the \ewm problem with costs on edges. Specifically, the Segment Decomposition Lemma (\cref{lem:bounding-number-of-segments}) shows that there must be an optimal solution to \ewm with costs on edges that obeys the bound on the number of segments, because the $\gamma$ function on subset of edges is monotone. Then the Segment Cutwidth Bound (\cref{prp:segment-bound}) applies as is, and the algorithm to find an optimal solution of bounded cutwidth from the previous section can be easily extended: simply modify the definition of the graph of configurations (\cref{def:configuration-graph}) so that the cost of an edge labeled with a set $E'$ of edges of~$G$ is no longer the cardinality $|E'|$ of $E$ but the total cost $\gamma(E')$. Other than that, the algorithm proceeds in the same way, and computes a shortest path which now reflects the subgraph of~$G$ satisfying the requirements which has minimum cost instead of minimum cardinality.

We also mention that the \ewm problem can be defined to have costs (unit costs or not) on vertices instead of edges. 
In this case, the cost of a candidate solution is the number of different vertices traversed by the walk (or more generally their total cost).
However, this problem is interreducible to the \ewm problem with costs on edges:

\begin{lemma}
    The \ewm problem (with unit costs on edges) reduces to the \ewm problem with unit costs on vertices, and the \ewm problem with costs on edges reduces to the \ewm problem with costs on vertices.
    Conversely, the \ewm problem with unit costs on vertices reduces to the \ewm problem with unit costs on edges, and the \ewm problem with costs on vertices reduces to the \ewm problem with costs on edges.
\end{lemma}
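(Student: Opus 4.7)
The plan is to give explicit reductions via two standard constructions—edge subdivision and vertex splitting—adapted so that the modularity condition is preserved. For the cost-to-cost reductions, the constructions are direct. To reduce cost-on-edges \ewm to cost-on-vertices \ewm, I would subdivide every edge $e = (u,v)$ by inserting a new vertex $w_e$, replacing $e$ by $(u, w_e)$ and $(w_e, v)$, setting the cost of $w_e$ to $\gamma(e)$ and of every original vertex to $0$. Each walk in the new graph corresponds to a walk in $G$ passing through $e$ exactly when it passes through $w_e$; lengths double, so the condition $r \bmod q$ becomes $2r \bmod 2q$, while the cost—a sum over distinct visited vertices—matches exactly the sum of edge-costs in the original walk. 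The reverse direction, cost-on-vertices to cost-on-edges, splits each vertex $v$ into $v_{\mathrm{in}}, v_{\mathrm{out}}$ connected by an edge $e_v$ of cost $\gamma(v)$, rewires each original edge $(u,v)$ as $(u_{\mathrm{out}}, v_{\mathrm{in}})$ with cost $0$, and takes $s_{\mathrm{in}}$ and $t_{\mathrm{out}}$ as the new terminals; walks of original length $k$ become walks of length $2k+1$, so modularity becomes $2r+1 \bmod 2q$, and the edge-cost of the new walk recovers the vertex-cost of the original.

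For the unit-cost versions, these basic constructions fail to preserve cost: subdividing edges of an unweighted graph yields a graph where minimizing the number of distinct visited vertices gives $|V_w|+|E_w|$ rather than $|E_w|$. My plan is to overcome this by blowing up the subdivisions. To reduce unit-edge to unit-vertex, I would replace each edge of $G$ by a directed path with $N$ internal vertices, taking $N := m + 2$ where $m = |E(G)|$. A walk $w$ in $G$ then corresponds to a walk in the new graph visiting exactly $N|E_w| + |V_w|$ distinct vertices; since $|V_w| \leq |E_w|+1 \leq m+1 < N$, minimizing this quantity coincides with minimizing $|E_w|$ (with $|V_w|$ as a tiebreaker). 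Each original edge expands to $N+1$ edges, so modularity $r \bmod q$ becomes $(N+1) r \bmod (N+1) q$. Symmetrically, to reduce unit-vertex to unit-edge, I would replace each vertex $v$ by a directed path $v^0, v^1, \ldots, v^N$ of length $N$, reroute each edge $(u,v)$ to $(u^N, v^0)$, and pick $N := n^2 + 1$, larger than the maximum possible value of $|E_w|$. The distinct-edge count of the new walk is $N|V_w| + |E_w|$, so vertex-count minimization dominates; the original walk length $k$ becomes $(N+1)k + N$, yielding new modularity $(N+1)r + N \bmod (N+1)q$.

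The main obstacle will be verifying that walks in the constructed graph correspond one-to-one with walks in $G$ in a way that preserves both the modularity condition and the optimization. Because the internal edges of the inserted paths admit no shortcuts and no alternative entry or exit points, any walk entering the first vertex of such a path must traverse the whole path to exit; this both gives the clean length multiplication needed to translate modularity and ensures the additive cost decomposition $N \cdot (\text{primary count}) + (\text{secondary count})$ used above. For the unit-cost reductions, checking that $N$ is large enough to make the primary count dominate only relies on the elementary bounds $|V_w| \leq |E_w| + 1 \leq m + 1$ (which follows from weak connectivity of $G_w$) and $|E_w| \leq n^2$, and the resulting blow-ups remain polynomial in $|G|$, so each of the four reductions runs in polynomial time.
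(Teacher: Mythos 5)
Your high-level strategy matches the paper's: subdivide edges to move costs onto vertices, split vertices into paths to move costs onto edges, and in the unit-cost case blow up the subdivisions so that the primary quantity dominates. The bookkeeping in each of your four constructions is also sound as stated. However, there is a real gap in how your reductions interact with the modulus~$q$, and it matters for the purpose this lemma serves.

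Your unit-cost reductions choose the subdivision length $N$ purely to dominate the secondary term ($N = m+2$, respectively $N = n^2+1$), with no relation to $q$. As you correctly compute, the length of a walk gets multiplied by $N+1$ (plus an additive offset), so the modularity constraint becomes $\bigl((N+1)r + \text{const}\bigr) \bmod (N+1)q$. The new modulus $(N+1)q$ is $\Theta(m q)$ or $\Theta(n^2 q)$, i.e., polynomial in the input size even when $q$ is a constant. But the whole point of the lemma, in context, is to transfer tractability for constant~$q$ between the edge-cost and vertex-cost variants of \ewm, and the running time of the \ewm algorithm (\cref{th:main-theorem}) is exponential in~$q$. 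A polynomial-time reduction that inflates the modulus by a polynomial factor therefore does not preserve the ``polynomial for constant $q$'' regime: feeding its output to \cref{th:main-theorem} gives an exponential-time algorithm. (Your weighted-cost reductions only double $q$, which stays constant, so those two directions are fine.)

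The paper avoids this by tuning the subdivision length to be congruent to $0$ or $1$ modulo~$q$: edges are replaced by paths of length $q+1$ (so walk lengths are multiplied by something $\equiv 1 \bmod q$ and the pair $(r, q)$ is unchanged), vertices are replaced by paths of length~$q$ (so each visit adds $\equiv 0 \bmod q$), and in the unit-cost case the lengths are taken to be $\Theta(qn)$ or $q(m+1)$, still multiples of~$q$ up to~$\pm 1$, so the modulus remains~$q$. Your proof can be repaired with a small change: instead of picking $N$ as the smallest integer beating the secondary count, round $N$ up to the nearest multiple of~$q$ (or choose $N$ so that the net length multiplier is $\equiv 1 \bmod q$). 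Then your dominance inequalities still hold, and the reduced instance keeps the same modulus~$q$, which is what the lemma needs.
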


\begin{proof}
In one direction, we subdivide edges with cost $c$ to have length $q+1$, give cost $c$ to one of the intermediate vertices, and give cost zero to other vertices in the subdivision and to the original vertices of the graph. 
Further, if the edges all have unit cost, we can still reduce in polynomial time to the \ewm problem with unit cost of vertices: we subdivide each edge to add $qn+1$ intermediate vertices each having cost 1. This then ensures that the path lengths are correct and that the unit cost of the $n$ original vertices is negligible, so an optimal solution in the rewritten graph will minimize the number of traversed edges in the original graph.

For the converse direction, we replace each vertex having cost $c$ by a directed path of length $q$, with incoming edges leading to the first vertex and outgoing edges leaving from the last vertex. We put the cost of $c$ on one of the edges, giving $0$ cost to the other newly created edges and to the original edges. (Depending on how the cost of the source and target vertices is taken into account, we may take the source or the target in the resulting graph to be the first or the last vertex of the path representing $s$ and $t$ respectively.) 

If all vertices have unit cost, we can replace each vertex by a path of length $q(m+1)$ of unit-cost edges. This ensures that the cost of the $m$ original edges is negligible, so that an optimal solution in the rewritten graph will minimize the number of traversed vertices in the original graph.
\end{proof}

\myparagraph{Lengths on edges.}
Having studied the use of costs on edges to change the optimization criterion, we turn to a different problem variant where we annotate edges with integer lengths. Specifically, the \emph{\ewm problem with lengths} takes as input a directed graph $G = (V, E)$, a length function $\delta\colon E \to \mathbb{N}$, and a pair of a source $s \in V$ and target $t \in V$ and a modularity requirement $r \bmod q$ for integers $q$ and $r$. The answer to the \ewm problem with lengths is an $st$-walk $w$ that traverses a minimum number of distinct edges and whose total length is $r \bmod q$, i.e., $\delta(w) := \sum_{1 \leq i \leq |w|} \delta(w[i])$ is $r \bmod q$. (Note that the length of an edge is summed as many times as the edge is traversed.)

The impact of allowing lengths is different depending on the regime. In the case where $r$ and $q$ are constant, or given in unary, then we can rewrite the graph in polynomial time to eliminate edge lengths.
Specifically, letting $m$ be the number of edges in the graph,
we replace each edge $e$ of length $\delta(e)$ by a path on $(m+1)q + (\delta(e) \bmod q)$ edges (where $m$ is the number of edges in the graph).
This ensures that a walk in the new graph has the same modularity as the corresponding walk in the original graph, and the cost of a candidate solution in the rewritten graph is $(m+1)qM+ \epsilon$, where $M$ is the number of original edges traversed and $\epsilon \leq m q$. This ensures that an optimal solution in the original graph indeed minimizes the number of edges taken from the original graph.

One different regime is when $r$ and $q$ are written in binary and do not necessarily have polynomial value. 
In this case, the \ewm problem with lengths written in binary can be shown to be
\NP-hard by an easy reduction from Subset Sum. In fact, just the problem of
deciding the existence of a walk of length $q\bmod r$ is \NP-hard:

\begin{proposition}\label{prop:walk-of-given-mod-on-binary-length-is-np-hard}
    The problem of deciding whether an $st$-walk of length $q \bmod r$ exists, where $q,r$ and the lengths of $\delta$ are written in binary, is \NP-hard.
\end{proposition}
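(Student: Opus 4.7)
The plan is to reduce from the classical Subset Sum problem, which is NP-hard: given positive integers $a_1, \dots, a_n$ and a target $T$, decide whether there is a subset $S \subseteq \{1,\dots,n\}$ with $\sum_{i \in S} a_i = T$. Given such an instance, I would build a directed acyclic graph on vertices $v_0, v_1, \dots, v_n$ (with $s := v_0$ and $t := v_n$), where between $v_{i-1}$ and $v_i$ there are two internally vertex-disjoint paths: a ``skip'' path of total length $0$ and an ``include'' path of total length $a_i$. Since the paper disallows multi-edges, each such path would be realized with a fresh intermediate vertex (for instance, split each parallel path into two consecutive edges and put the whole length on the first edge), keeping the construction polynomial in the input size.

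Next I would set $q := 1 + \sum_{i=1}^n a_i$ and $r := T$ (taking $r := T \bmod q$ if $T \geq q$, though by assumption in Subset Sum $T \leq \sum_i a_i < q$). These values are polynomial in the binary encoding of the input. Because the constructed graph is a DAG, every $st$-walk is in fact a simple $st$-path, and such a path corresponds bijectively to a subset $S \subseteq \{1,\dots,n\}$ (the indices for which the ``include'' path is chosen), with total length exactly $\sum_{i \in S} a_i$. Since every achievable length lies in the interval $[0, \sum_i a_i] \subsetneq [0, q)$, the congruence ``length $\equiv r \pmod q$'' collapses to the equation ``length $= T$''.

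Putting these pieces together, the input graph admits an $st$-walk of length $r \bmod q$ if and only if the Subset Sum instance is a yes-instance, which completes the reduction and establishes NP-hardness. Membership in NP (to get NP-completeness rather than only NP-hardness) is not claimed in the statement, but if desired it follows because a short witness exists: a candidate walk can be certified by a sequence of at most $\mathrm{poly}(n, \log q)$ edges whose total length modulo $q$ is easily checked, using the argument that if some walk of the required modular length exists then one of length at most $O(nq)$ does (obtained by shortcutting repeated configurations in the standard product construction with $\mathbb{Z}/q\mathbb{Z}$).

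There is no real obstacle here; the only minor subtlety is respecting the ``no multi-edges'' convention of the paper, which is handled by inserting a single intermediate vertex on each parallel branch of each gadget. The heart of the argument is simply that choosing $q$ strictly larger than the sum of all item weights turns the modular-length constraint into an exact-sum constraint.
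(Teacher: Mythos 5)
Your proof is correct and takes essentially the same approach as the paper: reduce from Subset Sum via a chain of gadgets offering a length-$0$ or length-$a_i$ branch between consecutive vertices, split parallel branches with intermediate vertices to avoid multi-edges, and pick $q$ larger than $\sum_i a_i$ so that the modular constraint collapses to an exact-sum constraint. The only cosmetic difference is your choice $q := 1 + \sum_i a_i$ versus the paper's $q := N\cdot T$, both of which work for the same reason.
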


\begin{proof}
  We reduce from the Subset Sum problem, which asks, given a target $T$ and a set of numbers $s_1$,\dots,$s_N\leq T$, whether there is a subset of numbers that sum to $T$. 
  Create a graph with vertices $s = u_1, \ldots, u_{N+1} = t$ where, for each $1 \leq i \leq N$, 
  there are two parallel edges from $u_i$ to $u_{i+1}$, one of length 0 and the other of length $s_i$. 
  More precisely, to avoid parallel edges so as to make the graph simple, replace each pair of parallel edges with a pair of paths on 2 edges each). 
  Set $r:=T$ and take a sufficiently large $q$, for instance $q:=N\cdot T$.
  Then there is (deterministically) no $st$-walk of length $>N\cdot T$, and there is an $st$-walk of length $T$ if and only if the answer to the subset sum instance is ``yes''.
\end{proof}

We do not know whether an analogous hardness result holds for the original \ewm
problem (without edge lengths) in the setting where $q$ and $r$ can be
written in binary and do not necessarily have polynomial value.
Indeed, in the proof above, we crucially use the edge lengths to concisely write the numbers given in the Subset Sum instance; writing them in unary as paths of the corresponding length will not give a polynomial-time reduction and indeed the Subset Sum problem can be solved in pseudo-polynomial time.

We last address the question of showing an upper bound on the complexity of \ewm with lengths, by showing an \NP upper bound. Of course the bound is phrased on the decision version of \ewm with lengths, i.e., given an instance of \ewm with lengths and a threshold $k$, we wish to decide whether there exists a candidate solution having at most $k$ different edges. We have:

\begin{proposition}
  The following problem is in \NP: given a directed graph $G = (V, E)$, a
  function $\gamma\colon E \to \mathbb{N}$ that assigns lengths (written in unary) to
  each edge, integers $q$, $r$ and $k$ (written in binary), source and target $s,t
  \in V$, 
  decide if there is a subset $E' \subseteq E$ of at most $k$ edges that contains an $st$-walk of length $r\bmod{q}$ according to~$\gamma$.
\end{proposition}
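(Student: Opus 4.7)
The plan is to give an NP verifier together with a polynomial-size certificate. The obvious idea of taking the certificate to be just $E'$ is not immediately sufficient: given $(V, E')$, deciding whether it contains an $st$-walk of length $r \bmod q$ is not obviously polynomial when $q$ is written in binary, since the natural product construction on $V \times \mathbb{Z}/q\mathbb{Z}$ has exponentially many states, and the walk itself may traverse $\Omega(q)$ edges and so is too long to write down. The key idea is therefore to augment $E'$ with a compact, SCC-based description of the walk.

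After restricting $(V, E')$ to the vertices that are reachable from $s$ and that reach $t$, every $st$-walk visits a sequence of SCCs $D_1 = \mathrm{SCC}(s), D_2, \ldots, D_T = \mathrm{SCC}(t)$ in topological order, with $T \leq n$. The certificate will consist of: (a)~the set $E' \subseteq E$ of size at most $k$; (b)~this SCC sequence, together with an exit vertex $x_i \in D_i$, an entry vertex $e_{i+1} \in D_{i+1}$ and the transition edge $(x_i, e_{i+1}) \in E'$ for each $i < T$ (with $e_1 := s$ and $x_T := t$); and (c)~for each $i$, an integer $r_i \in \{0, \ldots, q-1\}$ given in binary, specifying the closed-walk ``offset'' used inside $D_i$. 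The total size is polynomial in $n$, $m$, and $\log q$.

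The verifier then runs the following polynomial-time checks: (1)~compute the SCCs of $(V, E')$ and verify the listed sequence and transition edges; (2)~for each $D_i$, rooted at $e_i$, do a BFS inside $D_i$ to compute a potential $\phi_i\colon D_i \to \mathbb{Z}/q\mathbb{Z}$ with $\phi_i(e_i) = 0$ recording the length modulo $q$ of some chosen $e_i \to u$ walk; (3)~compute the cycle subgroup $\Gamma_{D_i} = \langle d_i \rangle$, where $d_i := \gcd\bigl(q, \{\phi_i(u) + \gamma(u,w) - \phi_i(w) \bmod q : (u,w) \in E', \, u,w \in D_i\}\bigr)$; (4)~check that each $r_i$ is a multiple of $d_i$ modulo $q$; and (5)~check that
\[
\sum_{i=1}^{T} \bigl(\phi_i(x_i) + r_i\bigr) \;+\; \sum_{i=1}^{T-1} \gamma(x_i, e_{i+1}) \;\equiv\; r \pmod{q}.
\]
Soundness holds because, inside a strongly connected digraph $D_i$, the set of lengths modulo $q$ of walks from $e_i$ to $x_i$ is exactly the coset $\phi_i(x_i) + \Gamma_{D_i}$, so any valid $r_i \in \Gamma_{D_i}$ is realised by some actual walk. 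Completeness holds by reading off, from any $st$-walk in $(V, E')$ of length $r \bmod q$, the SCC sequence, the transitions used, and the residue contribution of each within-SCC portion.

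The main obstacle is justifying step~(3): the claim that $\Gamma_{D_i}$ — the subgroup of $\mathbb{Z}/q\mathbb{Z}$ generated by \emph{all} closed-walk lengths in $D_i$ — is already generated by the polynomially many edge slacks. For the forward containment, each slack is the length difference between two closed walks through $e_i$ (built from fixed walks $e_i \to u$, $u \to e_i$, $e_i \to w$, $w \to e_i$, which exist by strong connectivity) and hence lies in $\Gamma_{D_i}$. For the reverse, the length of a closed walk $(u_1, w_1), \ldots, (u_\ell, w_\ell)$ at $e_i$ equals $\sum_k \sigma_k + \sum_k (\phi_i(w_k) - \phi_i(u_k))$ with $\sigma_k := \phi_i(u_k) + \gamma(u_k, w_k) - \phi_i(w_k)$, and the second sum telescopes to $0$ because $u_{k+1} = w_k$ and $u_1 = w_\ell = e_i$. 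Apart from this standard directed-cycle-space fact, every remaining step is routine modular arithmetic.
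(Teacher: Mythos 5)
Your proof is correct, but it takes a genuinely different route from the paper's.

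The paper's argument is much shorter and proceeds by reduction to an external result: it first observes, via the pigeonhole principle, that an edge-minimum witnessing walk can be assumed to traverse each edge at most $q$ times (otherwise two visits of the same edge occur at the same running length modulo $q$, and the closed subwalk between them can be excised), so the total walk length is polynomially bounded in $q$ and hence has a polynomial-size binary representation. The certificate is then the edge set $E'$ together with the exact length $\ell$ of the walk; the paper checks $\ell \equiv r \pmod q$ and cites an existing polynomial certificate for the existence of an $st$-walk of a prescribed exact length~\cite{basagni1997difficulty}. Your approach instead builds the compact certificate from scratch: you decompose the subgraph $(V,E')$ into its SCC DAG, certify the sequence of SCCs visited together with entry/exit vertices and transition edges, and, inside each SCC, exploit the fact that the walk lengths from $e_i$ to $x_i$ modulo $q$ form a coset $\phi_i(x_i) + \Gamma_{D_i}$ of a cyclic subgroup that is generated by the polynomially many edge ``slacks'' relative to a spanning-tree potential. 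Your telescoping argument for the reverse containment and the difference-of-closed-walks argument for the forward containment are both sound, the realization of any element of $\Gamma_{D_i}$ by an actual walk (by composing closed walks through $e_i$ before taking the tree path) is correct, and all verifier steps (SCC computation, BFS potentials, gcds, modular arithmetic on binary-sized residues) are polynomial-time. What you buy with this route is self-containment (no appeal to an external reference on exact-length walks), and your certificate in fact never needs the pigeonhole bound on the walk's total length; the cost is a longer and more technical verifier. Both approaches give the stated NP membership.
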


\begin{proof}
 The difficulty is that the certificate cannot be the walk itself because it is not necessarily of polynomial size. First we observe that the length of the walk is at most $qm$: If a walk uses a single edge more than $q$ times, by the pigeonhole principle the walk reached that edge twice having the same remainder. Thus, we can delete the portion of the walk between those two traversals of that edge, without changing the remainder. Now, to show that \ewm is in \NP, we guess the length of the walk and, more importantly, the set of edges used by the walk (altogether forming a polynomial-sized guess)
 and we then use a known certificate to find a walk of that exact given
  length~\cite{basagni1997difficulty}.
  This gives us in particular the number of distinct edges traversed by the walk. 
\end{proof}

\section{Connections to \dsn and \scss}
\label{sec:dsn}
We conclude the paper by exploring how \ewm relates to the Directed Steiner
Network (\dsn) problem mentioned in the introduction, and more specifically 
the \emph{Strongly Connected Steiner Subgraph (\scss)} problem.
We first define the \scss problem and show that it is subsumed by \ewm, in the
sense that a polynomial algorithm for \ewm (with fixed $q$ and $r$) gives a
polynomial algorithm for \scss with a constant number of terminals. Then, we
introduce a problem generalizing both \scss and \dsn on the one hand, and \ewm in the other: we study how to find the smallest subgraph satisfying connectivity requirements on specified endpoint pairs with specified modularities. We show that we can generalize our results to this problem, and provide a polynomial algorithm for the setting when the number of endpoint pairs and the modularity requirements are constants.

\myparagraph{Reducing \scss to \ewm.}
Let us recall the definition of the \scss problem~\cite{FeldmanR06,FeldmannM23}
before explaining how it can be reduced to \ewm. In the \scss problem, the input
simply consists of a directed graph $G = (V, E)$ with an input set $T \subseteq
V$ of terminals, and we want to find the smallest subgraph $E' \subseteq E$ that
is strongly connected and contains edges incident to each terminal in~$T$. In
other words, \scss is a special case of \dsn where the connectivity requirements on the vertices of~$T$ require that they are strongly connected. The \scss problem is NP-hard in general (by an easy reduction from the directed Steiner tree problem)
but it can be solved in polynomial time provided that the size of $T$ is bounded by a constant, as shown in~\cite{FeldmanR06}.
Thus, for $k \in \mathbb{N}$, we write \kscss{$k$} to refer to the \scss problem where the set $T$ of terminals is required to contain at most $k$ vertices.
Following our previous terminology in the paper, when we talk of \emph{candidate
solutions} we mean a subgraph satisfying the requirements of a problem (e.g., for
\kscss{$k$}, a strongly connected subgraph containing the requisite terminals),
and by \emph{optimal solution} we mean a
candidate solution which is also edge-minimum.

Let us show that the \scss problem can be reduced to \ewm, which implies that \cref{th:main-theorem} gives an alternative proof of the results of~\cite{FeldmanR06} (with worse bounds). More precisely, we show:

\begin{lemma}\label{lem:reducing-scss-to-swm}
For any constant $k >0$, we can compute constants $q$ and $r$ in $2^{O(k\log
  k)}$ such that there is a linear-time reduction from \kscss{$k$} to \ewm with the constant values $q$ and $r$.
\end{lemma}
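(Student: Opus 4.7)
The plan is to encode the \scss requirement --- that the output subgraph be strongly connected through $k$ terminals --- as a single modular length condition via the Chinese Remainder Theorem. I will choose the first $k$ primes $p_1,\ldots,p_k$; by the prime number theorem each $p_i = O(k\log k)$, so setting $q := p_1 \cdots p_k$ gives $\log q = \sum_i \log p_i = O(k\log k)$, i.e.\ $q = 2^{O(k\log k)}$. For each $i$, I set $\ell_i := q/p_i$, which in the CRT decomposition $\Z{q} \cong \prod_i \Z{p_i}$ has nonzero $i$-th coordinate and vanishing $j$-th coordinate for $j\ne i$.

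Given an \kscss{$k$} instance $(G,T)$ with $T = \{t_1,\ldots,t_k\}$, the reduction builds an \ewm instance $(G',s,t,r,q)$ as follows: (1) subdivide every edge of $G$ into a directed path of length $q$, so that each original edge contributes $0 \bmod q$ whenever its subdivided path is traversed; (2) for each $i$, attach at $t_i$ a fresh directed cycle $C_i$ of length $\ell_i$ through $t_i$ (on fresh internal vertices, keeping $G'$ simple); (3) add a new source $s$ with the edge $s\to t_1$ and a new sink $t$ with the edge $t_1 \to t$; and (4) set $r := 2 + \sum_i \ell_i \bmod q$. Since $q$ and the $\ell_i$ depend only on the constant $k$, the construction is linear in $|G|$.

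For correctness I would argue both directions. Given any $st$-walk in $G'$ of length $r \bmod q$, the walk has the form $s \to t_1 \to \cdots \to t_1 \to t$ (forced by the degrees of $s$ and $t$); letting $a_i$ count its traversals of $C_i$ and using that subdivided paths contribute $0 \bmod q$, the length modulo $q$ equals $2 + \sum_i a_i \ell_i$. Matching this to $r$ and reducing modulo each $p_i$ using the CRT property of the $\ell_i$ yields $a_i \equiv 1 \pmod{p_i}$, hence $a_i \ge 1$; so every terminal is visited, and the original edges used by the walk form a subgraph of $G$ that is strongly connected on $T$. Conversely, any strongly connected subgraph $H \subseteq G$ on $T$ admits a closed walk at $t_1$ visiting every terminal, which can be lifted to $G'$ by following the subdivided versions of its edges and inserting one traversal of each $C_i$ when at $t_i$, producing an $st$-walk of length $r \bmod q$.

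I would conclude by noting that an internal vertex of a subdivided path has in- and out-degree $1$, so a walk must traverse a subdivided path entirely once it enters: the number of edges in any \ewm candidate solution therefore has the form $qK + \sum_i \ell_i + 2$, where $K$ is the number of distinct original edges that appear. Since $\sum_i \ell_i + 2$ is the same additive constant for every candidate, minimizing the \ewm objective is equivalent to minimizing $K$, i.e.\ solving \kscss{$k$}. The main technical subtlety is step (ii) above: ruling out $st$-walks that are ``modular-length correct'' yet skip some terminal; this is precisely what the CRT choice $\ell_i = q/p_i$ is designed to prevent, since the constraint decouples coordinatewise in $\Z{q} \cong \prod_i \Z{p_i}$ into $k$ independent conditions $a_i \equiv 1 \pmod{p_i}$.
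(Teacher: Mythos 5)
Your proof is correct and takes essentially the same approach as the paper: subdivide edges to length $q = p_1\cdots p_k$ and attach a cycle of length $\ell_i = q/p_i$ at each terminal so that the CRT forces every cycle (hence every terminal) to be visited. The only cosmetic differences are that the paper uses $r=1$ with a single terminal as both source and target (invoking a Bezout argument for the converse direction), whereas you add fresh $s,t$ vertices and pick $r = 2 + \sum_i \ell_i \bmod q$, which lets the converse direction work with exactly one traversal of each cycle; the paper also pads $T$ when $|T|<k$, which you implicitly assume away.
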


\begin{proof}
For each $i >0$, we let
$p_i$ denote the $i$-th prime number. Recall that the \emph{primorial} of $p_i$ is the value $p_i\# := \Pi_{j=1}^i p_j$, 
and that it is bounded by $e^{(1+o(1))i\log i}$~\cite[A002110]{oeis}.
Thus, for any $k >0$, we set $q := p_k\#$
and we let $r=1$. 

Let us now define the reduction. From the graph $G = (V, E)$, given a set $T'$ of terminals with $|T'| \leq k$, we pad $T'$ to
  $T = \{v_1, \ldots, v_k\}$ by repeating terminals if necessary so that $T$ consists of exactly $k$ terminals.
  Then we define $G' = (V', E')$ by subdividing the edges of~$G$ to length $q$, and by adding on each $v_i$ a self-loop of length $\ell_i := p_k\#/p_i$ for $1 \leq i \leq k$.
We set the source and target in~$G'$ for \ewm to be $v_1$.
Note that, thanks to the assumption that $k$ is constant, this construction runs in linear time because $q$ and $r$ are constants and the values $\ell_i$ can also be computed.

We claim that walks in $G'$ achieving a modularity of $r \bmod q$ must include all of the self-loops. Indeed, if the $i$-th self-loop is not part of the walk, then the walk length must be a combination of edges with length $q$ which is a multiple of~$p_i$, and other self-loops whose lengths are all multiples of $p_i$: thus, the walk length is $0 \bmod p_i$ whereas $r$ is $1 \bmod q$ and therefore $1 \bmod p_i$, a contradiction.
Any optimal solution to our \ewm instance on~$G'$ must therefore visit all the $v_i$'s in $G'$, and clearly for each edge $e$ of $G$ the solution either takes all $q$ edges coding $e$ in~$G'$ or takes none, hence the solution has cost $m' q +\sum_i \ell_i$ 
for some $m'$ and we can easily deduce (in linear time) from the solution a subgraph of~$G$ with~$m'$ edges which is an optimal solution to $k$-SCCS.

Conversely, by the generalized Bezout identity, we know that there
are positive integers $N_1,\dots, N_k$ such that $\sum_{i=1}^k N_i \ell_i = 1\bmod q$.
  Consider any candidate solution $H$ to \kscss{$k$} on $G$ that takes $m'$ edges.
By definition of $H$ being a candidate solution, it is strongly connected and
  features edges incident to all terminals of~$T$. Thus, there is a walk
  $w$ from $v_1$
  to~$v_1$ which uses only edges of~$H$ and visits all terminals of~$T$.
  From $w$, we can build a walk~$w'$ in $G'$ from $v_1$ to $v_1$ that traverses all the same
  $m'$ subdivided edges, and loops exactly $N_i$ times on the loop of length
  $\ell_i$ attached to $v_i$ for each $i\leq k$ at the point where $v_i$ is
  visited. Each of the $m'$ subdivided
  edges outside the loops may be traversed an arbitrary number of times, but has
  length $q$, so $w'$ has length $\sum_{i=1}^k N_i \ell_i \bmod q$ and thus
  the set of edges of~$w'$ is a candidate solution to \ewm on~$G'$.

  Thus, the linear-time reduction, given $G$, consists in building $G'$, computing the optimal solution to \ewm as specified, and recovering from it the subgraph of~$G$ which is an optimal solution to \kscss{$k$}. This concludes the proof.
\end{proof}

The previous reduction shows that our \ewm problem is intuitively at least as complicated as \scss, given that the tractability of \ewm for constant $q$ implies that of \scss for constant $k$. (This being said, we do not claim that the polynomial algorithm given by our results is as efficient as that of earlier works~\cite{FeldmanR06,FeldmannM23}.)
Alternatively, instead of this black-box reduction, we can also adapt our techniques to recapture the tractability of \scss by showing a bound on the cutwidth of optimal solutions, using the segment decomposition. We exemplify below how this is done, and will revisit this afterwards when extending \ewm to support multiple paths.
Recall the notion of timestamp-minimum walks (\cref{def:timestampmin}), and let
us show the following result; note that it applies to \emph{edge-minimal}
solutions (not just optimal solutions).

\begin{lemma}\label{lem:scss-direct}
    Let $G = (V,E)$ and $T = \{v_1, \dots, v_k\}$ be an \scss instance. 
    Every edge-minimal 
    solution $H$ to the instance has cutwidth at most $3k$.

    What is more, for any terminal $v_i$, considering all $H$-walks that start and
    conclude in $v_i$ and visit all terminals from $T$, then any timestamp-minimum $H$-walk among those $H$-walks has at most $k$ segments.
\end{lemma}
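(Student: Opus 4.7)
The plan is to derive the cutwidth bound from the segment count bound via \cref{prp:segment-bound}. Specifically, for any timestamp-minimum $H$-walk $w$ from $v_i$ to $v_i$, the spanned subgraph $G_w$ equals $H$ (since $w$ uses every edge of $H$, and $H$ has no isolated vertices when $k \geq 2$, being strongly connected), so a bound of $k$ on the number of segments of $w$ immediately yields a cutwidth bound of $3k$ on $H$ via \cref{prp:segment-bound}. Thus the heart of the proof is establishing the segment bound, which I focus on below.

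To bound the number of segments $\xi$ of a timestamp-minimum $H$-walk $w = s_1 \cdots s_\xi$, I will exhibit an injection from the set of non-final segment indices $\{1, \ldots, \xi-1\}$ into $T \setminus \{v_i\}$. For each non-final segment $s_\sigma$, let $w_\sigma^{\ast}$ denote the walk obtained from $w$ by replacing the detour $\det_\sigma$ with the alternative path $p_\sigma$ guaranteed by \cref{lem:segment-contains-detour}; then $w_\sigma^{\ast}$ is a closed walk at $v_i$ using only edges of $H$. I proceed by case analysis. If $E_{w_\sigma^{\ast}} = E(H)$, then $w_\sigma^{\ast}$ is itself an $H$-walk, and the timestamp analysis from the proof of \cref{lem:bounding-number-of-segments} applies: the edges of $\det_\sigma$ that were first-visited in $w$ must be first-visited in $w_\sigma^{\ast}$ within the unchanged suffix $s_{\sigma+1}\cdots s_\xi$, which brings them strictly closer to the end of the walk, so the timestamp of $w_\sigma^{\ast}$ is strictly smaller than that of $w$, contradicting timestamp-minimality. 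Otherwise $E_{w_\sigma^{\ast}} \subsetneq E(H)$: the spanned subgraph $G_{w_\sigma^{\ast}}$ is strongly connected (as it is spanned by a closed walk at $v_i$) and forms a strict subgraph of $H$; by the edge-minimality of $H$ together with the monotonicity of the SCSS property, $G_{w_\sigma^{\ast}}$ cannot itself be a candidate SCSS solution, so some terminal $v_{j(\sigma)} \in T \setminus \{v_i\}$ fails to be a vertex of $G_{w_\sigma^{\ast}}$. Equivalently, $v_{j(\sigma)}$ is visited in $w$ only within $\det_\sigma$.

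The map $\sigma \mapsto v_{j(\sigma)}$ is injective, since a single terminal cannot be confined to two distinct detours (which lie in disjoint segments of $w$). Hence $\xi - 1 \leq |T \setminus \{v_i\}| = k - 1$, giving $\xi \leq k$ and finishing the segment bound. The main obstacle is the second case of the dichotomy above: invoking edge-minimality of $H$ together with the strong connectivity of $G_{w_\sigma^{\ast}}$ to force the loss of a non-$v_i$ terminal. This is the only step that genuinely relies on the SCSS structure, replacing the modularity-based subgroup argument of the original \ewm proof; the remaining machinery (segment decomposition, the timestamp preorder, and \cref{prp:segment-bound}) transfers directly from \cref{sec:segment,sec:walk}.
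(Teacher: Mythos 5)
Your proof is correct and follows essentially the same approach as the paper's: both bound the segment count of a timestamp-minimum covering walk by producing, for each non-final segment, a distinct terminal that would be lost under the detour-replacement $\det_\sigma \mapsto p_\sigma$, invoking edge-minimality (the replaced walk must miss some terminal, else it is an $H$-walk contradicting timestamp-minimality) and then deriving the cutwidth bound from \cref{prp:segment-bound}. The paper phrases the key claim contrapositively (``some terminal appears as an intermediate vertex of $s_\sigma$ and not in earlier segments'') rather than via your cleaner dichotomy on whether $E_{w_\sigma^*} = E(H)$, but the underlying mechanism and all supporting lemmas are identical.
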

\begin{proof}
Let $H$ be an edge-minimal solution to the \scss instance. We show the second part
  of the claim, which implies the first part thanks to
  the Segment Cutwidth Bound (\cref{prp:segment-bound}).
  Up to renumbering the terminals, let us show the claim for
  the terminal $v_1$. We consider the
  walks in $H$ that start at $v_1$, visit all terminals $v_2, \ldots, v_k$, and
  finish by returning to the first terminal $v_1$. (We do not constrain the order in which the terminals of $v_2, \ldots, v_k$ are visited, and we allow revisits of already-visited terminals, including $v_1$.)

  By definition of \scss, the existence of such walks is guaranteed, and any such walk covers $H$ by edge-minimality. Consequently, 
  by \cref{lem:segbound},
  the cutwidth of $H$ is at most $3$ times the number of segments in such a walk. 
It remains to prove that when we fix a walk $w$ which is timestamp-minimum among such walks then $w$ has at most $k$ segments.

We consider the segment decomposition $s_1 \cdots s_\xi$ of~$w$.
We claim that, for each successive segment except the last, there must be one
  terminal of $\{v_2, \ldots, v_k\}$ which does not occur in the edges of earlier segments and is visited by the segment as an intermediate vertex. 

Indeed, let us proceed by contradiction and assume that some segment $s_\sigma$ violates this condition for some $1 \leq \sigma < \xi$. Then using \cref{lem:segment-contains-detour}, since the segment finishes before the end of the walk, we can replace the segment detour $\textnormal{det}_\sigma$ by the path $p_\sigma$. Letting $w'$ be the resulting walk, we know that $w'$ still starts at $v_1$, still finishes at $v_1$, and still visits all terminals in some order; indeed, all terminals visited as intermediate vertices of the segment $s_\sigma$ are visited by earlier segments, so replacing $\textnormal{det}_\sigma$ by $p_\sigma$ did not remove a terminal which does not occur elsewhere.

Furthermore, the new walk $w'$ is still an $H$-walk, because $H$ is edge-minimal. Therefore, a similar argument to the one in the proof of \cref{lem:bounding-number-of-segments} allows to conclude that the new walk $w'$ does not satisfy $w \trianglelefteq w'$. 
Specifically, observe that $w$ and $w'$ can be written as $w=w_1 \textnormal{det}_\sigma w_2$ and $w'=w_1 p_\sigma w_2$ for some (common) walks $w_1,w_2$. So, the same edges must be first-visited in $\textnormal{det}_\sigma w_2$ and $p_\sigma w_2$. But in $w'$, no edge in $p_\sigma$ is first-visited, because all these edges already occur in $w_1$. By contrast, in $w$ at least one edge in $\textnormal{det}_\sigma$ is first-visited. The non-empty set $E'$ of such edges are also visited in $w'$, but they must be visited for the first time in $w_2$ in $w'$. This means the edges from $E'$ are visited closer to the end in $w'$ than in $w$, while the other edges first appearing after $w_1$ are visited at the same position (from the end) in $w$ and $w'$. This implies that we do not have $w \trianglelefteq w'$, which contradicts the timestamp-minimality of~$w$. 

We have thus proved that, for each segment $s_\sigma$ with $1 \leq \sigma < \xi$, there is a terminal of $\{v_2, \ldots, v_k\}$ that occurs as an intermediate vertex of~$s_\sigma$ and does not occur before $s_\sigma$ in~$w$. This immediately implies that $\xi \leq k$, i.e., there are at most $k$ segments, as we wanted to show.
\end{proof}

We remark that the cutwidth bound of $3k$ given by this result is slightly better than the bound of $6k$ shown in~\cite{FeldmannM23}.

\myparagraph{Defining Directed Steiner Network with Modularity constraints.}
The reduction from \scss to \ewm leads to the natural question of whether there
could be a problem which subsumes \scss and \ewm,
or even \dsn and \ewm; and a polynomial-time algorithm that subsumes the
tractability of all these problems. We now introduce such a problem, dubbed
\emph{Directed Steiner Network with Modularity} (\dsnm), and show a
polynomial-time algorithm to solve it.

\begin{definition}
For $k > 0$, the $k$-\emph{Directed Steiner Network with Modularity} problem (\kdsnm{$k$})  takes as input a graph $(V,E)$ and a \emph{$k$-requirement specification} $\mathcal{R}$
  which consists of $k$ tuples $(s_i, t_i, r_i, q_i)$ for $1 \leq i \leq k$ such
  that $s_i, t_i\in V$ and $q_i > 0$ and $0 \leq r_i < q_i$.
Our goal is to compute an optimal solution, i.e.,
an edge-minimum subgraph $H$ of $G$ such that $H$ contains a walk from $s_i$ to $t_i$ of length $r_i\bmod q_i$ for every $i\leq k$.
\end{definition}

Our last result in this paper is to show that the \kdsnm{$k$} problem is in PTIME when $k$ and the modulo values $q_i$ are constants. This result subsumes \cref{th:main-theorem} 
and the tractability of \kdsn{$k$} for constant $k$ shown in~\cite{FeldmanR06,FeldmannM23}. 

\begin{theorem}\label{th:kdsnm-tractable}
    We can compute in $n^{O(k + \log q)} \cdot 2^{O(q (k + \log q)^2)}$ time an optimal solution to \kdsnm{$k$}, where $q$ denotes the least common multiple (LCM) of the $q_i$ in the input $k$-requirement specification.
\end{theorem}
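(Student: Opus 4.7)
The plan is to extend the two-step framework developed for \ewm to \kdsnm{$k$}: first bound the cutwidth of optimal solutions as a function of $k$ and $\log q$ with $q := \lcm_i q_i$, then adapt the configuration-graph algorithm of \cref{sec:cw-algo} to account for multiple modularity-constrained endpoint pairs.

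For the cutwidth bound, I would first handle the case where an optimal solution $H$ is strongly connected, by combining the ideas of \cref{lem:scss-direct} and \cref{lem:bounding-number-of-segments}. Fix any terminal vertex, say $s_1$, and consider $H$-walks that start and end at $s_1$ and that, for each $i \leq k$, contain a contiguous sub-walk from $s_i$ to $t_i$ of length $r_i \bmod q_i$. Such walks exist since $H$ is strongly connected and realizes all modularity constraints; by edge-minimality any such walk must use all edges of $H$, so any timestamp-minimum walk $w$ among them spans $H$. The plan is to argue that each segment of $w$ (other than the last) either visits some $s_i$ or $t_i$ for the first time as an intermediate vertex, or strictly enlarges the subgroup of $\Z{q}$ generated by the achievable differences of earlier segments --- otherwise the segment detour $\det_\sigma$ could be collapsed to $p_\sigma$ while preserving both the terminal-visit pattern and the required remainders at each pair, contradicting timestamp-minimality. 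This gives $O(k + \log q)$ segments in total, and applying the Segment Cutwidth Bound (\cref{prp:segment-bound}) then yields cutwidth $O(k + \log q)$ for $H$.

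For the algorithm, the idea is to reuse the configuration-graph construction of \cref{def:configuration-graph} almost verbatim, with $q$ interpreted as $\lcm_i q_i$ and the domain-size bound $\omega$ set to $O(k + \log q)$, where the additional $O(k)$ slots are used to hold the $2k$ terminals permanently so that they remain queryable at the end. A configuration $(D, \rho)$ is then declared final exactly when $\{s_i, t_i\} \subseteq D$ and $\rho(s_i, t_i)$ contains some $r'$ with $r' \equiv r_i \pmod{q_i}$, for every $i \leq k$. Soundness and completeness should follow essentially verbatim from the arguments of \cref{clm:algo-correct} and \cref{clm:algo-complete}, using the invariant of \cref{clm:invariant-dyn}; the running time $n^{\omega+1} \cdot 2^{O(\omega^2 q)}$ instantiates to the claimed $n^{O(k + \log q)} \cdot 2^{O(q(k+\log q)^2)}$.

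To remove the strong-connectivity assumption, I would invoke structural lemmas from \cite{FeldmannM23} showing that any optimal \dsn-like solution decomposes into a constant-size (in $k$) DAG of strongly connected pieces, and then concatenate per-piece vertex orderings along a topological sort to get a global ordering of cutwidth $O(k + \log q)$. The main obstacle I anticipate is precisely this last step: carefully verifying that the Feldmann--Marx decomposition interacts cleanly with the modularity tracking, in particular when a required sub-walk from $s_i$ to $t_i$ straddles several pieces of the decomposition and its modular length budget must be split across them. A safer fallback, if this interaction proves too delicate, is to stratify the algorithm itself by SCC --- enumerate the bounded DAG structure explicitly, run the configuration-graph search within each piece separately with suitable guessed boundary data, and glue --- without affecting the asymptotic complexity.
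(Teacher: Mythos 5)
Your overall architecture matches the paper's: bound the cutwidth of an optimal solution as $O(k+\log q)$, first for strongly connected solutions and then in general via the Feldmann--Marx lemmas, and then rerun the configuration-graph algorithm with $\omega = O(k+\log q)$, all terminals held permanently in the domain, and the final-configuration test checking each $\rho(s_i,t_i)$ against $r_i \bmod q_i$. That part of the proposal is essentially the paper's proof.

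The genuine gap is in the strongly connected case. You claim that if a non-last segment $s_\sigma$ neither first-visits a terminal nor strictly enlarges the subgroup, then ``the segment detour $\det_\sigma$ could be collapsed to $p_\sigma$ while preserving both the terminal-visit pattern and the required remainders at each pair.'' But collapsing $\det_\sigma$ to $p_\sigma$ changes the length of the particular $s_i$--$t_i$ witness sub-walk that contains $s_\sigma$ by $-\Delta(w,\sigma) \bmod q$. To preserve that pair's constraint you must re-insert $+\Delta(w,\sigma)$ \emph{into the same witness sub-walk}. However, $\Delta(w,\sigma) \in S(w,\sigma-1)$ only tells you it is a combination $\sum c_{\sigma''}\Delta(w,\sigma'')$ over earlier segments $s_{\sigma''}$, and those $s_{\sigma''}$ (and their detours) may live in earlier witness sub-walks $s_j$--$t_j$ with $j\neq i$. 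Replaying those detours in place would corrupt pair $j$'s remainder instead of fixing pair $i$'s. So the claimed timestamp-minimality contradiction does not yet follow. The paper resolves this by imposing more structure on the covering walk: it is of the form $w_0 w_1 w_1' \cdots w_k$, where the prefix $w_0$ is a fixed timestamp-minimum walk through all $2k$ terminals (contributing the $O(k)$ bound via \cref{lem:scss-direct}), and the key invariant (*') allows one to navigate from the most recent boundary back through $w_0$ to the site of any earlier detour $\det_{\sigma''}$ and replay it $c_{\sigma''}$ times, inserting a net modification of $c_{\sigma''}\Delta(w,\sigma'')$ \emph{after} that boundary only, so only the current witness sub-walk is affected. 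The ``special'' segments that are exempted from the subgroup-growth argument are the $O(k)$ segments inside $w_0$ plus the $O(k)$ segments that straddle a boundary $w_i$--$w_i'$, not the segments that first-visit a terminal; those two notions do not coincide. Without the $w_0$ hub and the boundary-local invariant, the $O(\log q)$ part of your segment bound is unsupported.

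Two smaller points. First, the condensation graph of an optimal solution is not ``constant-size (in $k$)''; what~\cite{FeldmannM23} give (their Lemma~2.1) is that it has cutwidth $O(k)$ (it is a union of $k$ paths), and their Lemma~2.2 then combines that with the per-SCC cutwidth bound. Second, you flag the Feldmann--Marx gluing as the anticipated main difficulty, but in the paper that step is a clean black-box application of those two lemmas together with \cref{clm:sccs-optimal}, which shows that each SCC of an optimal \kdsnm{$k$} solution is itself an optimal solution of a derived \kscssm{$k$} instance (with remainders read off the restriction of the witness walks to the SCC). The delicate step is really the \kscssm{$k$} cutwidth bound, which is where your sketch has the gap.
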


Note that our exponents are given up to constant factors, so we do not claim to recover the same exponents as earlier works on \kdsn{$k$} without modularity constraints (i.e., for $q=1$).

The overall strategy to prove \cref{th:kdsnm-tractable} is to follow the methodology used for \ewm, adjusting the constructions presented so far in the paper. 
Deviating somewhat from \ewm, but following prior work about
\dsn~\cite{FeldmanR06,FeldmannM23}, in our study of \dsnm we will focus on the SCCs of an optimal solution and study them separately, instead of studying the entire solution graph. Our main objective is to bound the cutwidth of SCCs in an optimal solution as a function of $q$ and of the number $k$ of endpoint pairs. We can then easily deduce like in~\cite{FeldmannM23} that the cutwidth of optimal solutions is bounded as a function of $q$ and~$k$. To compute the solutions of bounded cutwidth, we then modify slightly the algorithm of \cref{sec:cw-algo}.

To study the SCCs of optimal solutions, it will be useful to introduce an analogue of the \scss problem featuring modularities. Specifically, we 
call \kscssm{$k$} the problem that takes 
the same input as \kdsnm{$k$}
and returns a smallest \emph{strongly connected} subgraph $H$ of $G$ such that $H$ contains a walk from $s_i$ to $t_i$ of length $r_i\bmod q_i$ for every $i\leq k$.
Note that, unlike
the \scss problem which was simply defined by a set of terminals, in
\kscssm{$k$} we specify a set of tuples connecting pairs of vertices,
because we need to specify which remainder must be achieved by which endpoint pair. 
The difference with \kdsnm{$k$} is only that the solution graph is required to be
strongly connected.
Our study of \kscssm{$k$} is motivated by the fact that the SCCs of optimal
solutions to \kdsnm{$k$} are optimal solutions to instances of \kscssm{$k$}, as was
already known in the setting without modularities~\cite{FeldmanR06}. Namely:

\begin{claim}
\label{clm:sccs-optimal}
    Let $G = (V,E)$ be a directed graph, and $\mathcal{R}$ be a $k$-requirement
    specification. In any optimal solution $H = (V, E')$ to \kdsnm{$k$} on $G$
    for~$\mathcal{R}$, for any SCC $C$ of $H$, there exists a $k$-requirement
    specification $\mathcal{R}'$ such that $C$ (as a set of edges) is an optimal solution to \kscssm{$k$} on~$G$ for~$\mathcal{R'}$.
\end{claim}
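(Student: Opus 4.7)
The plan is to construct $\mathcal{R}'$ explicitly from walks in $H$ witnessing its satisfaction of~$\mathcal{R}$, then apply an exchange argument to establish optimality. Fix, for each $i \in \{1,\ldots,k\}$, an $s_i t_i$-walk $w_i$ in $H$ of length $r_i \bmod q_i$. Since $w_i$ may enter and leave $C$ several times, decompose it into its maximal contiguous subwalks that remain entirely inside~$C$, writing these as $w_i^{(1)},\ldots,w_i^{(m_i)}$, where $w_i^{(j)}$ runs from $u_i^{(j)}$ to $v_i^{(j)}$ and has length $\ell_i^{(j)}$. I would define $\mathcal{R}'$ as the set of tuples $(u_i^{(j)},v_i^{(j)},\ell_i^{(j)}\bmod q_i, q_i)$ for all valid $i,j$. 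The subwalks $w_i^{(j)}$ themselves witness that $C$ satisfies $\mathcal{R}'$, and $C$ is strongly connected as an SCC of~$H$.

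To show $C$ is edge-minimum, I would argue by exchange. Suppose for contradiction that some strongly connected subgraph $C^\ast \subseteq G$ satisfies $\mathcal{R}'$ using strictly fewer edges than~$C$. Form $H^\ast := (H \setminus C) \cup C^\ast$, which uses fewer edges than~$H$. For each $i$, construct $w_i^\ast$ from $w_i$ by replacing each subwalk $w_i^{(j)}$ with a walk $\pi_i^{(j)}$ in $C^\ast$ from $u_i^{(j)}$ to $v_i^{(j)}$ of length $\equiv \ell_i^{(j)} \pmod{q_i}$, which is guaranteed to exist since $C^\ast$ satisfies $\mathcal{R}'$. The outside-$C$ portions of $w_i$ survive intact in $H^\ast$, and each replacement preserves endpoints and lengths modulo $q_i$, so $w_i^\ast$ is a valid walk in $H^\ast$ with $|w_i^\ast| \equiv |w_i| \equiv r_i \pmod{q_i}$. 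Thus $H^\ast$ satisfies~$\mathcal{R}$ with strictly fewer edges than~$H$, contradicting the optimality of~$H$.

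The main obstacle I expect concerns the size of~$\mathcal{R}'$: the construction produces $\sum_{i=1}^k m_i$ requirements, which can exceed $k$ when some $w_i$ revisits~$C$. The cleanest resolution is to read the statement as asserting a requirement specification whose size is bounded as a polynomial function of $k$ and~$q$, which is what the downstream argument in~\cref{th:kdsnm-tractable} actually needs; to match the stated $k$-requirement bound exactly, one can pad $\mathcal{R}'$ with trivial tuples of the form $(v,v,0,1)$ for some $v\in C$ when $\sum_i m_i < k$, or else argue by a finer analysis (along the lines of the segment decomposition of~\cref{sec:segment}) that the witnessing walks $w_i$ can be chosen so that each visits $C$ only a bounded number of times.
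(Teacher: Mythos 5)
The core of your construction and exchange argument matches the paper's, but you have a gap: you treat the case $m_i > 1$ (a walk $w_i$ entering and leaving $C$ multiple times) as a genuine possibility that forces you to either bound $\sum_i m_i$ by a finer analysis or weaken the statement. In fact this case cannot occur. Since $C$ is an SCC of~$H$, if a walk in~$H$ were at a vertex $a \in C$, left to a vertex $b \notin C$, and later returned to some $c \in C$, then we would have a path $a \to b \to c$ and (by strong connectivity of~$C$) a path $c \to a$, witnessing that $b$ lies in the same SCC as~$a$ --- a contradiction. Thus each walk $w_i$ intersects $C$ in at most one contiguous subwalk, so $m_i \le 1$ and $\sum_i m_i \le k$. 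This is precisely the observation the paper's proof leans on, and it renders your proposed workaround (a segment-decomposition style argument to bound revisits) unnecessary.

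With $m_i \le 1$ in hand, your construction simplifies to the paper's: for each $i$ whose walk enters $C$, record the entry vertex $s_i'$, exit vertex $t_i'$, and the modular length of the in-between subwalk, and pad to $k$ tuples if fewer walks touch~$C$. Your exchange argument is correct as written --- note it even tolerates $C^\ast$ sharing edges with $H \setminus C$, since $|(H \setminus C) \cup C^\ast| < |H|$ regardless. Once you close the $m_i \le 1$ gap, the proof is essentially identical to the paper's.
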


\begin{proof}
    Fix the solution $H$ and pick an SCC $C$ of~$H$. We can cover $H$ with the $k$ walks $w_1, \ldots, w_k$ that witness that the requirements of~$\mathcal{R}$ are obeyed. By definition of an SCC, for each walk $w_i$, either $w_i$ does not enter $C$ at all, or it enters $C$ at some vertex $s_i'$ and stays in $C$ until it 
    leaves $C$ at some vertex $t_i'$ (or ends at~$t_i'$ altogether). The walk
    then never returns to~$C$ afterwards.

    Thus, let $\mathcal{R'}$ be defined in the following way: for each $i$ such that $s_i'$ and $t_i'$ are defined, we add a tuple $(s_i', t_i', r_i', q_i')$ where $q_i'$ is the corresponding value in~$\mathcal{R}$ and $r_i'$ is the remainder modulo $q_i'$ of the subwalk of~$w_i$ between $s_i'$ and $t_i'$. We pad $\mathcal{R}'$ by duplicating some tuples if necessary to ensure that it is a $k$-requirement specification.

    We claim that $C$ is an optimal solution to the \kscssm{$k$} problem on
    $G$ and~$\mathcal{R}'$. Indeed, we know by construction that it is a
    candidate solution. Now, assume by contradiction that there is another candidate solution $C'$ to \kscssm{$k$} on~$G$ for~$\mathcal{R}'$ which
has  fewer edges than~$C$.
    Then let $H'$ be the subgraph obtained from~$H$ by removing the edges of~$C$ and adding the edges of~$C'$. We know that $H'$ has fewer edges than $H$, and it is still a candidate solution to \kdsnm{$k$} on~$G$ for~$\mathcal{R}$: the witnessing walks $w_i$ only interacted with $C$ between the $s_i'$ and $t_i'$, and there are walks in~$C'$ achieving the same remainders. Thus, $H'$ is a candidate solution with fewer edges  than $H$; this contradicts the fact that $H$ is an optimal solution. We have established by contradiction that there is no candidate solution having fewer edges than~$C$, hence $C$ is an optimal solution to \kscssm{$k$}. This concludes the proof.
\end{proof}

We next bound the cutwidth of optimal solutions to the \kscssm{$k$} problem specifically. Then we will deduce a cutwidth bound on optimal solutions to \kdsnm{$k$} using a result of~\cite{FeldmannM23}. Finally, we will show an algorithm to compute these solutions.

\myparagraph{Bounding the cutwidth of solutions to \kscssm{$k$}}
Let us start by showing that solutions to \kscssm{$k$} have a small segment number and hence a small cutwidth:
\begin{lemma}\label{lem:scssm-segment-cutwidth}
Let $G$ be a graph and $\mathcal{R}$ be a $k$-requirement specification, denote by
  $q_1, \dots, q_k$ the respective modularities of its tuples, and denote by 
  $q$ the least common multiple (LCM) of $q_1, \dots, q_k$.
Then every optimal solution of the \kscssm{$k$} problem on~$G$ and~$\mathcal{R}$
can be covered with a walk of at most $O(k + \log q)$ segments and therefore has cutwidth at most $O(k + \log q)$.
\end{lemma}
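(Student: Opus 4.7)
The plan is to combine the techniques of \cref{lem:bounding-number-of-segments} (which bounds the number of segments by $O(\log q)$ for \ewm via subgroup expansion of $\Z{q}$) and \cref{lem:scss-direct} (which bounds it by $k$ for \scss via fresh terminal visits). First I would use the strong connectivity of the optimal solution $H$ to construct a covering walk $W$ in $H$: fix a vertex $v \in H$, concatenate the witnessing walks from $s_i$ to $t_i$ of length $r_i \bmod q_i$ for $i=1,\dots,k$ with connector walks in $H$ between consecutive requirements, and prepend closed walks at $v$ to cover any remaining edges of $H$ not already visited. The result is a walk $W$ that uses exactly the edges of $H$ and contains each required witnessing subwalk.

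Let $\mathcal{W}_H$ denote the family of walks in $H$ that use exactly the edges of $H$ and contain, for each $i$, a subwalk from $s_i$ to $t_i$ of length $r_i \bmod q_i$. I would define a timestamp preorder on $\mathcal{W}_H$ analogously to \cref{def:timestampmin}, fix a timestamp-minimum $W^* \in \mathcal{W}_H$ with segment decomposition $s_1 \cdots s_\xi$, and argue that each non-last segment $s_\sigma$ satisfies one of the following two \emph{contribution criteria}: (1) some endpoint in $\{s_1, t_1, \ldots, s_k, t_k\}$ appears as an intermediate vertex of $s_\sigma$ but not as an intermediate vertex of any earlier segment; or (2) the achievable difference $\Delta_q(W^*, \sigma)$ does not lie in the subgroup $S(W^*, \sigma-1)$ of $\Z{q}$ generated by prior differences. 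Since criterion (1) can apply to at most $2k$ segments and criterion (2) to at most $\log_2 q$ segments (by Lagrange's theorem), this yields $\xi \leq 2k + \log_2 q + 1 = O(k + \log q)$, and then the Segment Cutwidth Bound (\cref{prp:segment-bound}) gives cutwidth at most $3\xi = O(k + \log q)$.

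To prove that every non-last segment satisfies at least one contribution criterion, I would follow the template of \cref{lem:bounding-number-of-segments}: assuming for contradiction that $s_\sigma$ satisfies neither, I would construct $W' \in \mathcal{W}_H$ with strictly smaller timestamp by replacing $\det_\sigma$ with $p_\sigma$ inside $s_\sigma$ and using the analog of invariant (*) to rewrite the prefix $s_1 \cdots s_{\sigma-1}$ on the same edge set but with total length shifted by $\Delta_q(W^*, \sigma)$ modulo $q$. The argument that $W'$ uses exactly the edges of $H$ and has strictly smaller timestamp than $W^*$ should transfer essentially verbatim from the proof of \cref{lem:bounding-number-of-segments}.

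The hard part will be verifying that $W' \in \mathcal{W}_H$, i.e., that $W'$ still contains the required $s_i$-to-$t_i$ subwalks with correct modularity. For each witnessing subwalk entirely contained in the unchanged suffix $s_{\sigma+1} \cdots s_\xi$ of $W^*$, persistence is immediate. For subwalks intersecting the modified prefix, the argument will rely on three facts: no endpoint visit is lost, since no new endpoint appeared as an intermediate vertex of $s_\sigma$ by failure of criterion (1); the overall length modification is $0 \bmod q$ and hence $0 \bmod q_i$ for each $i$, since each $q_i$ divides $q$; and a careful choice of which occurrences of $s_i$ and $t_i$ in $W'$ serve as endpoints of each witnessing subwalk. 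This modularity-preservation step is the most delicate part of the proof and will likely require a strengthening of invariant (*) so that the rewritten prefix can simultaneously realize any target length in $S(W^*, \sigma-1)$ \emph{and} exhibit witnessing subwalks of the required modularities between appropriate visits of the relevant endpoints.
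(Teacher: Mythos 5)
You correctly identify the two contributions to the segment bound (new terminal visits, subgroup expansion in $\Z{q}$), and you correctly flag preservation of the witnessing subwalks' modularities as the crux. But that is precisely where your proposal has a gap, and ``a careful choice of which occurrences serve as endpoints'' together with an unspecified ``strengthening of invariant~(*)'' is not an argument. When you replace $\det_\sigma$ by $p_\sigma$ and then rewrite the prefix $s_1\cdots s_{\sigma-1}$ via an analogue of invariant~(*), the compensating detour insertions land at the ends of earlier segments, which may sit inside a witnessing subwalk $w_{i'}$ for an index $i'$ different from the one containing $s_\sigma$. Then $|w_{i'}|\bmod q_{i'}$ changes by a nonzero amount even though the total walk length is unchanged modulo~$q$; the observation that the total change is $0\bmod q$, hence $0\bmod q_i$, simply does not control individual subwalk lengths. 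Nor does re-designating occurrences of $s_i,t_i$ help: nothing guarantees that any other pair of occurrences bounds a subwalk of length $r_i\bmod q_i$. So your $W'$ need not lie in $\mathcal{W}_H$ and the timestamp-minimality contradiction does not go through.

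The paper's proof contains an idea your proposal is missing, and it is not a local fix. It imposes a rigid structure $w = w_0\, w_1 w_1'\cdots w_k$ on the covering walk, where $w_0$ is a timestamp-minimum closed walk at $s_1'$ covering an edge-minimum strongly connected subgraph $H_0\subseteq H$ that visits all $2k$ terminals (so that \cref{lem:scss-direct} bounds its segment count by $2k$), and takes timestamp-minimality of the full walk only over $w_0$-legal covering walks. The crucial technical point is a stronger invariant~(*') whose rewrites are guaranteed to modify the walk only \emph{after the most recently traversed boundary} between subwalks. To reuse a detour $\det_{\sigma''}$ lying before that boundary, the proof inserts, at the boundary, a closed walk of length a multiple of~$q$ that navigates through already-visited edges (via $w_0$'s strongly connected edge set and a replay of the prefix) from the boundary vertex to the detour's start and back, carrying the required number of $p_{\sigma''}$-versus-$\det_{\sigma''}$ substitutions. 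This localizes every compensation inside the single witnessing subwalk that also contains $s_\sigma$, so all other $w_i$ are literally unchanged as subwalks, and it is precisely the strong connectivity of $H_0$, built in by the choice of $w_0$, that makes this navigation possible. A secondary consequence is that the subgroup-growth claim is proved for \emph{pairs of consecutive non-special segments} (both must be non-special so the compensation stays past the boundary), rather than segment by segment as in your criterion~(2).
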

Note that this claim only works for the \kscssm{$k$} problem, not the \kdsnm{$k$} problem, because it assumes the solution can be covered by a single walk which is not true in general for \kdsnm{$k$}.

Let us show the result:

\begin{proof}[Proof of \cref{lem:scssm-segment-cutwidth}]
At a high level, the proof of this result can be decomposed into several steps. In step~1, we define the notion of a \emph{legal covering walk}, which is a walk that covers the edges of a solution in a prescribed order: first visit all terminals to witness that they are strongly connected in a subwalk $w_0$ (similarly to \cref{lem:scss-direct}), then visit all paths with the prescribed modularities in a subwalk $w'$. In step~2, we carefully impose a variant of timestamp-minimality on legal covering walks towards bounding their number of segments. In step~3, we use \cref{lem:scss-direct} (on the first part of the legal covering walk) and an argument adapted from \cref{lem:bounding-number-of-segments} (for the second part) to show that the number of segments is bounded. The point of splitting the walk in two is that $w_0$ visits enough edges to allow us to move to arbitrary endpoints and ensure strong connectedness: this allows us to replay arbitrary detours in any subwalk no matter where they occur, intuitively ``pooling''
the achievable differences $\Delta(w,\dots)$ between all the subwalks.
Let us now explain the three steps in order.

\medskip
\textbf{Step 1: Legal covering walks.}
Let $H$ be an optimal solution to the \kscssm{$k$} instance given by the graph $G=(V,E)$
and the $k$-requirement specification $\mathcal{R}$
formed of the tuples $(s_i', t_i', r_i', q_i)$ for $1 \leq i \leq k$.
We write for convenience the set $T := \{s_1', \ldots, s_k', t_1', \ldots, t_k'\}$.
A \emph{legal covering walk} for $\mathcal{R}$ and~$H$ is an $H$-walk $w$ that  can be decomposed as $w_0 w_1 w_1' \cdots w_{k-1} w_{k-1}' w_k$ where:
\begin{itemize}
    \item The walk $w_0$ starts at $s_1'$, ends at $s_1'$, and in between it visits all vertices of $T \setminus \{s_1'\}$ in some arbitrary order,
      i.e., it covers a graph $H_0$ which would be a solution to the \kscss{$2k$} instance on $G$ with $T$ without modularities. We further require that the subgraph $H_0$ of~$H$ thus covered is edge-minimum \emph{among subgraphs of~$H$} -- note that $H_0$ is in particular edge-minimal, but $H_0$ may not be edge-minimum overall.
    \item Each walk $w_i$ for $1 \leq i \leq k$ goes from $s_i'$ to $t_i'$ and has length $r_i' \bmod q_i$, i.e., the walks $w_i$ witness that we have a solution to the \kdsnm{$k$} instance.
    \item Each walk $w_i'$ for $1 \leq i < k$ goes from $t_i'$ to $s_{i+1}'$ with no requirement on its length.
\end{itemize}

\medskip
\textbf{Step 2: Timestamp-minimality.}
The optimal solution $H$ is edge-minimum, so any legal covering walk $w$
  for~$H$ must cover all edges of~$H$; otherwise the strict subset of edges that it uses would be a solution to \kscssm{$k$} using fewer edges than $H$.
Therefore, we have $G_w = H$, so we know by \cref{lem:segbound} that
the cutwidth of $H$ is at most $3$ times the number of segments of a legal covering walk $w$.

We will now choose $w$ more carefully, imposing timestamp-minimality in a specific way, so as to bound the number of segments.
Let us consider all walks $w_0$ which start and end at $s_1'$ and visit all other terminals of~$T$ between the two endpoints,
and do so while using an edge-minimum subset $H_0$ of the solution $H$ that we are considering.
Now, let us choose one $w_0$ which is timestamp-minimum among all such
  $H_0$-walks.
  (Note that, because $H_0$ is edge-minimum,
  all walks satisfying the requirements while using only edges of~$H_0$ are
  actually using all edges of~$H_0$, i.e., they are $H_0$-walks.)

  We then define a \emph{$w_0$-legal covering walk} for $\mathcal{R}$ and~$H$ as a walk of the form defined in step 1 but for the specific choice of~$w_0$ that we now made, i.e., when decomposing the walk according to the definition of legal covering walks, the first walk in the decomposition must be $w_0$ exactly.
  Recall that
  any $w_0$-legal covering walk $w$ for~$H$
  is an $H$-walk.
  Let us pick an $H$-walk $w$ which is timestamp-minimal among the $w_0$-legal
  covering walks for~$H$.
(Note that $w$ is not necessarily timestamp-minimal among all $H$-walks,
or among all legal covering walks for~$H$, because of the specific structure that we impose on it.)
This choice of~$w$ is the walk for which we will show a bound on the number of segments; as $w$ covers all edges of~$H$ this implies the desired bound on the cutwidth of~$H$ by the Segment Cutwidth Bound (\cref{prp:segment-bound}).

\medskip
\textbf{Step 3: Segments and timestamps in a composite walk.}
We now wish to bound the number of segments of the walk $w$, for the segment decomposition defined in \cref{sec:segment}.
First, by the definition of $w_0$ (timestamp-minimum among edge-minimum
$H_0$-walks that start and end on a same fixed terminal and visit the $2k-1$ other terminals) and by \cref{lem:scss-direct},
we know that there are at most $2k$ segments of~$w$ which start within $w_0$.
Hence, it suffices to show that the number of segments which start after $w_0$ in~$w$ is $O(k+\log_2 q)$, which we do in the rest of the proof.
Recalling that $w = w_0 w_1 w_1' \cdots w_{k-1} w_{k-1}' w_k$, 
a \emph{boundary} is a position between two subwalks in the decomposition above: note that the vertex visited at that position is always a terminal of~$T$.
We then say that a segment is \emph{special} if it starts within $w_0$, or if it walks over a boundary, or if it is the last segment. By what precedes for segments starting within~$w_0$, or by definition otherwise, there are $O(k)$ special segments. Hence, in the sequel, we show that there are $O(\log_2 q)$ non-special segments in~$w$.

Let $s_1,\dots, s_\xi$ be the segment decomposition of the walk $w$, including both the special and the non-special segments.
(The reader should be careful not to confuse the segments $s_1, \ldots, s_\xi$ segments with the terminals $s_1', \ldots, s_k'$.)
We now re-use notation from \cref{sec:segment}, in particular the notion of achievable differences.
For each segment $s_\sigma$, we let $S(w,\sigma)$ denote the subgroup of $\Z{q}$ generated by $\{\Delta(w, s_{\sigma'})\mid \sigma'\leq \sigma\}$, and for convenience we write $S(w,0)$ to denote singleton group with just the identity $0_{\Z{q}}$.
By definition, the $S(w,\sigma)$ form a sequence of increasing subgroups of~$\Z{q}$.
Our goal is to show that 
for each pair of consecutive non-special segments $s_{\sigma-1}$ and $s_{\sigma}$, the set $S(w,\sigma)$ is a strict superset of $S(w,\sigma-1)$.
We do so by a variant of the proof of \cref{lem:bounding-number-of-segments} which we spell out below.

We first show the analogue of invariant (*), dubbed (*'): for each index $1 \leq \sigma < \xi$ such that $s_\sigma$ is non-special,
letting $\sigma' < \sigma$ be maximal such that $s_{\sigma'}$ is special,
splitting $s_{\sigma'} = s_- s_+$ at the rightmost boundary that $s_\sigma'$ contains,
letting $y_\sigma$ be the vertex on which segment $s_\sigma$ ends, for each $r' \in S(w, \sigma)$, there is a walk from
the first terminal~$s_1'$ 
to~$y_\sigma$ which 
uses only edges of $s_1 \cdots s_\sigma$,
has length precisely $|s_1 \cdots s_\sigma|+r' \bmod q$, and starts with~$s_1 \cdots s_{\sigma'} s_-$.

Informally, invariant (*') above allows us to rewrite the walk at a non-special segment so as to adjust the length of the walk by any remainder in $S(w, \sigma)$, while ensuring that the set of traversed edges is the same and that 
walk is only changed after the most recently traversed boundary.
(This ensures that the resulting walk can still be decomposed as a $w_0$-legal
covering walk, in particular the resulting walk still starts with $w_0$,
although like in \cref{lem:bounding-number-of-segments} we do not know the
segment decomposition of the resulting walk.) 
In the proof we will use the edges visited by~$w_0$ as a way to guarantee that the detours necessary to achieve the remainders in~$S(w, \sigma)$ 
can be ``inserted'' precisely at the most recently traversed boundary. 
(Note that this proof is slightly different from that of \cref{lem:bounding-number-of-segments}, where we needed to insert detours at multiple places and change the beginning of the walk entirely.)

Let us now establish invariant (*').
Fix $\sigma$, $y_\sigma$, $r' \in S(w, \sigma)$, $\sigma'$, and $s_{\sigma'} = s_- s_+$.
By definition of $S(w, \sigma)$ we can write $r'= \sum_{1 \leq \sigma'' \leq
\sigma} c_{\sigma''} \Delta(w, \sigma'')$ with $c_{\sigma''} \in \mathbb{N}$ for each $1 \leq \sigma'' \leq \sigma$.
We will modify the walk to add each successive $c_{\sigma''} \Delta(w, \sigma'') \bmod q$ while satisfying the conditions.
In the modifications we will only insert subwalks that restrict themselves to already-visited edges, so that the set of visited edges by the resulting walk is correct.
Thus, let us pick one $\sigma''$.
Using the notations for detours introduced in \cref{lem:segment-contains-detour}, we know that $\Delta(w, \sigma'') = |\text{det}_{\sigma''}| - |p_{\sigma''}|$, and we distinguish are two cases, depending on whether $\sigma'' \geq \sigma'$ or not.

Case 1: if $\sigma'' \geq \sigma'$. Then, recalling the walks $p_{\sigma''}$ and
$\text{det}_{\sigma''}$ and $\overline{p_{\sigma''}}$ from
\cref{lem:segment-contains-detour}, we can use the argument from
\cref{lem:bounding-number-of-segments} to insert $(p_{\sigma''}
\overline{p_{\sigma''}})^q$ at the end of segment $\sigma''$  and replace
$c_{\sigma''}$ copies of $p_{\sigma''}$ by $\text{det}_{\sigma''}$ to modify the
remainder.
This change ensures that the resulting walk still starts with~$s_1 \cdots s_{\sigma'} s_-$.

Case 2: if $\sigma'' < \sigma'$. Then, doing the change as in case 1 would
modify the walk before the latest traversed boundary.
Instead, we will change the walk between $s_-$ and $s_+$, using the presence of
$w_0$ to navigate back to the detour that we wish to use. More precisely,
between $s_-$ and $s_+$ we are at a vertex $v$ from $T$ by definition of a
boundary. Let $w''$ be any subwalk of $w_0$ that navigates from $v$ to $s_1'$,
and let $w'''$ be a prefix of $s_1 \cdots s_{\sigma'-1} s_-$
that navigates from $s_1'$ to the vertex $u$ which starts $\text{det}_{\sigma''}$ and $p_{\sigma''}$. Likewise,
let $\overline{w'''}$ be a subwalk of~$s_1 \cdots s_{\sigma'-1} s_-$ that navigates
from the end of $s{\sigma''}$ to~$v$: here we use the fact that $\sigma'' <
\sigma'$, so the end of~$s{\sigma''}$ is before the occurrence of~$v$ between
$s_-$ and $s_+$ and $\overline{w'''}$ exists. We simply insert 
$(w'' w''' p_{\sigma''} \overline{p_{\sigma''}} \overline{w'''})^q$ 
between $s_-$ and $s_+$: this gives a legal walk (i.e., the concatenation is
possible), the length of the resulting walk modulo $q$ is the same as that
of~$w$ modulo~$q$ 
because the length of the inserted subwalk is a multiple of~$q$, further the
resulting walk 
still starts with~$s_1 \cdots s_{\sigma'} s_-$.
Then, we
finish as in Case~1 by replacing $c_{\sigma''}$ occurrences of $p_{\sigma''}$ by $\text{det}_{\sigma''}$
to modify the remainder. Thus, we have established invariant~(*').

Now, assume by way of contradiction that there is 
a pair of two consecutive non-special segments $s_{\sigma-1}$ and $s_\sigma$
such that $S(w,\sigma) = S(w,\sigma-1)$, meaning that $\Delta(w,
\sigma)$ is in $S(w,\sigma-1)$. We use invariant~(*') to rewrite $w$ to a different
walk that witnesses a contradiction of timestamp-minimality. More precisely, as in
\cref{lem:bounding-number-of-segments}, we first replace $\text{det}_\sigma$ by the
existing path $p_\sigma$, changing the walk length by $-\Delta(w,\sigma) \bmod
q$, and we then use invariant~(*') on the non-special segment~$s_{\sigma-1}$ to compensate this change thanks to the fact that
$\Delta(w, \sigma) \in S(w,\sigma-1)$. We know that the resulting walk is
still an $H$-walk because $H$ is an optimal solution, and it is not
greater than or equal to~$w$ in the timestamp preorder for the same reasons as in
\cref{lem:bounding-number-of-segments}: replacing $\text{det}_\sigma$ by
$p_\sigma$ pushes back towards the end the first visit of at least one edge,
without changing the moment at which the other first-visited edges that are
closer to the end of the segment are first-visited. (This uses the fact that, in
the statement of invariant (*'), the rewritten walk is required to visit the same
edges as the original walk.) So we have a contradiction if we can establish that the rewritten walk is still a $w_0$-legal covering walk (because the timestamp-minimality of~$w$ was asserted among such walks only).

To see this, notice that invariant~(*') ensures that the walk is unchanged to the
left of the latest boundary before $s_\sigma$. Hence, the walk still starts
with~$w_0$. What is more, the walk still contains witnessing subwalks like the
$w_i$ and $w_i'$. This is because, to the left of the latest boundary
before~$s_\sigma$, the walk is unchanged, so the remainders of the 
lengths of the subwalks left of this boundary are unchanged. As for the witnessing
subwalks right of this boundary, for the next subwalk we have ensured that the
change in length caused by replacing $\text{det}_\sigma$ by $p_\sigma$ is
compensated by a modification that happens within the same subwalk thanks to
invariant~(*') 
and thanks to the fact that $s_\sigma$ and $s_{\sigma-1}$ are
non-special. The next subwalks are also unchanged. Hence indeed the rewritten
walk is a $w_0$-legal covering walk and we have obtained our contradiction.
Thus, we have shown that consecutive pairs of non-special segments
$s_{\sigma-1}$ and $s_\sigma$ must ensure that $S(w,\sigma) \supsetneq S(w,\sigma-1)$.

The end of the proof is exactly like \cref{lem:bounding-number-of-segments}. We have proved that the subgroups of non-special segments must form a strictly growing sequence for containment; only for special segments $\sigma$ may it be the case that $S(w, \sigma) = S(w, \sigma-1)$. Hence, by Lagrange's theorem, this implies there are no more than $\log_2 q$ non-special segments in $w_1$, which is what remained to be established.
\end{proof}

\myparagraph{From \kscssm{$k$} to \kdsnm{$k$}.}
We have shown that optimal solutions to the \kscssm{$k$} problem can be assumed to have bounded cutwidth. We now wish to show that the same is true for optimal solutions to the \kdsnm{$k$} problem. This is simple, using 
\cref{lem:scssm-segment-cutwidth} above along with
two lemmas from~\cite{FeldmannM23}:

\begin{lemma}
\label{lem:scss-to-dsn}
Fix a graph $G$ and a $k$-requirement specification $\mathcal{R}$, and let $q$ be the LCM of the $q_i$ in~$\mathcal{R}$.
Let $H$ be an optimal solution to \kdsnm{$k$} on~$G$ and~$\mathcal{R}$.
Then $H$ has cutwidth at most $O(k + \log q)$.
\end{lemma}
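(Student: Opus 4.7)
The plan is to reduce to the strongly connected case via \cref{clm:sccs-optimal} and then lift per-SCC cutwidth bounds to a global bound on~$H$ using the two lemmas from~\cite{FeldmannM23} advertised in the sentence preceding the statement.

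First, I would invoke \cref{clm:sccs-optimal}: for every SCC $C$ of $H$ there exists a $k$-requirement specification $\mathcal{R}_C$ (whose modularities are chosen among those of~$\mathcal{R}$, so in particular their LCM divides~$q$) such that the edges of~$C$ form an optimal solution to \kscssm{$k$} on~$G$ for~$\mathcal{R}_C$. Then, applying \cref{lem:scssm-segment-cutwidth} to each~$C$ separately yields a vertex ordering $\prec_C$ of~$C$ whose cutwidth on the edges of~$C$ is at most $O(k + \log q)$.

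Second, I would use the structural lemmas of~\cite{FeldmannM23} to bound the shape of~$H$ outside its non-trivial SCCs. Intuitively, the subgraph of~$H$ obtained by contracting each SCC to a single vertex is a DAG whose complexity (number of branching vertices, total number of non-trivial SCCs, etc.) is bounded by a function of the number $k$ of endpoint pairs; this is a purely structural fact about edge-minimum solutions to directed Steiner network problems and does not depend on the modularity part of the requirements. The lemmas of~\cite{FeldmannM23} then provide a way to concatenate the local orderings~$\prec_C$ along a topological ordering of the SCC DAG (with trivial SCCs inserted as singletons at appropriate positions) to obtain a global ordering~$\prec$ on~$V(H)$ whose cutwidth is bounded by the maximum of the local cutwidths, plus an additive $O(k)$ term coming from the inter-SCC edges.

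Combining these contributions gives a global cutwidth of $O(k + \log q) + O(k) = O(k + \log q)$, as required. The main obstacle I anticipate is to check that the two lemmas from~\cite{FeldmannM23}, originally stated for \kdsn{$k$} (no modularities), can be invoked verbatim in our setting with modularities; this should go through because these lemmas only concern the combinatorial DAG-of-SCCs structure of optimal solutions and the way inter-SCC paths attach to SCCs, both of which are insensitive to the $r_i$ and $q_i$ once the SCCs themselves have been dealt with internally via \cref{lem:scssm-segment-cutwidth}. If any mismatch arises, the remedy is to re-derive the small adaptation needed by essentially replaying the arguments of~\cite{FeldmannM23} on the contracted DAG of SCCs of~$H$.
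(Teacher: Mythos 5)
Your proposal is correct and follows essentially the same route as the paper: apply \cref{clm:sccs-optimal} and \cref{lem:scssm-segment-cutwidth} to bound the cutwidth of each SCC of~$H$, then use Lemmas~2.1 and~2.2 of~\cite{FeldmannM23} to combine the per-SCC bounds with a bound on the condensation DAG. The one point the paper makes crisp that you leave vague is \emph{why} the condensation DAG has cutwidth $O(k)$: since $H$ is an optimal solution, it is covered by the $k$ witnessing walks, so its condensation graph is a union of $k$ paths, and Lemma~2.1 of~\cite{FeldmannM23} applies directly; this observation is indeed insensitive to the modularity part of the requirements, as you correctly anticipate.
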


\begin{proof}
Fix $G$ and $\mathcal{R}$ and the optimal solution~$H$.
Remember that we can cover $H$ by the $k$ witnessing paths of the \kdsnm{$k$} instance (which covers all edges of~$H$ because $H$ is optimal).
  Thus, let us consider the \emph{condensation graph} of~$H$ in the terminology
  of~\cite{FeldmannM23}, which is the directed acyclic multi-graph obtained by
  contracting all SCCs of~$H$ (removing self-loops, but without removing
  parallel edges). This graph is the union of~$k$ paths, so by 
  Lemma~2.1 of~\cite{FeldmannM23} it has cutwidth at most $O(k)$. Further, 
  we have shown in \cref{clm:sccs-optimal}
  that every SCC of~$H$ is an optimal solution to a \kscssm{$k$}-instance, so by
  \cref{lem:scssm-segment-cutwidth} it has cutwidth $O(k + \log q)$. Now,
  Lemma~2.2 of~\cite{FeldmannM23} allows us to bound the cutwidth of a graph as
  a function of the sum of the cutwidth of its condensation graph and of the
  maximal cutwidth of its SCCs. This gives an $O(k + \log q)$ bound on the
  cutwidth of~$H$, which concludes the proof.
\end{proof}

\myparagraph{Computing optimal solutions.}
We have shown in \cref{lem:scss-to-dsn} that optimal solutions to the \kdsnm{$k$} problem have cutwidth $O(k + \log q)$. All that remains to prove \cref{th:kdsnm-tractable} is to adapt the dynamic algorithm from \cref{sec:cw-algo} to compute solutions to that problem instead of \ewm. As the algorithm is an easy variant of the \ewm algorithm, we only sketch it.

\begin{proof}[Proof of~\cref{th:kdsnm-tractable}]
Given a graph $G$ and given a $k$-requirement specification $\mathcal{R}$, 
writing $(s_i, t_i, r_i, q_i)$ for $1 \leq i \leq k$ the tuples of~$\mathcal{R}$,
letting $q$ be the LCM of the $q_i$,
writing for convenience $T := \{s_1, \ldots, s_k, t_1, \ldots, t_k\}$,
fixing a domain size bound $\omega$ to be specified later,
we build the same configuration graph $\Gamma_{\omega,q}$ as in \cref{sec:cw-algo} up to the following differences which we specify next.

  The first change is in the definition of final configurations: instead of saying that a configuration $(D, \rho)$ is final if $s, t \in D$ and $r \in \rho(s, t)$, we now call $(D, \rho)$ \emph{final} if for each $1 \leq i \leq k$ we have $s_i \in D$ and $t_i \in D$ and some $r_i' \in \rho(s_i, t_i)$ such that $r_i' \equiv r_i \bmod q_i$ (remember that the algorithm keeps track of all remainders modulo the LCM $q$).

The second change is the definition of the domain size bound $\omega$: instead of $(2 + 1) + \omega'$ for $\omega'$ the cutwidth bound, we pick $(2k + 1) + \omega'$, which here amounts to $O(k + \log q)$ because $\omega'$ is $O(k + \log q)$ by \cref{lem:scss-to-dsn}. The intuition is that we want to make sure that the $2k$ terminal vertices of~$T$ can always fit in the domain of configurations in addition to other vertices.

The last change is the completeness proof: when building the witnessing path corresponding to a subgraph, instead of saying that vertices $s$ and $t$ are handled in a special way and are never forgotten, now we do the same for all vertices of~$T$. The rest of the correctness proof is unchanged.

The complexity of the algorithm as a function of $\omega$ is the same, and plugging $\omega = O(k + \log q)$ gives the stated complexity, concluding the proof.
\end{proof}

\section{Future Work}

We now outline possible directions for future work in light of our results:

\myparagraph{Improving our bounds.} One natural direction for further
research would be to improve the complexity bounds that we show. Our algorithm
for \ewm runs in time $n^{O(\log q)} \cdot 2^{O(q \log^2 q)}$. Can the factor of
$q$ in the exponent be improved, e.g., by getting an algorithm in time $n^{O(\log q)}$? 
Or, on the other hand, is the problem \NP-hard? (We have only shown that the
variant of \ewm with edge lengths is \NP-hard, in \cref{prp:npc}.)

\myparagraph{Intermediate problems between shortest walk and edge-minimum walk.} It could be interesting to search for walks that are minimized according to some criterion that is 
``intermediate'' between shortest walk and edge-minimum walk, e.g., if the cost
of each edge is expressed as a function of how many times it is traversed by the
walk. Such a general framework would capture the two problem variants that we
contrast in the introduction: shortest walks are the case where we are charged
$\ell$ when traversing an edge $\ell$ times, and edge-minimum walks are
the case where we are charged $1$ when traversing an edge $\ell>0$ times and
charged $0$ when we do not traverse it.

\myparagraph{Edge-minimum walks satisfying other constraints.} 
Last, one other problem of interest would be to investigate the complexity of finding edge-minimum subgraphs guaranteeing the existence of $st$-walks satisfying other properties.
One very natural example is the following: Given a number $\ell$, and a directed graph $G$ with specified vertices $s,t$, find an edge-minimum $st$-walk of length exactly $\ell$. (This is the same as \ewm except the length is exactly $\ell$, instead of $r\bmod q$.) The trivial algorithm for this problem is in time $O(n^\ell)$, but can we do better? To our knowledge, this problem has not been studied.

Another example is looking for edge-minimum walks that achieve constraints
expressed by a finite semigroup. For instance, assume that each edge of the
graph is labeled by a semigroup element, and that we want a walk whose
evaluation in the semigroup achieves a specific target element of the
semigroup, where evaluating the walk means multiplying the 
labels of its edges in the order that they are traversed. This problem
generalizes the \ewm problem (with lengths on edges), which uses the semigroup
$\Z{q}$. An alternative way to phrase this problem is in the language of regular
path queries (RPQs) mentioned in the introduction: fixing a regular language $L$
on an alphabet~$\Sigma$, and given a directed graph $G$ with terminals $s$ and
$t$ and with edges labeled by letters of~$\Sigma$, find an edge-minimum subgraph
$G$ with an $st$-walk which evaluates to a word that belongs to~$L$.
For which fixed regular languages $L$ can this problem be solved in polynomial
time in~$G$?

\myparagraph{Acknowledgements.}
We are grateful to the reviewers of the conference version for
their helpful feedback. We also would like to thank the Simons Institute Fall 2023 programs ``Logic and Algorithms in Database Theory and AI'' and ``Data Structures and Optimization for Fast Algorithms'' for the initiation of this work.
Last, we are grateful to Xiao Hu and Mikaël Monet for early discussions.

\bibliography{references}

\newcommand{\etalchar}[1]{$^{#1}$}
\begin{thebibliography}{JKMC{\etalchar{+}}24}

\bibitem[AK21]{MR4371468}
Noga Alon and Michael Krivelevich.
\newblock Divisible subdivisions.
\newblock {\em J. Graph Theory}, 98(4), 2021.
\newblock \href {https://doi.org/10.1002/jgt.22716}
  {\path{doi:10.1002/jgt.22716}}.

\bibitem[Ama24]{amarilli2024survey}
Antoine Amarilli.
\newblock Survey of results on the {ModPath} and {ModCycle} problems, 2024.
\newblock \href {https://arxiv.org/abs/2409.00770} {\path{arXiv:2409.00770}}.

\bibitem[APY91]{ArkinPY91}
Esther~M. Arkin, Christos~H. Papadimitriou, and Mihalis Yannakakis.
\newblock Modularity of cycles and paths in graphs.
\newblock {\em J. {ACM}}, 38(2), 1991.
\newblock \href {https://doi.org/10.1145/103516.103517}
  {\path{doi:10.1145/103516.103517}}.

\bibitem[BBG20]{BaganBG20}
Guillaume Bagan, Angela Bonifati, and Beno{\^{\i}}t Groz.
\newblock A trichotomy for regular simple path queries on graphs.
\newblock {\em J. Comput. Syst. Sci.}, 108, 2020.
\newblock URL: \url{https://doi.org/10.1016/j.jcss.2019.08.006}, \href
  {https://doi.org/10.1016/J.JCSS.2019.08.006}
  {\path{doi:10.1016/J.JCSS.2019.08.006}}.

\bibitem[BBR97]{basagni1997difficulty}
Stefano Basagni, Danilo Bruschi, and F~Ravasio.
\newblock On the difficulty of finding walks of length $k$.
\newblock {\em RAIRO-Theoretical Informatics and Applications}, 31(5), 1997.

\bibitem[BHK22]{BjorklundHK22}
Andreas Bj{\"{o}}rklund, Thore Husfeldt, and Petteri Kaski.
\newblock The shortest even cycle problem is tractable.
\newblock In {\em STOC}, 2022.
\newblock \href {https://doi.org/10.1145/3519935.3520030}
  {\path{doi:10.1145/3519935.3520030}}.

\bibitem[CDGS24]{chauhan2024even}
Archit Chauhan, Samir Datta, Chetan Gupta, and Vimal~Raj Sharma.
\newblock The even-path problem in directed single-crossing-minor-free graphs,
  2024.
\newblock \href {https://arxiv.org/abs/2407.00237} {\path{arXiv:2407.00237}}.

\bibitem[CEGS11]{MR2786434}
Chandra Chekuri, Guy Even, Anupam Gupta, and Danny Segev.
\newblock Set connectivity problems in undirected graphs and the directed
  {S}teiner network problem.
\newblock {\em ACM Trans. Algorithms}, 7(2), 2011.
\newblock \href {https://doi.org/10.1145/1921659.1921664}
  {\path{doi:10.1145/1921659.1921664}}.

\bibitem[CFM21]{MR4278929}
Rajesh Chitnis, Andreas~Emil Feldmann, and Pasin Manurangsi.
\newblock Parameterized approximation algorithms for bidirected {S}teiner
  network problems.
\newblock {\em {ACM} Trans. Algorithms}, 17(2), 2021.
\newblock \href {https://doi.org/10.1145/3447584} {\path{doi:10.1145/3447584}}.

\bibitem[CGK94]{MR1285584}
Fan R.~K. Chung, Wayne Goddard, and Daniel~J. Kleitman.
\newblock Even cycles in directed graphs.
\newblock {\em {SIAM} J. Discret. Math.}, 7(3), 1994.
\newblock \href {https://doi.org/10.1137/S0895480192225433}
  {\path{doi:10.1137/S0895480192225433}}.

\bibitem[CM90]{ConsensM90}
Mariano~P. Consens and Alberto~O. Mendelzon.
\newblock Graphlog: a visual formalism for real life recursion.
\newblock In {\em PODS}, 1990.
\newblock \href {https://doi.org/10.1145/298514.298591}
  {\path{doi:10.1145/298514.298591}}.

\bibitem[CMW87]{CruzMW87}
Isabel~F. Cruz, Alberto~O. Mendelzon, and Peter~T. Wood.
\newblock A graphical query language supporting recursion.
\newblock In {\em SIGMOD}, 1987.
\newblock \href {https://doi.org/10.1145/38713.38749}
  {\path{doi:10.1145/38713.38749}}.

\bibitem[DD24]{MR4786546}
Gianlorenzo D'Angelo and Esmaeil Delfaraz.
\newblock Approximation algorithms for node-weighted directed {S}teiner
  problems.
\newblock In {\em IWOCA}, 2024.
\newblock \href {https://doi.org/10.1007/978-3-031-63021-7\_21}
  {\path{doi:10.1007/978-3-031-63021-7\_21}}.

\bibitem[DDS24]{MR4668327}
Shagnik Das, Nemanja Dragani\'{c}, and Raphael Steiner.
\newblock Tight bounds for divisible subdivisions.
\newblock {\em J. Combin. Theory Ser. B}, 165, 2024.
\newblock \href {https://doi.org/10.1016/j.jctb.2023.10.011}
  {\path{doi:10.1016/j.jctb.2023.10.011}}.

\bibitem[Diw24]{diwan2024cycles}
Ajit~A Diwan.
\newblock Cycles of weight divisible by $ k$, 2024.
\newblock \href {https://arxiv.org/abs/2407.01198} {\path{arXiv:2407.01198}}.

\bibitem[DM18]{DinurM18}
Irit Dinur and Pasin Manurangsi.
\newblock {ETH}-hardness of approximating 2-{CSP}s and directed {S}teiner
  network.
\newblock In {\em ITCS}, 2018.
\newblock URL: \url{https://doi.org/10.4230/LIPIcs.ITCS.2018.36}, \href
  {https://doi.org/10.4230/LIPICS.ITCS.2018.36}
  {\path{doi:10.4230/LIPICS.ITCS.2018.36}}.

\bibitem[FHW80]{FortuneHW80}
Steven Fortune, John~E. Hopcroft, and James Wyllie.
\newblock The directed subgraph homeomorphism problem.
\newblock {\em Theor. Comput. Sci.}, 10, 1980.
\newblock \href {https://doi.org/10.1016/0304-3975(80)90009-2}
  {\path{doi:10.1016/0304-3975(80)90009-2}}.

\bibitem[FM23]{FeldmannM23}
Andreas~Emil Feldmann and D\'{a}niel Marx.
\newblock The complexity landscape of fixed-parameter directed {S}teiner
  network problems.
\newblock {\em ACM Trans. Comput. Theory}, 15(3–4), 2023.
\newblock \href {https://doi.org/10.1145/3580376} {\path{doi:10.1145/3580376}}.

\bibitem[FR06]{FeldmanR06}
Jon Feldman and Matthias Ruhl.
\newblock The directed {S}teiner network problem is tractable for a constant
  number of terminals.
\newblock {\em {SIAM} J. Comput.}, 36(2), 2006.
\newblock \href {https://doi.org/10.1137/S0097539704441241}
  {\path{doi:10.1137/S0097539704441241}}.

\bibitem[GKMS24]{GalbyKMS24}
Esther Galby, S{\'{a}}ndor Kisfaludi{-}Bak, D{\'{a}}niel Marx, and Roohani
  Sharma.
\newblock Subexponential parameterized directed {S}teiner network problems on
  planar graphs: {A} complete classification.
\newblock In {\em ICALP}, 2024.
\newblock URL: \url{https://doi.org/10.4230/LIPIcs.ICALP.2024.67}, \href
  {https://doi.org/10.4230/LIPICS.ICALP.2024.67}
  {\path{doi:10.4230/LIPICS.ICALP.2024.67}}.

\bibitem[GNS11]{GuoNS11}
Jiong Guo, Rolf Niedermeier, and Ondrej Such{\'{y}}.
\newblock Parameterized complexity of arc-weighted directed {S}teiner problems.
\newblock {\em {SIAM} J. Discret. Math.}, 25(2), 2011.
\newblock \href {https://doi.org/10.1137/100794560}
  {\path{doi:10.1137/100794560}}.

\bibitem[HS24]{hu2024finding}
Xiao Hu and Stavros Sintos.
\newblock Finding smallest witnesses for conjunctive queries.
\newblock In {\em ICDT}, 2024.
\newblock URL: \url{https://doi.org/10.4230/LIPIcs.ICDT.2024.24}, \href
  {https://doi.org/10.4230/LIPICS.ICDT.2024.24}
  {\path{doi:10.4230/LIPICS.ICDT.2024.24}}.

\bibitem[JKMC{\etalchar{+}}24]{MR4756860}
Alp\'{a}r J\"{u}ttner, Csaba Kir\'{a}ly, Lydia~Mirabel Mendoza-Cadena, Gyula
  Pap, Ildik\'{o} Schlotter, and Yutaro Yamaguchi.
\newblock Shortest odd paths in undirected graphs with conservative weight
  functions.
\newblock {\em Discrete Appl. Math.}, 357, 2024.
\newblock \href {https://doi.org/10.1016/j.dam.2024.05.044}
  {\path{doi:10.1016/j.dam.2024.05.044}}.

\bibitem[Kar72]{karp1972reducibility}
Richard~M. Karp.
\newblock Reducibility among combinatorial problems.
\newblock In {\em Complexity of Computer Computations}. Plenum Press, New York,
  1972.
\newblock \href {https://doi.org/10.1007/978-1-4684-2001-2\_9}
  {\path{doi:10.1007/978-1-4684-2001-2\_9}}.

\bibitem[KK12]{MR2946427}
Naonori Kakimura and Ken-ichi Kawarabayashi.
\newblock Packing cycles through prescribed vertices under modularity
  constraints.
\newblock {\em Adv. in Appl. Math.}, 49(2), 2012.
\newblock \href {https://doi.org/10.1016/j.aam.2012.03.002}
  {\path{doi:10.1016/j.aam.2012.03.002}}.

\bibitem[KKKX23]{MR4538070}
Ken-ichi Kawarabayashi, Stephan Kreutzer, O-joung Kwon, and Qiqin Xie.
\newblock A half-integral {E}rdos-{P}osa theorem for directed odd cycles.
\newblock In {\em SODA}, 2023.
\newblock \href {https://doi.org/10.1137/1.9781611977554.ch118}
  {\path{doi:10.1137/1.9781611977554.ch118}}.

\bibitem[LP84]{LapaughP84}
Andrea~S. LaPaugh and Christos~H. Papadimitriou.
\newblock The even-path problem for graphs and digraphs.
\newblock {\em Networks}, 14(4), 1984.
\newblock URL: \url{https://doi.org/10.1002/net.3230140403}, \href
  {https://doi.org/10.1002/NET.3230140403} {\path{doi:10.1002/NET.3230140403}}.

\bibitem[Lub88]{MR971619}
Anna Lubiw.
\newblock A note on odd/even cycles.
\newblock {\em Discret. Appl. Math.}, 22(1), 1988.
\newblock \href {https://doi.org/10.1016/0166-218X(88)90125-4}
  {\path{doi:10.1016/0166-218X(88)90125-4}}.

\bibitem[McC04]{mccuaig2004polya}
William McCuaig.
\newblock P{\'{o}}lya's permanent problem.
\newblock {\em Electron. J. Comb.}, 11(1), 2004.
\newblock \href {https://doi.org/10.37236/1832} {\path{doi:10.37236/1832}}.

\bibitem[MNP23]{MartensNP23}
Wim Martens, Matthias Niewerth, and Tina Popp.
\newblock A trichotomy for regular trail queries.
\newblock {\em Log. Methods Comput. Sci.}, 19(4), 2023.
\newblock URL: \url{https://doi.org/10.46298/lmcs-19(4:20)2023}, \href
  {https://doi.org/10.46298/LMCS-19(4:20)2023}
  {\path{doi:10.46298/LMCS-19(4:20)2023}}.

\bibitem[Mon83]{MR727545}
Burkhard Monien.
\newblock The complexity of determining a shortest cycle of even length.
\newblock {\em Computing}, 31(4), 1983.
\newblock \href {https://doi.org/10.1007/BF02251238}
  {\path{doi:10.1007/BF02251238}}.

\bibitem[MRY19]{miao2019explaining}
Zhengjie Miao, Sudeepa Roy, and Jun Yang.
\newblock Explaining wrong queries using small examples.
\newblock In {\em SIGMOD}, 2019.
\newblock \href {https://doi.org/10.1145/3299869.3319866}
  {\path{doi:10.1145/3299869.3319866}}.

\bibitem[Ned99]{MR1744686}
Zhivko~Prodanov Nedev.
\newblock Finding an even simple path in a directed planar graph.
\newblock {\em SIAM J. Comput.}, 29(2), 1999.
\newblock \href {https://doi.org/10.1137/S0097539797330343}
  {\path{doi:10.1137/S0097539797330343}}.

\bibitem[{OEI}24]{oeis}
{OEIS Foundation Inc.}
\newblock The {O}n-{l}ine {E}ncyclopedia of {I}nteger {S}equences, 2024.
\newblock Published electronically at \url{http://oeis.org}.

\bibitem[RR18]{MR3875932}
Mehdy Roayaei and Mohammadreza Razzazi.
\newblock Parameterized complexity of directed {S}teiner network with respect
  to shared vertices and arcs.
\newblock {\em Int. J. Found. Comput. Sci.}, 29(7), 2018.
\newblock \href {https://doi.org/10.1142/S0129054118500302}
  {\path{doi:10.1142/S0129054118500302}}.

\bibitem[RST99]{MR1740989}
Neil Robertson, P.~D. Seymour, and Robin Thomas.
\newblock Permanents, {P}faffian orientations, and even directed circuits.
\newblock {\em Ann. of Math. (2)}, 150(3), 1999.
\newblock \href {https://doi.org/10.2307/121059} {\path{doi:10.2307/121059}}.

\bibitem[SS22]{schlotter2022odd}
Ildik{\'o} Schlotter and Andr{\'a}s Seb{\H{o}}.
\newblock Odd paths, cycles and $ t $-joins: Connections and algorithms, 2022.
\newblock \href {https://arxiv.org/abs/2211.12862} {\path{arXiv:2211.12862}}.

\bibitem[ST87]{MR872406}
Paul Seymour and Carsten Thomassen.
\newblock Characterization of even directed graphs.
\newblock {\em J. Combin. Theory Ser. B}, 42(1), 1987.
\newblock \href {https://doi.org/10.1016/0095-8956(87)90061-X}
  {\path{doi:10.1016/0095-8956(87)90061-X}}.

\bibitem[Tho85]{thomassen1985even}
Carsten Thomassen.
\newblock Even cycles in directed graphs.
\newblock {\em Eur. J. Comb.}, 6(1), 1985.
\newblock \href {https://doi.org/10.1016/S0195-6698(85)80025-1}
  {\path{doi:10.1016/S0195-6698(85)80025-1}}.

\bibitem[Tho93]{MR1218331}
Carsten Thomassen.
\newblock The even cycle problem for planar digraphs.
\newblock {\em J. Algorithms}, 15(1), 1993.
\newblock \href {https://doi.org/10.1006/jagm.1993.1030}
  {\path{doi:10.1006/jagm.1993.1030}}.

\bibitem[VY89]{VaziraniY89}
Vijay~V. Vazirani and Mihalis Yannakakis.
\newblock Pfaffian orientations, 0-1 permanents, and even cycles in directed
  graphs.
\newblock {\em Discret. Appl. Math.}, 25(1-2), 1989.
\newblock \href {https://doi.org/10.1016/0166-218X(89)90053-X}
  {\path{doi:10.1016/0166-218X(89)90053-X}}.

\bibitem[Wol11]{MR2847879}
Paul Wollan.
\newblock Packing cycles with modularity constraints.
\newblock {\em Combinatorica}, 31(1), 2011.
\newblock \href {https://doi.org/10.1007/s00493-011-2551-5}
  {\path{doi:10.1007/s00493-011-2551-5}}.

\bibitem[YZ97]{MR1445033}
Raphael Yuster and Uri Zwick.
\newblock Finding even cycles even faster.
\newblock {\em SIAM J. Discrete Math.}, 10(2), 1997.
\newblock \href {https://doi.org/10.1137/S0895480194274133}
  {\path{doi:10.1137/S0895480194274133}}.

\end{thebibliography}

\appendix
\section{Reduction from Undirected Graphs to Directed Graphs}
\label{apx:undirected}

In this section, we solve the \ewm problem (defined in \cref{sec:intro}),
and its generalizations \kscssm{$k$} and \kdsnm{$k$} for any $k \in \mathbb{N}$
(defined in \cref{sec:dsn}), in the setting of undirected graphs.
Specifically, we are given as input an undirected graph, together with terminals
$s,t$ and integers $0 \leq r < q$ (for the undirected analogues of \ewm) or with a $k$-requirement
specification (for the undirected analogues of \kscssm{$k$} and \kdsnm{$k$}).
We are looking for an undirected walk of the prescribed modularity (for
\ewm) or for walks of the prescribed modularities (for \kdsnm{$k$}) which
additionally must form a strongly connected graph (for \kscssm{$k$}), and
which the set of undirected edges used is subset-minimum. 
Like on directed graphs, to find the prescribed walks, it suffices to find the minimum-cardinality subgraph: given such a graph, we can then find the walks using the edges of the subgraph via a product construction. 

We observe that, on undirected graphs, the strong connectivity required by
\kscssm{$k$} becomes a requirement that all terminals belong to the same
connected component in the solution. Similarly, in solutions to \ewm on
undirected graphs, the source and target vertex must belong to the same connected
components, and in solutions to \kdsnm{$k$} each source and target that occur
in a same pair from the $k$-requirement must belong to the same connected component.
However, unlike the Steiner tree problem, solutions to these problems are not
necessarily trees or forests because of the modularity requirements that
must be obeyed (see
\cref{fig:undirected}).

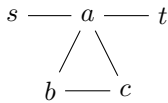
\begin{figure}[h!]
  \centering
  \begin{tikzpicture}
    \node (s) at (0, 0) {$s$};
    \node (a) at (1, 0) {$a$};
    \node (b) at (.5, -1) {$b$};
    \node (c) at (1.5, -1) {$c$};
    \node (t) at (2, 0) {$t$};
    \draw (s) -- (a);
    \draw (a) -- (t);
    \draw (a) -- (b);
    \draw (b) -- (c);
    \draw (c) -- (a);
  \end{tikzpicture}
  \caption{An undirected graph $G$. The only solution to \ewm with terminals $s$
  and $t$ and odd length (i.e., $q=2$ and $r=1$) is the entire graph.}
  \label{fig:undirected}
\end{figure}

\myparagraph{Reducing to modulo 2.}
As we pointed out in \cref{sec:intro}, on graphs where all edges are undirected,
unless the walk is trivial we can use any edge of the solution to move back
and forth and add any multiple of two to the walk length. Thus, for the problems
\ewm and \kscssm{$k$} and \kdsnm{$k$} on undirected graphs, we can assume that all
length constraints are either modulo~1 (i.e., no requirement on the
length) or modulo~2. 

More precisely, for each walk specification with terminals $s, t$ and values $0
\leq r < q$, if $s = t$ and $r=0$ then the walk can be empty and 
can be ignored, otherwise the walk must be nonempty. 
As the walk is nonempty, going back-and-forth on an edge of the walk allows to
add any even number to the length. So, if $q$ is even, then we can achieve $r$
iff we can achieve $r \bmod{2}$, so we can replace the specification by taking
$r' := r \bmod{2}$ and $q' := 2$. If $q$ is odd, then as $2$ and $q$ are
coprime, going back-on-forth on an edge of the walk allows us to achieve the
remainder $r$. Thus we can replace the specification by taking $r' := 0$ and $q' := 1$.

We now use this observation to give a reduction to the setting of directed graphs,
which allows us to use our results as a black-box to show tractability of the
\ewm and \kscssm{$k$} and \kdsnm{$k$} problems on undirected graphs.

\myparagraph{Simple strategy which fails.}
An obvious strategy to reduce from undirected graphs to
directed graphs is to replace each undirected edge $\{u, v\}$ by the two
directed edges $(u, v)$ and $(v, u)$. However, this does not preserve the
criterion that we want to optimize, for the following intuitive reason. Whenever
an undirected edge is part of the solution, then walks can traverse the edge in
either direction. By contrast, in the directed coding, we could include, say,
$(u,v)$ but not $(v,u)$, so that intuitively the cost of the edge is not the
same depending on whether we intend to traverse it only in one direction or in
both directions in the solution walks.

\myparagraph{Correct reduction.}
We now explain how a slightly more elaborate coding can fix the issue.
Remember that we are considering an \ewm or \kdsnm{$k$} or \kscssm{$k$} instance where
walk lengths are counted 
modulo~$1$ or
modulo~$2$.
Let $G$ be the input undirected 
graph, and let $m$ be
its number of edges. Rewrite $G$ to a directed graph~$G'$ by replacing each undirected edge $\{u,
v\}$
by a linear-sized gadget described in \cref{fig:undir2dir},
namely, a directed path $w_{uv,1}, \ldots, w_{uv,8m+1}$ of length $8m$, an
edge from $u$ to~$w_{uv,1}$, an edge from $w_{uv,8m+1}$ to~$u$, and a path of
length 2 with respective intermediate vertices $w_{uv}'$  and $w_{uv}''$ from
$v$ to~$w_{uv,1}$ and from $w_{uv,8m+1}$ to~$v$.

\begin{figure}[h!]
  \centering
  \begin{tikzpicture}[yscale=.7]
  \node (u) at (0, 0) {$u$};
  \node (v) at (0, -4) {$v$};
  \node (wp) at (-1.5, -3) {$w_{uv}'$};
  \node (wpp) at (1.5, -3) {$w_{uv}''$};
  \node (w1) at (-3, -2) {$w_{uv,1}$};
  \node (w2) at (-1, -2) {$w_{uv,2}$};
  \node (w3) at (.25, -2) {};
  \node (cdots) at (1, -2) {$\cdots$};
  \node (wll) at (1.5, -2) {};
  \node (wl) at (3, -2) {$w_{uv,8m+1}$};
  \draw[->] (u) -- (w1);
  \draw[->] (v) -- (wp);
  \draw[->] (wp) -- (w1);
  \draw[->] (wl) -- (u);
  \draw[->] (wl) -- (wpp);
  \draw[->] (wpp) -- (v);
  \draw[->] (w1) -- (w2);
  \draw[->] (w2) -- (w3);
  \draw[->] (wll) -- (wl);
\end{tikzpicture}
  \caption{Directed gadget to code an undirected edge $\{u, v\}$}
\label{fig:undir2dir}
\end{figure}
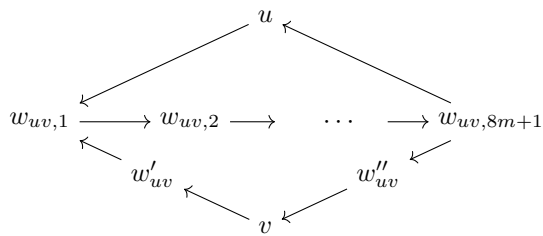

To explain why the reduction is correct, notice that in a minimum-cardinality
subgraph of the graph $G'$, considering the coding $\{u,v\}$ of every edge
of~$G$, it is never useful to keep some edges of the form $(w_{uv,i},
w_{uv,i+1})$ and not others. So in a minimum-cardinality solution we can
distinguish between those edges $\{u,v\}$ of~$G$ where all edges of the path
$w_{uv,1}, \ldots, w_{uv,8m+1}$ have been kept and those where none have been
kept. The cardinality of a subgraph of~$G'$ is then of the form $(8m)m' +
m''$ where $m'$ is the number of edges of~$G$ where the path has been kept, and
where $m''$ are the other edges of~$G'$. Now, observe that in $G'$ there are
precisely $6m$ edges not part of a path, so $m'' \leq 6m$. Hence, a
minimal-cardinality subset of~$G'$ must in fact minimize the number $m'$. 

Furthermore, for every $st$-walk in~$G$ using $m'$ different edges, there
is a directed $st$-walk in~$G'$ of the same parity which uses
$(8m)m'+m''$ different edges, where $m'' \leq 6m$. Indeed, whenever the walk
in $G$ traverses $\{u,v\}$, the walk in~$G'$ traverses either $u$, then the
path, then $v$, or $v$, then the path, then $u$, depending on the direction. The
length of the walk within the gadget coding $\{u,v\}$ is then $1+8m+2$, hence
it is odd and has the same remainder modulo~1 or modulo~2 than the single edge
$\{u,v\}$ traversed by the walk in~$G$.

Conversely, $st$-walks in~$G'$ using $(8m)m'+m''$ edges witness
an undirected walk of length $m'$ in~$G$. 
Note that walks in $G'$ may also include a walk that go from a vertex $u$ to
itself via a gadget, or also go from $v$ to itself within a gadget, however
we can easily see that these cycles all have even length, and so they do
not change the remainder modulo~1 or modulo~2.

Hence, we have reduced $G$ to a directed graph~$G'$ on which the solutions to
our problems \ewm, \kdsnm{$k$} and \kscssm{$k$} for fixed $k\in\mathbb{N}$ are
unchanged.

\end{document}